\newtheorem*{rep@theorem}{\rep@title}
\newcommand{\newreptheorem}[2]{%
\newenvironment{rep#1}[1]{%
 \def\rep@title{#2 \ref{##1}}%
 \begin{rep@theorem}}%
 {\end{rep@theorem}}}
\newtheorem{theorem}{Theorem} 
\newtheorem{prop}{Proposition}  
\newtheorem{definition}{Definition}  
\newtheorem{claim}{Claim}  
\newtheorem{lemma}{Lemma}
\newtheorem{corol}{Corollary}
\newcommand{\<}{\langle}
\renewcommand{\>}{\rangle}
\newcommand{\be}{\begin{equation} }
\newcommand{\ee}{\end{equation} }
\newcommand{\ba}{\begin{eqnarray} }
\newcommand{\ea}{\end{eqnarray} }
\newcommand{\bpm}{\begin{pmatrix}}
\newcommand{\epm}{\end{pmatrix}}
\newcommand{\bmm}{\begin{matrix}}
\newcommand{\emm}{\end{matrix}}
\title{Constraints on order and disorder parameters in quantum spin chains}
\author{Michael Levin\thanks{Department of Physics, James Franck Institute, University of Chicago, Chicago, Illinois 60637, USA.}}
\begin{document}

\maketitle
\begin{abstract}
We derive general constraints on order and disorder parameters in Ising symmetric spin chains. Our main result is a theorem showing that every gapped, translationally invariant, Ising symmetric spin chain has either a nonzero order parameter or a nonzero disorder parameter. We also prove two more constraints on order and disorder parameters: (i) it is not possible for a gapped, Ising symmetric spin chain to have \emph{both} a nonzero order parameter and a nonzero disorder parameter; and (ii) it is not possible for a spin chain of this kind to have a nonzero disorder parameter that is \emph{odd} under the symmetry. These constraints have an interesting implication for self-dual Ising symmetric spin chains: every self-dual spin chain is either gapless or has a degenerate ground state in the thermodynamic limit. All of these constraints generalize to spin chains without translational symmetry. Our proofs rely on previously known bounds on entanglement and correlations in one dimensional systems, as well as the Fuchs-van de Graaf inequality from quantum information theory. 
\end{abstract}

\newpage

\tableofcontents

\newpage

\section{Introduction}

There are two distinct ways to probe symmetry breaking in quantum many-body systems. One approach is to use order parameters --- local operators that transform under some non-trivial representation of the symmetry group. The other approach is to use disorder parameters --- non-local operators that implement a symmetry transformation in some region, dressed with local operators along the boundary of the region\cite{Fradkin17}. These two approaches are complementary: while order parameters are useful for detecting symmetry breaking, disorder parameters are useful for detecting the \emph{absence} of symmetry breaking. 

The goal of this paper is to uncover the precise relationship between these two types of probes. To make progress, we focus on the simplest class of many-body systems in which order and disorder parameters can be defined: one dimensional quantum spin chains with Ising symmetry. 

We begin by recalling the (rough) definitions of order and disorder parameters in the context of Ising symmetric spin chains. Consider a spin chain composed out of spin-$1/2$ moments. Suppose that the Hamiltonian commutes with the Ising symmetry transformation, $S = \prod_i \sigma^x_i$. Such a spin chain has a nonzero ``order parameter'' if there exists an operator $O_i$ that is localized near site $i$, is odd under $S$ (i.e. satisfies $S O_i S = - O_i$), and obeys 
\begin{align*}
\lim_{|i-j| \rightarrow \infty} \<O_i^\dagger O_j\> \neq 0,
\end{align*}
where $\<\cdot\>$ denotes the ground state expectation value. Likewise, such a spin chain has a nonzero ``disorder parameter'' if there exists an operator $O_i$ that is localized near site $i$, is either even or odd under $S$ (i.e. satisfies $S O_i S = \pm O_i$), and obeys 
\begin{align*}
\lim_{|i-j| \rightarrow \infty} \<O_i^\dagger O_j \prod_{k=i+1}^j \sigma_k^x\> \neq 0.
\end{align*} 
(Here, $\prod_{k=i+1}^j \sigma_k^x$ implements the Ising symmetry transformation within the interval $[i+1,j]$).

The canonical example is the transverse field Ising model:
\begin{align*}
H = -\sum_i \sigma^z_i \sigma^z_{i+1} - B \sum_i \sigma^x_i,
\end{align*}
Depending on the value of $B$, this model can be in two phases: a phase with spontaneously broken symmetry for $|B| < 1$, and a phase without symmetry breaking for $|B| > 1$. These two phases illustrate the concepts of order and disorder parameters. Indeed, a well-known property of the symmetry breaking phase is that $\lim_{|i-j| \rightarrow \infty} \<\sigma_i^z \sigma_j^z\> \neq 0$. By the above definition, this property means that the symmetry breaking phase has a nonzero order parameter, namely $O_i = \sigma^z_i$. Likewise, a well-known property of the symmetric phase is that $\lim_{|i-j| \rightarrow \infty} \<\prod_{k=i+1}^j \sigma_k^x\> \neq 0$.\footnote{One way to derive this result is to note that the Kramers-Wannier duality transformation maps the operator $\sigma_i^z \sigma_j^z$ to $\prod_{k=i}^{j-1} \sigma^x_k$ (see section \ref{sec-self-dual}).} Following the above definition, this means that the symmetric phase has a nonzero disorder parameter, namely $O_i = \mathbb{1}$.

The above example raises the central question of this paper: we can see that the transverse field Ising model has \emph{either} a nonzero order parameter or a nonzero disorder parameter for every $B$ except at the critical points $B= \pm 1$. But is this true more generally? That is, does every gapped, translationally invariant, Ising symmetric spin chain have either a nonzero order parameter or a nonzero disorder parameter?

Conventional wisdom suggests that the answer is ``yes.'' The reasoning goes as follows: we expect that every spin chain of this kind can be adiabatically connected to the transverse field Ising model in either the symmetry breaking ($|B| < 1$) or symmetric ($|B| > 1$) phase\cite{NorbertSPT, ChenGuWen}. Therefore, since both phases of the transverse field Ising model have a nonzero order parameter or a nonzero disorder parameter, every gapped spin chain must also have this property\cite{HastingsWen}.

In this paper, we confirm this intuition: we show that every gapped, translationally invariant, Ising symmetric spin chain has either a nonzero order parameter or a nonzero disorder parameter (Theorem \ref{mainthm1}). Our proof, however, does not follow the above line of reasoning, which is difficult to make rigorous. Instead, we derive a \emph{direct} connection between order and disorder parameters using results from quantum information theory\cite{Fuchs} together with bounds on correlations\cite{Hastings_area_law, Hamzaetal} and entanglement\cite{AKLV} in one dimensional gapped ground states. 

Our arguments also generalize to spin chains without translational symmetry. In that case, the statement of our theorem is slightly different since there is no reason that there has to be a nonzero order parameter or nonzero disorder parameter that is defined throughout the entire spin chain --- parts of the spin chain may be ordered and other parts may be disordered. Thus, what we prove is that the spin chain can always be divided into a few regions, each of which supports either a nonzero order parameter or a nonzero disorder parameter (Theorem \ref{mainthm2}).

In addition to the above constraint (the main result of this paper), we also prove two other general constraints on order and disorder parameters: we show that (i) it is not possible for a gapped, Ising symmetric spin chain to have \emph{both} a nonzero order parameter and a nonzero disorder parameter; and (ii) it is not possible for a spin chain of this kind to have a nonzero disorder parameter in which $O_i$ is \emph{odd} under the symmetry (Theorems \ref{mutconst}-\ref{exchconst}). All together, we believe that this is a complete list of constraints on order and disorder parameters in Ising symmetric spin chains.

As a bonus, our constraints have an interesting implication for \emph{self-dual} spin chains that are invariant under the Kramers-Wannier duality transformation: they imply that every self-dual Ising symmetric spin chain is either gapless or has a degenerate ground state in the thermodynamic limit (Corollary \ref{selfdual}).

What is our motivation for considering these questions? First, order and disorder parameters are key concepts in the theory of symmetry breaking in one dimension, and therefore we believe it is intrinsically important to understand their relationship to one another. In fact, it is possible to generalize order and disorder parameters to higher dimensional spin systems and gauge theories\cite{Fradkin17, thooft78} and one can ask similar questions in that context; the questions we consider here are the simplest examples of a larger class of problems. 

Another source of motivation has to do with a seemingly unrelated problem about gapped boundaries of two dimensional topological phases. Previous work has argued that gapped boundaries of two dimensional Abelian topological phase always obey a property known as ``braiding non-degeneracy''\cite{levin2013protected}. This property has important physical implications. For example, it is key for understanding why the boundaries of certain topological phases are necessarily gapless\cite{levin2013protected}. With Theorem \ref{mainthm1}, we can rigorously establish this property for the simplest nontrivial topological phase: the toric code model\cite{kitaevtc, BravyiKitaev}. This application will be explored in a follow-up paper\cite{Levinprep}.

This paper is organized as follows. In section \ref{sec-main-result}, we discuss our main result: every gapped, translationally invariant, Ising symmetric spin chain has either a nonzero order parameter or a nonzero disorder parameter. In section \ref{sec-constraints} we present two other constraints on order and disorder parameters (namely constraints (i) and (ii) mentioned above) and in section \ref{sec-self-dual} we discuss the implications of these constraints for self-dual spin chains. In sections \ref{sec-main-theorem} and \ref{sec-two-more-theorems} we present the proofs of our constraints.

\begin{figure}[tb]
\includegraphics[width=.5\columnwidth]{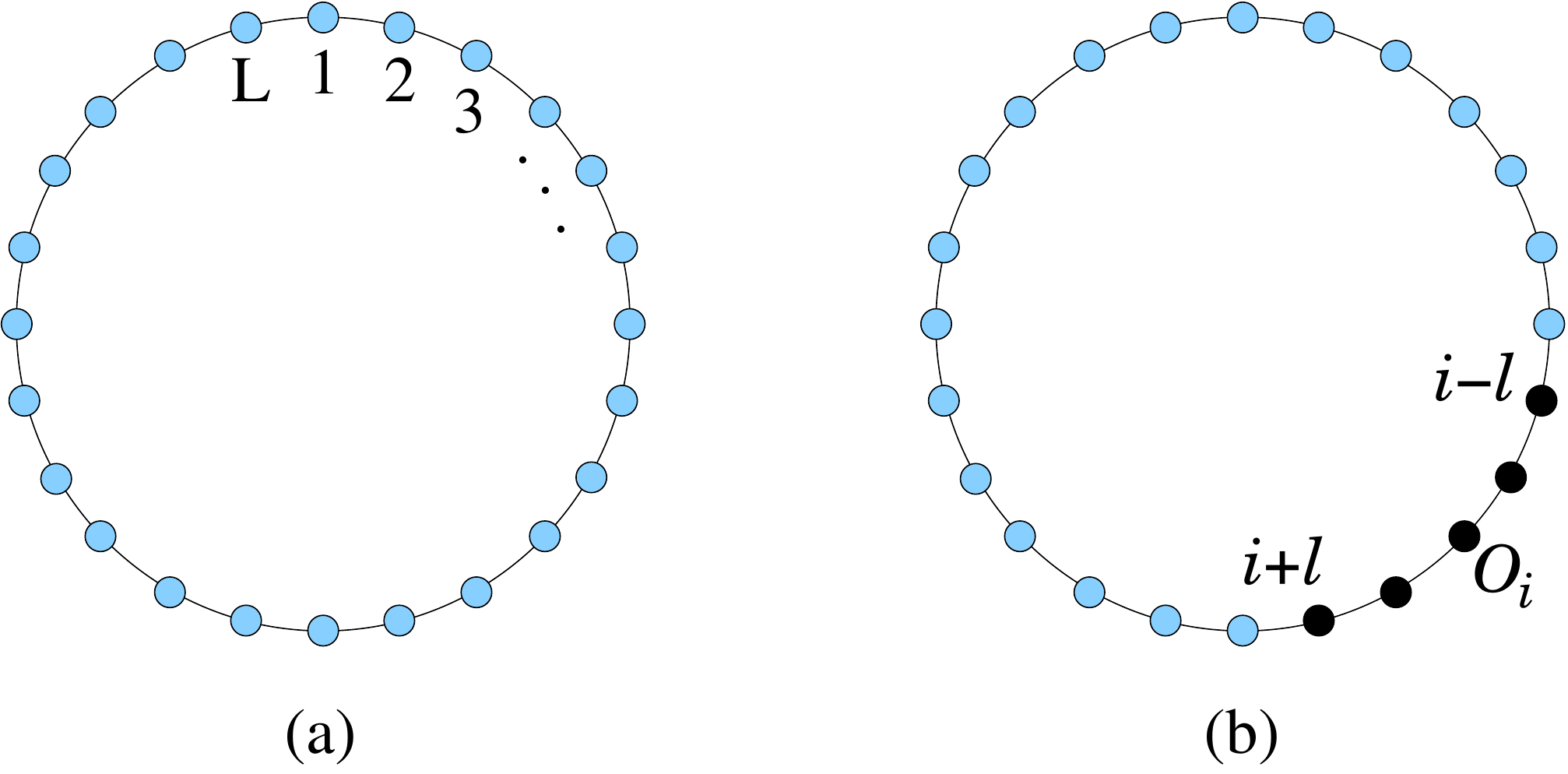}
\centering
\caption{(a) A spin chain made up of $L$ spins in a ring geometry. (b) A $(\delta, \ell)$ order parameter is a collection of operators $\{O_i\}$, supported on the intervals $[i-\ell, i+\ell]$, obeying certain conditions. }
\label{Hoifig}
\end{figure}

\section{Main result}
\label{sec-main-result}
Before stating our main result, we need to explain our setup. We consider spin chains made up of $L$ spins, arranged in a ring geometry (Fig. \ref{Hoifig}a). Each spin has $d$ possible states. We consider Hermitian Hamiltonians with nearest neighbor interactions and with bounded norm:
\begin{align}
H = \sum_{i=1}^L H_{i,i+1}, \quad \quad \|H_{i,i+1}\| \leq 1, \quad \label{ham}
\end{align}
We assume that $H$ is invariant under a (unitary) Ising symmetry transformation of the form 
\begin{align}
S = \prod_{i=1}^L S_i, \quad \quad S_i^2 = 1
\end{align}
where $S_i$ is a unitary operator acting on each $d$-dimensional spin. Note that we do not sacrifice any generality in restricting to Hamiltonians with nearest-neighbor interactions: every one dimensional Hamiltonian with finite range interactions can be represented using nearest neighbor interactions by clustering sufficiently large groups of spins into superspins. Also note that we do not assume that $H$ is translationally invariant at this point.

A few comments about notation: {\bf (1)} We denote the expectation value of the operator $O$ in the state $|\psi\>$ by $\<O\>_\psi \equiv \<\psi|O |\psi\>$; {\bf (2)} We say that an operator $O$ is ``even under $S$'' if $S O S = O$, and ``odd under $S$'' if $S O S = - O$. Likewise, we will say that a state $|\psi\>$ in the spin-chain Hilbert space is ``even under $S$'' if $S |\psi\> = |\psi\>$, and ``odd under $S$'' if $S |\psi\> = - |\psi\>$; {\bf (3)} We define the ``distance'' between sites $i, j$ in the obvious way: $\mathrm{dist}(i,j) = \min(|i-j|, L-|i-j|)$.

Next we define the notion of an order parameter --- or more precisely a ``$(\delta,  \ell)$ order parameter'':
\begin{definition}\label{deford}
A collection of operators $\{O_i: i \in X \}$ with $X \subset \{1,...,L\}$ is called a $(\delta,\ell)$ order parameter for a state $|\psi\>$, if 
\begin{enumerate}
\item{$O_i$ is odd under $S$.}
\item{$O_i$ is supported on $[i - \ell, i + \ell]$.}
\item{$|\<O_i^\dagger O_{j}\>_\psi| \geq \delta \text{ for all } i,j \in X \text{ with }\text{dist}(i,j) > 2\ell.$}
\item{$\|O_i\| \leq 1$.}
\end{enumerate}
The set $X$ is called the domain of definition of the order parameter.
\end{definition}
Here $\delta$ and $\ell$ are two positive real numbers that characterize different aspects of the order parameter: $\ell$ describes the size of the region of support of the $O_i$ operators (Fig. \ref{Hoifig}b) while $\delta$ describes the strength of the (two-point) correlations. Likewise, the subset $X \subset \{1,...,L\}$ should be thought of as the region where the order parameter is defined. In many cases, order parameters will be defined on the whole system, $X = \{1,...,L\}$, but the above definition allows for the more general possibility where the order parameter is only defined on a subset of the system. This is particularly relevant to systems without translational symmetry.

Our definition for a \emph{disorder} parameter is similar: 
 \begin{definition}
A collection of operators $\{O_i: i \in X \}$ is called a $(\delta,\ell)$ disorder parameter for a state $|\psi\>$ 
if 
\begin{enumerate}
\item{The $O_i$ operators are all even under $S$ or all odd under $S$.}
\item{$O_i$ is supported on $[i - \ell, i + \ell]$.}
\item{$|\<O_i^\dagger O_{j} \prod_{k=i+1}^{j} S_k\>_\psi| \geq \delta \text{ for all } i,j \in X \text{ with }\text{dist}(i,j) > 2\ell.$}
\item{$\|O_i\| \leq 1$.}
\end{enumerate}
The set $X$ is called the domain of definition of the disorder parameter.
\end{definition}
Note that in the definition of disorder parameters, the $O_i$ operators are either all even or all odd. Thus we can distinguish between \emph{two} types of disorder parameters: ``even'' and ``odd'' disorder parameters. 

We are now ready to present our main result. Let $H$ be an Ising symmetric Hamiltonian of the form given in Eq. (\ref{ham}). The eigenstates of $H$ can always be chosen so that they have a definite parity --- even or odd --- under the Ising symmetry. In this paper, we focus on the \emph{even} eigenstates.\footnote{Of course, all of our results also hold if we replace ``even'' by ``odd.''} Our results depend on a single assumption: the lowest energy eigenvalue within the even subspace is \emph{non-degenerate}. We will denote the corresponding eigenstate by $|\Omega\>$ and the energy gap within the even subspace by $\epsilon$.
Our first result is that if $H$ is translationally invariant, then $|\Omega\>$ is guaranteed to have either a $(\delta, \ell)$ order parameter or a $(\delta, \ell)$ disorder parameter for appropriate $\delta$ and $\ell$:
\begin{theorem}
\label{mainthm1}
If $H$ is translationally invariant, then $|\Omega\>$ has either a $(\delta, \ell)$ order parameter or a $(\delta, \ell)$ disorder parameter defined on the whole spin chain (i.e. $X = \{1,...,L\}$) with
\begin{align}
\delta = \frac{1}{72}, \quad \quad \ell \leq \tilde{\mathcal{O}} \left(\frac{(\log d)^3}{\epsilon^2} \right)
\label{delta0l0}
\end{align}
\end{theorem}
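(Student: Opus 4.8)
The plan is to control the order/disorder dichotomy through the action of the partial symmetry on the ground state. Fix two well-separated sites $i,j$, cutting the ring into the arc $A=[i+1,j]$ and its complement, and form the twisted state $S_A\ket\Omega$ with $S_A=\prod_{k\in A}S_k$. The first step is a purely algebraic observation: since $S\ket\Omega=\ket\Omega$ and $S=S_R S_{R^c}$ for every region $R$, tracing out $R^c$ gives $S_R\rho_R S_R=\rho_R$, where $\rho_R$ is the reduced density matrix of $\ket\Omega$ on $R$. Applied to bulk subregions of $A$, this shows that $S_A\ket\Omega$ and $\ket\Omega$ have identical reduced density matrices on every region disjoint from small neighborhoods $B_i,B_j$ of the two string endpoints. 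Hence the twisted state can differ from $\ket\Omega$ only near $i$ and $j$, and the whole problem reduces to analyzing this endpoint discrepancy.

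Next I would import two standard structural facts for one-dimensional gapped states, applied to $\ket\Omega$ using the even-sector gap $\epsilon$: exponential decay of connected correlations of even observables (Hastings--Koma / Hamza et al.), with correlation length $\xi=\tilde{\mathcal O}(1/\epsilon)$, and the one-dimensional area law (AKLV), bounding the entanglement entropy across a cut by a quantity of order $\tilde{\mathcal O}((\log d)^3/\epsilon)$. Taking the buffer $\ell$ to be a few correlation lengths times this entropy yields $\ell\le\tilde{\mathcal O}((\log d)^3/\epsilon^2)$, the scale quoted in the theorem; it also guarantees both that the two endpoint regions decouple, $\rho_{B_iB_j}\approx\rho_{B_i}\otimes\rho_{B_j}$, and that the endpoint discrepancy is genuinely supported in $B_i\cup B_j$.

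The core of the argument is a case split on whether this endpoint discrepancy can be undone by localized operators, made quantitative by the Fuchs--van de Graaf inequality. I would study the dressed overlap $\bra\Omega O_i^\dagger O_j\,S_A\ket\Omega$ optimized over operators $O_i,O_j$ supported near $i$ and $j$. In the first case some choice makes this overlap at least a universal constant, and reading off $O_i,O_j$ produces a $(\delta,\ell)$ disorder parameter outright. In the complementary case no localized dressing restores the overlap, which via Fuchs--van de Graaf forces the endpoint reduced states of $\ket\Omega$ and $S_A\ket\Omega$ to be distinguishable with trace distance bounded below. Since the two states agree in the bulk and the twist acts as the symmetry, this distinguishing information sits in a long-range correlation between the two endpoints; symmetrizing the optimal distinguishing observable to make it odd under $S$ yields a localized odd operator $O_i$ with $|\bra\Omega O_i^\dagger O_j\ket\Omega|\ge\delta$, i.e. a $(\delta,\ell)$ order parameter. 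Chasing the constants through this chain of inequalities (one square-root factor lost at Fuchs--van de Graaf, more at the factorization and symmetrization steps) is what pins down the explicit $\delta=\tfrac1{72}$.

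Finally, translational invariance upgrades this pairwise dichotomy to a parameter on all of $X=\{1,\dots,L\}$: the construction depends on the pair only through translates of the same local data, so whichever case holds for one well-separated pair holds for all of them, and the resulting operators are translates of a single $O_i$ with the \emph{same} $\delta$ and $\ell$. (Without translational invariance the case may change along the chain, which is exactly why the non-invariant version, Theorem~\ref{mainthm2}, only partitions the chain into a few ordered or disordered regions.) I expect the main obstacle to be the complementary ``order'' case together with the structural inputs feeding it: rigorously localizing the bulk-agreeing discrepancy to the endpoints from the area law, justifying that decay of correlations and the area law may be used given only an even-sector gap (rather than a global gap), and then converting ``no localized dressing restores the overlap'' into a genuine lower bound on the two-point function of a localized \emph{odd} operator --- all while keeping the accumulated errors safely below the target constant $\tfrac1{72}$.
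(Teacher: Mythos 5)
Your outline assembles the right ingredients (the twisted state $S_A\ket\Omega$, Fuchs--van de Graaf, AKLV, Hastings-type bounds) and even the right scale for $\ell$, but the two steps carrying the real weight both have gaps. First, the dichotomy you extract from Fuchs--van de Graaf is not the one the inequality provides. FvdG says: (trace distance on $X$) $+$ (maximum of $|\langle\Omega|U S_A|\Omega\rangle|$ over \emph{all} unitaries $U$ supported on $X^c$) $\geq 1$. Your case (1) quantifies only over \emph{product} dressings $O_i^\dagger O_j$ localized near the two endpoints, so its failure does not make the second FvdG term small --- an entangled unitary on the recovery region could still restore the overlap --- and hence forces no trace-distance lower bound at all. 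Moreover the regions are swapped: with recovery unitaries on $B_i\cup B_j$, FvdG forces distinguishability on $(B_i\cup B_j)^c$, i.e.\ on the two long bulk arcs, not on ``the endpoint reduced states''; and the distinguishing operator it yields is a priori \emph{entangled} across those two arcs, not a product of an odd operator near $i$ with one near $j$. Closing precisely these two gaps is the content of the paper's Lemma \ref{orddis} (FvdG applied with the twist on a single short interval $I_i$, so the order-side operators sit on $I_i\cup I_i'$ and the disorder-side unitaries on $I_i^-\cup I_i^+$) and Lemma \ref{weakimpstrong}, which uses the AKLV \emph{Schmidt-rank} bound to truncate the entangled operator and replace it by a product with comparable expectation value; that truncation is where $(\log d)^3/\epsilon^2$ actually enters. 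Your use of AKLV (``decoupling'' $\rho_{B_iB_j}\approx\rho_{B_i}\otimes\rho_{B_j}$ and ``localizing the discrepancy'') does not perform this entangled-to-product reduction.

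Second, even granting both cases, your construction produces operators tied to the particular pair $(i,j)$, whereas a $(\delta,\ell)$ order (or disorder) parameter requires a \emph{single} family $\{O_i\}$ with $|\langle O_i^\dagger O_j\rangle_\Omega|\geq\delta$ (resp.\ the dressed correlator) for \emph{all} pairs with $\mathrm{dist}(i,j)>2\ell$. Translational invariance only identifies pairs of equal separation; it neither prevents your case split from flipping as the separation changes, nor welds the pair-dependent operators into one consistent family. The paper resolves this by making the dichotomy local --- one dichotomy per site $i$, between weak order on $(I_i,I_i')$ and weak disorder on $(I_i^-,(I_i^-)')$, so that translation invariance genuinely forces one alternative uniformly --- and then proving a separate splicing estimate, Lemma \ref{ordparam} (built on Hastings' factorization theorem, Theorem \ref{hastthm}), which lower-bounds the correlator between two \emph{short} intervals $I_i,I_j$ by the product of the two local correlators minus an exponentially small error; this is also where the constant $\delta=(1/6)^2-f(\ell)\geq\tfrac{1}{72}$ comes from. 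Nothing in your outline plays this role: exponential decay of connected correlations of even observables (how you invoke Hastings/Hamza et al.) is a strictly weaker statement than the ground-state factorization needed here. A smaller point in the same direction: a disorder parameter must have definite parity (all $O_i$ even or all odd), which requires the symmetrization step you apply only on the order side.
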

(Here, and in the remainder of the paper, we use $\tilde{\mathcal{O}}$ notation as follows: $f(x) = \tilde{\mathcal{O}}(g[x])$ means that there exists constants $C$ and $k$ such that $|f(x)| \leq C g[x] (\log g[x])^k$ for all $x$). 
%Thus, the above theorem means that $\ell_0$ can always be chosen to be smaller than $C (\log^3(d)/\epsilon^2) \log^k(\log^3(d)/\epsilon^2)$ for some $C, k$.   

A few comments about Theorem \ref{mainthm1}: 

{\bf (1)} It is important to recognize that the existence of a $(\delta, \ell)$ order/disorder parameter is only meaningful if $\ell$ is smaller than the system size $L$. This means that the theorem does not give any meaningful constraints on gapless spin chains whose energy gaps scale like $\epsilon \sim L^{-1}$ since in that case the bound on $\ell$ is larger than $L$. On the other hand, the theorem does tell us something about \emph{gapped} spin chains, i.e. spin chains where $\epsilon$ is bounded away from zero for arbitrarily large system sizes $L$; in this case, the theorem implies that there must be either an order parameter or a disorder parameter supported within a finite length scale $\ell$ that does not scale with the system size. 

{\bf (2)} The reader may worry that the theorem is inconsistent with the standard picture in which order/disorder parameters become arbitrarily small near critical points. The key point is that, unlike the standard setup, we do not fix a choice of order/disorder parameters across different Hamiltonians: these operators can change as we approach a critical point and, in particular, the region of support of these operators can grow (subject to the bound in Eq. \ref{delta0l0}) in order to maintain a fixed value of $\delta$.

{\bf (3)} To illustrate the theorem, consider the transverse field Ising model near the critical point. In this case, Theorem \ref{mainthm1} implies that there exists either a $(\delta, \ell)$ order parameter or $(\delta, \ell)$ disorder parameter with $\delta \sim 1$ and $\ell \lesssim \epsilon^{-2}$. It is interesting to compare this result to the usual renormalization group picture for the transverse field Ising model. Imagine we coarse grain the spin chain to a length scale of the same order as the correlation length $\xi$. According to the renormalization group picture, we expect that the usual order and disorder parameters will have strength $\delta \sim 1$ at this length scale. This suggests that we can always find an order or disorder parameter with $\delta \sim 1$ and $\ell \lesssim \xi \sim \epsilon^{-1}$. Comparing with the above results, we see that Theorem \ref{mainthm1} is consistent with the renormalization group picture and in fact the bounds on $\ell$ are weaker then one might expect. This suggests that it may be possible to strengthen these bounds further. \\

We now move on to the more general case where $H$ is not necessarily translationally invariant. In this case, the ground state can be spatially inhomogeneous so we may anticipate that the system can contain three types of regions: (1) regions that support an order parameter, (2) regions that support a disorder parameter that is even under $S$, and (3) regions that support a disorder parameter that is odd under $S$. The following theorem tells us that in fact, these three regions encompass the entire spin chain:
\begin{theorem}
\label{mainthm2}
For general $H$, there exists three sets $X_o, X_{d1}, X_{d2}$ with $X_o \cup X_{d1} \cup X_{d2} = \{1,...,L\}$ such that $|\Omega\>$ has a $(\delta, \ell)$ order parameter on $X_o$, a $(\delta, \ell)$ disorder parameter that is even under $S$ on $X_{d1}$, and a $(\delta, \ell)$ disorder parameter that is odd under $S$ on $X_{d2}$. Here $\delta = \frac{1}{72}$ and $\ell \leq \tilde{\mathcal{O}}(\frac{(\log d)^3}{\epsilon^2})$.
\end{theorem}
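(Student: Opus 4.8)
My plan is to prove Theorem \ref{mainthm2} by \emph{localizing} the argument behind Theorem \ref{mainthm1}. The translation-invariant proof presumably splits into two ingredients: a local, cut-by-cut analysis that produces an operator witnessing either order or disorder in the neighborhood of a given cut, followed by a step in which translation invariance is used to make the witness uniform and thereby promote it to an order/disorder parameter on all of $\{1,\dots,L\}$. Since only the second step uses translation invariance, the idea is to keep the first step essentially verbatim and replace the second by a combinatorial assembly into three sets. Concretely, I would first fix the length scale $\ell \le \tilde{\mathcal O}((\log d)^3/\epsilon^2)$ using the gapped-ground-state inputs --- the exponential clustering of correlations \cite{Hastings_area_law, Hamzaetal} and the one-dimensional entanglement area law \cite{AKLV} --- chosen so that, across any cut, the reduced state on a window of radius $\ell$ approximately factorizes and all the approximations below have error under a fixed absolute constant. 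None of this requires translation invariance, so $\ell$ depends only on $d$ and $\epsilon$, not on the position of the cut or on $L$.

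Second, at each cut I would run the local dichotomy. Writing $S_R$ for the partial symmetry on one arc, the Fuchs--van de Graaf inequality \cite{Fuchs} converts the comparison between $\rho$ and its twisted version $S_R \rho S_R^\dagger$ (equivalently, the overlap $\langle \Omega| S_R|\Omega\rangle$ dressed by a local boundary operator) into a trichotomy: the local Schmidt structure across the cut is either cat-like, furnishing a local odd operator with symmetry-breaking order, or product-like with the symmetry string terminated by an \emph{even} local operator, or product-like with the string terminated by an \emph{odd} local operator. The even/odd distinction is a local $\mathbb{Z}_2$ datum at the cut --- the parity of the operator needed to absorb the end of the string --- and it is exactly what separates $X_{d1}$ from $X_{d2}$. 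I would then take $X_o$, $X_{d1}$, $X_{d2}$ to be the sets of sites whose surrounding cuts are classified as order, even-disorder, and odd-disorder respectively; since every cut receives at least one label, the union is $\{1,\dots,L\}$.

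Third --- and this is where the real work lies --- I must verify condition 3 of the definitions of order and disorder parameters, i.e. that within each set the relevant (possibly twisted) two-point function is bounded below by the single constant $\delta = 1/72$ for \emph{all} far-apart pairs, not merely that each site carries a local witness. For the two disorder sets this should follow by factorizing $\langle O_i^\dagger O_j \prod_{k=i+1}^{j} S_k\rangle$ over the two widely separated ends of the string via clustering: each end contributes a factor close to $1$ because the local classification says the string can be absorbed by the chosen dressing, and the accumulated Fuchs--van de Graaf and factorization errors are what force the explicit constant $1/72$. For the order set the statement $|\langle O_i^\dagger O_j\rangle| \ge \delta$ cannot be purely local, since long-range order is intrinsically a two-site property; here I would invoke the non-degeneracy of $|\Omega\rangle$ within the even subspace. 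Non-degeneracy forbids the chain from splitting into two ordered regions separated by a disordered one: two independent cats would produce a second low-lying even state and hence either violate non-degeneracy or drive $\epsilon$ --- and with it the usefulness of the bound on $\ell$ --- to zero. Thus non-degeneracy forces all order-type sites to share a single symmetry-breaking cat structure, from which $|\langle O_i^\dagger O_j\rangle|\ge \delta$ follows by factorizing each branch over the distance.

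The main obstacle I anticipate is precisely this last point: upgrading the per-cut witnesses to the \emph{collective} conditions with a uniform, $L$-independent constant, and in particular using the even-sector non-degeneracy to guarantee global coherence of the ordered region. This has no analog in the translation-invariant proof, where condition 3 is automatic once a single translationally covariant witness is produced. Controlling the error that accumulates across arbitrarily separated pairs (so that it does not grow with $|i-j|$ or $L$), and checking that the three-way labelling is globally consistent where regions of different type meet, are the delicate steps that I expect to consume most of the argument.
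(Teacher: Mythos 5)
Your skeleton --- a per-site Fuchs--van de Graaf dichotomy, assembly into three sets, then verification of condition 3 --- matches the paper's outline, and you correctly identify the crux: upgrading per-site witnesses to a uniform lower bound on \emph{pairwise} correlations. But your proposed resolution of that crux for the order set is a genuine gap. You argue that even-sector non-degeneracy ``forbids two independent cats,'' hence forces global coherence of the ordered region; this is a physical heuristic, not a proof. Making it rigorous would require constructing, from two putatively independent ordered regions, a normalized even state approximately orthogonal to $|\Omega\rangle$ with energy within $\epsilon$ of the ground energy, and quantifying ``independent'' --- none of which you supply, and none of which appears in the paper. The paper sidesteps global coherence entirely by a structural choice you are missing: Lemma \ref{orddis} is applied to the pair $(I_i, I_i')$, where $I_i'$ is the \emph{far antipodal complement} of $I_i$, so the witness attached to site $i$ is already a long-range correlation between $I_i$ and everything far away, not a local Schmidt-structure datum at a cut. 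Two such bi-local witnesses at sites $i$ and $j$ are then spliced by Lemma \ref{ordparam}, $|\langle O_i^\dagger O_j\rangle_\Omega| \geq |\langle O_i O_i'\rangle_\Omega| \cdot |\langle O_j O_j'\rangle_\Omega| - f(\ell)$, proven via Hastings' factorization (Theorem \ref{hastthm}) and the approximate ground-state projectors of Proposition \ref{tildeO}. This handles order and disorder \emph{uniformly} (the asymmetry you perceive between the two cases is an artifact of your per-cut framing), gives the constant $1/72$ for all pairs with error independent of $\mathrm{dist}(i,j)$ and $L$, and uses the gap only through these technical lemmas --- no cat-state counting anywhere.

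A second gap: the Fuchs--van de Graaf step can only produce \emph{entangled} witnesses, i.e.\ operators $A$ supported on $I_1 \cup I_2$ (the paper's ``weak'' order/disorder), whereas the theorem requires collections of single-interval operators $\{O_i\}$. The paper needs a separate step, Lemma \ref{weakimpstrong}, to convert weak witnesses into product witnesses $O_1 O_2$: it approximates $|\Omega\rangle$ by a state of Schmidt rank $s = e^{\tilde{\mathcal{O}}((\log d)^3/\epsilon)}$ using the AKLV area law, truncates $A$ to rank $s$, and then factorizes with error $s^2 h(\ell)$ --- it is balancing $s^2$ against $h(\ell)$ that produces the scale $\ell \leq \tilde{\mathcal{O}}((\log d)^3/\epsilon^2)$. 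In your outline the area law appears only as a vague ``the reduced state on a window approximately factorizes,'' which is not its role; without the weak-to-strong upgrade your construction never produces the objects the theorem is about. Finally, a small quantitative point: in the disorder case the two ends of the string do not each ``contribute a factor close to $1$''; they each contribute at least $1/6$ (strong disorder at strength $\delta/2$ with $\delta = 1/3$), which is exactly where $\delta = 1/72 = (1/6)^2 - 1/72$ comes from.
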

Note that Theorem \ref{mainthm2} is the natural generalization of Theorem \ref{mainthm1}: it shows that general spin chains can always be divided into three regions, each of which supports either an order parameter or a disorder parameter with definite parity.

\section{Two more constraints on order and disorder parameters}
\label{sec-constraints}
In addition to the above results we also prove two more constraints on order and disorder parameters. These constraints do not simplify significantly in the translationally invariant case, so we will only state them in the general (non-translationally invariant) case. The first constraint is as follows:
\begin{theorem}
\label{mutconst}
If $|\Omega\>$ has a $(\delta, \ell)$ order parameter defined at two points $i_1, i_3$ and a $(\delta, \ell)$ disorder parameter defined at two points $i_2, i_4$ with $i_1 < i_2 < i_3 < i_4$, then
\begin{align}
\min_{k,l}[\mathrm{dist}(i_k,i_l)] \leq 2\ell + \tilde{\mathcal{O}}\left(\frac{\log \delta^{-1}}{\epsilon} \right)  
\label{mutineq}
\end{align}
\end{theorem}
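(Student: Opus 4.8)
The plan is to derive the bound by playing an exact algebraic anticommutation relation against an approximate factorization coming from clustering. Write $\{O_i\}$ for the order-parameter operators (odd under $S$, supported on $[i-\ell,i+\ell]$) and $\{P_i\}$ for the disorder-parameter operators (of a fixed parity), and set $D=\prod_{k=i_2+1}^{i_4}S_k$. Assuming the four points are pairwise separated by more than $2\ell$ (otherwise the conclusion is immediate), the order- and disorder-parameter conditions give $|\langle O_{i_1}^\dagger O_{i_3}\rangle_\Omega|\geq\delta$ and $|\langle P_{i_2}^\dagger P_{i_4}D\rangle_\Omega|\geq\delta$. I introduce the two composite operators
\begin{align*}
A = O_{i_1}^\dagger O_{i_3}, \qquad B = P_{i_2}^\dagger P_{i_4}\, D,
\end{align*}
which satisfy $\|A\|,\|B\|\leq 1$ and $|\langle A\rangle_\Omega|,|\langle B\rangle_\Omega|\geq\delta$. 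Both are even under $S$ (each is a product of two operators of equal parity, times the even string $D$ in the case of $B$), so they preserve the even subspace in which the gap $\epsilon$ is defined.

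Next I would establish that $A$ and $B$ \emph{anticommute exactly}. Because $i_1<i_2<i_3<i_4$ and all separations exceed $2\ell$, the support of $O_{i_3}$ lies strictly inside $[i_2+1,i_4]$ while the support of $O_{i_1}$ lies strictly outside it, and $P_{i_2},P_{i_4}$ are supported away from the feet of $A$. Since $O_{i_3}$ is odd, conjugation by the partial symmetry $D$ acts on its support exactly as the full symmetry $S$, giving $DO_{i_3}D=-O_{i_3}$, whereas $DO_{i_1}D=O_{i_1}$ because $O_{i_1}$ commutes with $D$ and $D^2=1$. Hence $DAD=-A$, and as $A$ commutes with $P_{i_2},P_{i_4}$ this yields $AB=-BA$. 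In particular $\langle AB\rangle_\Omega=-\langle BA\rangle_\Omega$ exactly.

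The crux is then to show that, up to an error exponentially small in the separation, both orderings factorize: $\langle AB\rangle_\Omega\approx\langle A\rangle_\Omega\langle B\rangle_\Omega$ and $\langle BA\rangle_\Omega\approx\langle A\rangle_\Omega\langle B\rangle_\Omega$, each with error at most $Ce^{-R/\xi}$, where $R=\min_{k,l}\mathrm{dist}(i_k,i_l)-2\ell$ and $\xi=\tilde{\mathcal{O}}(1/\epsilon)$. Granting this, the exact relation $\langle AB\rangle_\Omega=-\langle BA\rangle_\Omega$ forces $\langle A\rangle_\Omega\langle B\rangle_\Omega\approx-\langle A\rangle_\Omega\langle B\rangle_\Omega$, so that $\delta^2\leq|\langle A\rangle_\Omega\langle B\rangle_\Omega|\leq Ce^{-R/\xi}$. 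Solving for $R$ gives $R\leq\xi\log(C\delta^{-2})=\tilde{\mathcal{O}}(\log\delta^{-1}/\epsilon)$, i.e. $\min_{k,l}\mathrm{dist}(i_k,i_l)\leq 2\ell+\tilde{\mathcal{O}}(\log\delta^{-1}/\epsilon)$, which is the claim.

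The main obstacle is precisely this factorization, because $A$ and $B$ are \emph{not} supported on disjoint regions: the nonlocal string $D$ links the two sides of the chain and passes directly through the foot $O_{i_3}$ of the order parameter, so the standard disjoint-support clustering bound does not apply. To circumvent this I would exploit that $D$ is a symmetry string acting on a symmetric state: since $S\ket{\Omega}=\ket{\Omega}$, the string may be traded for its complement ($D\ket{\Omega}=\bar D\ket{\Omega}$ with $\bar D=\prod_{k\notin[i_2+1,i_4]}S_k$), and more importantly its \emph{bulk} acts trivially, so that only its two endpoints near $i_2$ and $i_4$ carry nontrivial weight. After deforming the string off the support of the order operator in this way, keeping the endpoint dressing attached as local units, the residual operators in the two factors sit on well-separated arcs and I can invoke exponential clustering of connected correlations\cite{Hastings_area_law,Hamzaetal}. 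The delicate point is that $\ket{\Omega}$ need not be the global ground state---a nearly degenerate odd state can spoil the usual global-gap hypothesis---so I would use the form of clustering valid for parity-\emph{even} observables, for which the Lehmann decomposition of the connected correlator involves only even excited states and is therefore controlled by the even-sector gap $\epsilon$. This even-sector clustering estimate, together with the careful treatment of the string endpoints, is the step that requires the most care and is what supplies the length scale $\tilde{\mathcal{O}}(\log\delta^{-1}/\epsilon)$.
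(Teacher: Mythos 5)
Your overall skeleton---an exact anticommutation relation $AB=-BA$ played against an approximate factorization, yielding $\delta^2 \leq (\text{exponentially small error})$---is precisely the strategy of the paper's proof, and your parity bookkeeping, the verification of the anticommutation, and the reduction to pairwise separations $>2\ell$ are all correct. The gap is in the step you yourself flag as the crux: the claim that $\langle AB\rangle_\Omega$ and $\langle BA\rangle_\Omega$ both factorize as $\langle A\rangle_\Omega\langle B\rangle_\Omega$ up to an error $Ce^{-R/\xi}$. Your proposed route---trade $D$ for its complement $\bar D$, and argue that the bulk of the string ``acts trivially'' so that only endpoint dressings matter---does not work. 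Trading $D$ for $\bar D$ merely moves the string so that it passes through the support of $O_{i_1}$ instead of $O_{i_3}$; and the reduction of the string to its two endpoints is false as an operator statement: it is essentially the assertion that the system is in the disordered phase. In a GHZ-like (ordered) ground state, $D|\Omega\rangle$ is a two-domain-wall state that is exactly orthogonal to $X_{i_2}Y_{i_4}|\Omega\rangle$ for any operators $X,Y$ localized near the string endpoints, so no endpoint-local surrogate for $D$ exists there; the theorem is supposed to be proved without knowing which phase one is in, and your argument supplies no phase-independent mechanism. There is also a structural obstruction: since $AB=-BA$ exactly, the two factorization claims taken together are \emph{equivalent} to the conclusion $|\langle A\rangle_\Omega\langle B\rangle_\Omega|\leq \eta$, so your unproven lemma is the theorem rather than a reduction of it; and the very fact that $A$ and $B$ fail to commute certifies that their supports genuinely overlap (through the string), so no disjoint-support clustering theorem---even the even-sector version you correctly identify as the right one---can ever apply to this pair.

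The paper circumvents all of this by never factorizing expectation values. It uses a symmetric strengthening of Hastings' factorization theorem (Theorem \ref{hastthm}) to build approximate ground-state projectors (Proposition \ref{tildeO}), which upgrade the scalar hypotheses $\langle O_1^\dagger O_3\rangle_\Omega=\delta$ and $\langle O_2^\dagger O_4\prod_k S_k\rangle_\Omega=\delta$ to the \emph{vector} statements $U|\Omega\rangle\approx\delta|\Omega\rangle$ and $V|\Omega\rangle\approx\delta|\Omega\rangle$, where $U=P_1P_3\,O_1^\dagger O_3$ and $V=P_2P_4\,O_2^\dagger O_4\prod_{k=i_2+1}^{i_4}S_k$ are dressed with projectors supported near the four points. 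These statements hold in either phase; the string is handled inside Proposition \ref{tildeO}, where $S_J$ can be commuted through the auxiliary projectors because the boundary operator $P_{bd}$ in Theorem \ref{hastthm} is even under the symmetry restricted to each connected component of its support (this is exactly the strengthening the paper needed). Then $UV|\Omega\rangle\approx\delta^2|\Omega\rangle\approx VU|\Omega\rangle$ clashes with the exact relation $UV=-VU$, giving $\delta^2\leq\mathcal{O}(\sqrt{g(\ell_{\text{min}})})$ and hence the bound. If you want to repair your write-up, the missing ingredient is precisely Proposition \ref{tildeO}; it is not a routine consequence of even-sector clustering.
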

To understand the implications of this result, consider the case where the $(\delta, \ell)$ order parameter and $(\delta, \ell)$ disorder parameter are defined throughout the \emph{entire} spin chain, i.e. $X = \{1,...,L\}$. In this case, we can choose $i_1, i_2, i_3, i_4$ to be equally spaced with a separation of $L/4$. Then Eq. \ref{mutineq} implies
\begin{align}
L \leq 8 \ell +  \tilde{\mathcal{O}}\left(\frac{\log \delta^{-1}}{\epsilon} \right)  
\label{Lbound}  
\end{align}
This bound implies that is \emph{not} possible for a spin chain to have both a $(\delta, \ell)$ order parameter and a $(\delta, \ell)$ disorder parameter defined throughout the whole chain and also have a finite gap $\epsilon$ in the limit $L \rightarrow \infty$. This is the constraint (i) mentioned in the introduction.

Our second constraint has a similar structure:
\begin{theorem}
\label{exchconst}
If $|\Omega\>$ has a $(\delta, \ell)$ disorder parameter that is odd under $S$, defined at $i_1, i_2, i_3, i_4$, then
\begin{align}
\min_{k,l}[\mathrm{dist}(i_k,i_l)] \leq 2\ell +  \tilde{\mathcal{O}}\left(\frac{\log \delta^{-1}}{\epsilon} \right)    
\label{exchineq}
\end{align}
\end{theorem}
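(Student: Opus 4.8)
The statement concerns four points that lie cyclically on the ring, so the minimum pairwise distance is realized by one of the consecutive gaps. Relabelling the points cyclically as $i_1<i_2<i_3<i_4$, it therefore suffices to show that one gap — say $\mathrm{dist}(i_2,i_3)$ — is at most $2\ell+\tilde{\mathcal O}(\log\delta^{-1}/\epsilon)$, and I will argue by contradiction from the assumption that \emph{every} consecutive gap exceeds this. Write $U_{[a,b]}=\prod_{k=a}^{b}S_k$ for a truncated symmetry and $D_{ab}=O_{i_a}^\dagger O_{i_b}U_{[i_a+1,i_b]}$ for the disorder operator joining $i_a$ to $i_b$, so the hypothesis supplies $|\langle D_{ab}\rangle|\ge\delta$ for all six pairs, where $\langle\cdot\rangle$ is the expectation in $|\Omega\rangle$. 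The one algebraic fact that is special to the \emph{odd} case is the clean conjugation rule $U_{[a,b]}\,O_i\,U_{[a,b]}=-O_i$ whenever $[a,b]\supseteq[i-\ell,i+\ell]$: this is just $SO_iS=-O_i$, since the part of the string outside the support of $O_i$ commutes with it.

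The crux of my plan is an exact operator identity. Using the rule above to drag the long strings across the odd operators that sit fully in their interior, and using $U_{[a,b]}U_{[a',b']}$ to cancel the doubly-covered sites, I obtain (valid once all consecutive gaps exceed $2\ell$, so that every interior operator is fully covered)
\[
D_{13}D_{24}\;=\;-\,D_{14}D_{23},
\]
both sides reducing to $\mp\,O_{i_1}^\dagger O_{i_2}^\dagger O_{i_3}O_{i_4}\,U_{[i_1+1,i_2]}U_{[i_3+1,i_4]}$. The relative sign appears because in $D_{13}D_{24}$ exactly one odd operator ($O_{i_2}$) is swept by a string that fully covers it, whereas in $D_{14}D_{23}$ two odd operators ($O_{i_2}$ and $O_{i_3}$) are swept and the two signs cancel. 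This $-1$ is exactly the feature absent in the even case, and it is the whole source of the obstruction; the structure parallels Theorem \ref{mutconst}.

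To convert this sign into a quantitative statement I would then invoke locality. On the right-hand side the two strings $U_{[i_1+1,i_2]}$ and $U_{[i_3+1,i_4]}$ are disjoint and leave the $(i_2,i_3)$ gap string-free, so exponential clustering in the gapped state $|\Omega\rangle$ — with correlation length $\tilde{\mathcal O}(1/\epsilon)$ from the cited correlation bounds — factorizes $\langle D_{13}D_{24}\rangle$ across that gap up to an error controlled by $e^{-\epsilon\,\mathrm{dist}(i_2,i_3)}$. The plan is to show that, for a genuine disorder parameter, this two-string expectation must be controlled by the \emph{same} local endpoint data that governs $\langle D_{13}\rangle$ and $\langle D_{24}\rangle$, so that the forced sign is incompatible with all six correlators having modulus $\ge\delta$ unless the clustering error is itself $\gtrsim\delta$; this yields $\mathrm{dist}(i_2,i_3)\le 2\ell+\tilde{\mathcal O}(\log\delta^{-1}/\epsilon)$. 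The intended mechanism for exposing the endpoint data is quasi-adiabatic continuation together with the entanglement (area-law) bounds of \cite{AKLV}: each truncated symmetry $U_{[a,b]}|\Omega\rangle$ is replaced by a product $F_aG_b|\Omega\rangle$ of operators localized within $\tilde{\mathcal O}(1/\epsilon)$ of the two endpoints, the hypothesis forces these endpoint (``domain-wall'') operators to be \emph{odd}, and the Fuchs--van de Graaf inequality then turns the pair of near-equal-but-oppositely-signed state overlaps into the advertised $\log\delta^{-1}$ dependence.

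The hard part, which I expect to be the main obstacle, is precisely this last conversion. At the crudest level of factorization the sign cancels harmlessly — products of single-point correlators are invariant under reconnecting the strings — so a naive count does not see the contradiction; one must track the genuinely two-local data carried by the endpoint operators and show the sign survives there. Two technical points need care: (i) the half-covered strings at the string endpoints, where $U_{[a,b]}$ terminates inside the support of an $O$ and the clean $\pm1$ rule fails, must be kept \emph{inside} a single cluster and never dragged across the free gap; and (ii) the endpoint localization $U_{[a,b]}|\Omega\rangle\approx F_aG_b|\Omega\rangle$ must be established with a well-defined parity of $F_a,G_b$ and with errors small compared to $\delta$ over the relevant scale $\tilde{\mathcal O}(1/\epsilon)$. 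Controlling these is what ultimately fixes the constants in Eq.~\ref{exchineq}.
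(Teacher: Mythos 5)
Your operator identity $D_{13}D_{24}=-D_{14}D_{23}$ is correct and does capture the fermionic sign behind the theorem, but the proof fails at exactly the point you flag as ``the hard part,'' and the obstruction there is structural rather than technical. Your identity relates two \emph{different} products of disorder operators. Even if you had perfect factorization machinery --- i.e.\ even if each $D_{ab}$ acting on $|\Omega\rangle$ could be replaced by the scalar $\langle D_{ab}\rangle_\Omega$ --- the identity would only yield $\langle D_{13}\rangle\langle D_{24}\rangle \approx -\langle D_{14}\rangle\langle D_{23}\rangle$, which is \emph{not} a contradiction for complex correlators of modulus at least $\delta$: under the rephasings $O_j \rightarrow \kappa_j O_j$ one has $\langle D_{ab}\rangle \rightarrow \kappa_a^* \kappa_b \langle D_{ab}\rangle$, so the ratio $\langle D_{13}\rangle\langle D_{24}\rangle \big/ \bigl(\langle D_{14}\rangle\langle D_{23}\rangle\bigr)$ is phase-invariant, and the configuration $\langle D_{23}\rangle=-\delta$ with the other five correlators equal to $+\delta$ satisfies both your identity and the hypotheses. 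This is precisely your own observation that ``the sign cancels harmlessly,'' but your proposed rescue --- endpoint localization via quasi-adiabatic continuation, the AKLV bound, and Fuchs--van de Graaf --- does not address it; those tools are the ones the paper uses for Theorem \ref{mainthm1} and Lemma \ref{weakimpstrong}, not for this theorem.

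The paper's resolution is to abandon the $2{+}2$ pairing and instead use the three pairs $(i_1,i_2)$, $(i_2,i_3)$, $(i_2,i_4)$ sharing the \emph{common} point $i_2$. The corresponding string operators are dressed with Hastings' approximate ground-state projectors from Proposition \ref{tildeO} (built on Theorem \ref{hastthm}), giving operators $U,V,W$ with $U|\Omega\rangle \approx V|\Omega\rangle \approx W|\Omega\rangle \approx \delta|\Omega\rangle$ up to errors $\mathcal{O}(\sqrt{g(\ell_{\text{min}})})$ (after normalizing the three correlators to $\delta$, which is possible simultaneously exactly because they share a point). One then checks that the dressing preserves the exact fermionic exchange identity $UVW=-WVU$, the sign again coming from the odd operator $P_3O_3$ being swept by a string. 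The decisive difference from your identity is that \emph{both sides contain the same three operators}: when each acts as a scalar on $|\Omega\rangle$, commutativity of scalars forces $\delta^3|\Omega\rangle \approx UVW|\Omega\rangle = -WVU|\Omega\rangle \approx -\delta^3|\Omega\rangle$, hence $\delta^3 \leq \mathcal{O}(\sqrt{g(\ell_{\text{min}})})$, a conclusion no assignment of phases can evade; inverting $g$ gives Eq.~(\ref{exchineq}). So your sign is the right germ, but without (a) the common-point choice of correlators that makes the sign phase-proof and (b) the projector dressing that lets string operators act as scalars on $|\Omega\rangle$ while preserving the exact algebra, the argument cannot be completed along the lines you sketch.
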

Again, the meaning of this constraint is most clear in the case where there is an odd $(\delta, \ell)$ disorder parameter that is defined throughout the entire spin chain. As above, if we choose $i_1, i_2, i_3, i_4$ to be equally spaced we derive the inequality (\ref{Lbound}). It then follows that it is \emph{not} possible for a spin chain to have a $(\delta, \ell)$ disorder parameter that is odd under the symmetry and also have a finite gap $\epsilon$ in the limit $L \rightarrow \infty$. This is the constraint (ii) mentioned in the introduction.

Unlike Theorems \ref{mainthm1}-\ref{mainthm2}, the proofs of Theorem \ref{mutconst} and \ref{exchconst} are straightforward: both theorems follow from the non-trivial commutation algebra obeyed by order and disorder parameters. In particular, Theorem \ref{mutconst} originates from the fact that order and disorder parameters \emph{anticommute}, e.g. $\sigma_{i_1}^z \sigma_{i_3}^z$ anticommutes with $\prod_{k=i_2}^{i_4} \sigma_k^x$ if $i_1 < i_2 < i_3 < i_4$. Likewise, Theorem \ref{exchconst} comes from the fact that odd disorder parameters obey a \emph{fermionic} commutation algebra\cite{Levinwenferm}. We present these proofs in section \ref{sec-two-more-theorems}.

\section{Application to self-dual spin chains}
\label{sec-self-dual}
Our results have an interesting application to \emph{self-dual} spin chains. Here we define the notion of self-dual spin chains as follows. Consider a spin-$1/2$ chain consisting of $L$ spins arranged in a ring, and consider a Hamiltonian of the form $H = \sum_i H_i$ where $\|H_i\| \leq 1$ and where each $H_i$ is supported on the interval $[i,i+r]$ for some integer $r \leq L-1$, which we will refer to as the ``range'' of $H$. Suppose further that $H$ commutes with the Ising symmetry $S = \prod_{i=1}^L \sigma_i^x$. We say that $H$ is self-dual if $U_D^{\dagger} H_{\text{even}} U_D = H_{\text{even}}$ where $H_{\text{even}}$ is the restriction of $H$ to the even ($S=1$) subspace and $U_D$ is the unitary operator, defined within the even subspace, given by
\begin{align*}
U_D^{\dagger} \sigma^x_i U_D = \sigma^z_{i-1} \sigma^z_{i}, \quad \quad \quad
U_D^{\dagger} (\sigma^z_{i-1} \sigma^z_{i}) U_D  = \sigma^x_{i-1}
\end{align*}
The simplest example of a self-dual spin chain is the transverse field Ising model at $B=1$.

\begin{corol}
For a self-dual spin chain with interactions of range $r$, the energy gap $\epsilon$ within the even subspace is bounded by $\epsilon \leq \mathcal{O}(\frac{r^3}{L^{1/2-\alpha}})$ for every $\alpha > 0$.
\label{selfdual}
\end{corol}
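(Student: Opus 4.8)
The plan is to leverage self-duality to put $\lvert\Omega\rangle$ into a situation where the constraints of Theorems \ref{mutconst} and \ref{exchconst} force the gap to be small. The first step is to observe that self-duality makes $\lvert\Omega\rangle$ invariant under the Kramers--Wannier duality. Since $U_D^\dagger H_{\text{even}} U_D = H_{\text{even}}$, the operator $U_D$ commutes with $H_{\text{even}}$, so $U_D\lvert\Omega\rangle$ is again a lowest-energy eigenstate in the even subspace; by the standing non-degeneracy assumption we get $U_D\lvert\Omega\rangle = e^{i\theta}\lvert\Omega\rangle$. Consequently $\langle\Omega\rvert U_D A U_D^\dagger\lvert\Omega\rangle = \langle\Omega\rvert A\lvert\Omega\rangle$ for every even operator $A$, i.e. duality preserves all ground-state expectation values of even operators.

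Next I would record the dictionary that duality exchanges order and disorder parameters. Writing $D(A)=U_D A U_D^\dagger$, the defining relations $D(\sigma^x_i)=\sigma^z_{i-1}\sigma^z_i$ and $D(\sigma^z_{i-1}\sigma^z_i)=\sigma^x_{i-1}$ show that $D$ sends local even operators to local even operators with a unit shift and, as in the footnote of Section \ref{sec-self-dual}, sends the disorder string $\prod_{k=i+1}^{j}\sigma^x_k$ to the order string $\sigma^z_i\sigma^z_j$. Combined with the invariance above, this means: if $\lvert\Omega\rangle$ carries a $(\delta,\ell)$ order parameter on some set, then the operators $D(O_i^\dagger O_j)$ assemble into a $(\delta,\ell')$ disorder parameter on (a unit shift of) the same set, with unchanged $\delta$ and $\ell'=\ell+\mathcal{O}(1)$; running the argument backwards turns a $(\delta,\ell)$ even disorder parameter into a $(\delta,\ell')$ order parameter. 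An \emph{odd} disorder parameter, by contrast, is fermionic and self-dual, so duality produces nothing new in that case --- but there we will not need it.

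I would then combine this with Theorem \ref{mainthm2} and the two commutation constraints. Clustering $b\sim r$ adjacent spins into superspins turns the range-$r$ Hamiltonian into a nearest-neighbour chain of $N=L/b$ superspins of local dimension $d=2^{b}\sim 2^{r}$, so that $\log d=\mathcal{O}(r)$ and Theorem \ref{mainthm2} provides sets $X_o\cup X_{d1}\cup X_{d2}=\{1,\dots,N\}$ carrying, respectively, an order parameter, an even disorder parameter, and an odd disorder parameter, all with $\delta=\tfrac{1}{72}$ and $\ell\le\tilde{\mathcal{O}}((\log d)^3/\epsilon^2)=\tilde{\mathcal{O}}(r^3/\epsilon^2)$. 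At least one of the three sets has $\ge N/3$ points, and an elementary pigeonhole argument then yields four points of that set whose pairwise distances are all $\gtrsim N$. If the large set is $X_o$ or $X_{d1}$, the dictionary supplies a disorder (resp. order) parameter on essentially the same region, so those four interleaved points satisfy the hypotheses of Theorem \ref{mutconst}; if the large set is $X_{d2}$, the same four points satisfy the hypotheses of Theorem \ref{exchconst} directly. Either way we obtain
\begin{equation}
\frac{N}{12}\;\lesssim\;2\ell+\tilde{\mathcal{O}}\!\left(\frac{\log\delta^{-1}}{\epsilon}\right)\;=\;\tilde{\mathcal{O}}\!\left(\frac{r^3}{\epsilon^2}\right).
\end{equation}

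Finally, using $N=L/b\sim L/r$, the last inequality reads $L/r\lesssim\tilde{\mathcal{O}}(r^3/\epsilon^2)$, hence $\epsilon^2\lesssim\tilde{\mathcal{O}}(r^4/L)$ and $\epsilon\le\tilde{\mathcal{O}}(r^2/\sqrt{L})$; converting the hidden logarithms of $L$ into an arbitrarily small factor $L^{\alpha}$ and relaxing the power of $r$ gives the stated bound $\epsilon\le\mathcal{O}(r^3/L^{1/2-\alpha})$ for every $\alpha>0$. I expect the main obstacle to be the dictionary step: making the order/disorder correspondence fully rigorous. One must work with $U_D$ defined only on the even subspace, control the locality and unit shift of the dual operators (hence the growth $\ell\to\ell+\mathcal{O}(1)$), track the strings attached to the individually odd $O_i$ so they recombine into exactly the disorder-string form, and verify the correct parity and norm of the resulting operators so that they genuinely meet Definition \ref{deford} and its disorder analogue. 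The remaining ingredients --- uniqueness $\Rightarrow$ invariance, the clustering bookkeeping, the pigeonhole selection, and inverting the final inequality --- are routine.
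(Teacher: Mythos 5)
Your proposal is correct in outline, but it takes a genuinely different route from the paper's at the key juncture. The paper never touches Theorem \ref{mainthm2} or any pigeonhole argument: it instead observes that self-duality \emph{forces translational invariance}, because $U_D^2$ acts as a unit translation on all even operators, so Theorem \ref{mainthm1} applies directly and yields an order or disorder parameter defined on the whole chain; four equally spaced points then feed Theorem \ref{mutconst} (after the duality dictionary supplies the complementary parameter) or Theorem \ref{exchconst}. You bypass the translation-invariance observation entirely, invoking Theorem \ref{mainthm2} to get $X_o \cup X_{d1} \cup X_{d2} = \{1,\dots,N\}$, pigeonholing a set of size $\geq N/3$, and extracting four pairwise well-separated points. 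That extraction does work --- a greedy selection succeeds whenever $N/3 > 3(2D+1)$, i.e. for separation $D \sim N/18$; your claimed $N/12$ is off but the constant is immaterial --- and the interleaving of order points with dual disorder points (needed for the hypotheses of Theorem \ref{mutconst}) can indeed be arranged as you indicate, since the dual parameter lives on a unit shift of the same set. Your route is somewhat more robust, in that it never needs the $U_D^2$-translation fact, at the cost of the selection and interleaving steps; the paper's route is shorter precisely because translational invariance makes the domains global. Both routes lean equally on the dictionary step (order parameters $\leftrightarrow$ even disorder parameters under $U_D$, including the recombination of strings and the norm/parity/locality checks for operators defined via the even-subspace unitary), which you rightly flag as the main technical burden; the paper asserts this step at essentially the same level of detail.

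One genuine slip in your bookkeeping: after clustering $\sim r$ spins into superspins you must also divide the Hamiltonian by $r$ to meet the norm condition $\|H_{i,i+1}\| \leq 1$ of Eq. (\ref{ham}), so the gap entering Theorems \ref{mainthm2}, \ref{mutconst} and \ref{exchconst} is $\epsilon/r$, not $\epsilon$. This changes the bound on $\ell$ from $\tilde{\mathcal{O}}(r^3/\epsilon^2)$ to $\tilde{\mathcal{O}}\left(r^3/(\epsilon/r)^2\right) = \tilde{\mathcal{O}}(r^5/\epsilon^2)$, and your intermediate conclusion $\epsilon \leq \tilde{\mathcal{O}}(r^2/\sqrt{L})$ to $\epsilon \leq \tilde{\mathcal{O}}(r^3/\sqrt{L})$. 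Because you relax the power of $r$ to $3$ at the end anyway, your final stated bound survives; but as written the $r^2$ claim is unjustified, and the exponent $3$ in Corollary \ref{selfdual} is exactly what the corrected bookkeeping produces.
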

(Here, and in the rest of the paper, $f(x) = \mathcal{O}(g(x))$ means that there exists a constant $C$ such that $|f(x)| \leq C g(x)$ for all $x$). 

This corollary implies that self-dual Ising symmetric spin chains cannot have a gapped, non-degenerate ground state (within the even subspace) in the thermodynamic limit $L \rightarrow \infty$. 
\begin{proof}
Let $|\Omega\>$ be the lowest energy even eigenstate. Given that the spin chain is self-dual we know that $U_D |\Omega\> \propto |\Omega\>$. To understand the implications of this, notice that the duality transformation maps order parameters onto \emph{even} disorder parameters and vice-versa, e.g.
\begin{align*}
U_D^\dagger \left(\prod_{k=i+1}^j \sigma^x_k \right) U_D = \sigma^z_i \sigma^z_j, \quad \quad \quad
U_D^\dagger (\sigma^z_i \sigma^z_j) U_D = \prod_{k=i}^{j-1} \sigma^x_k
\end{align*}
It follows that if $|\Omega\>$ has a $(\delta,\ell)$ order parameter, then it also has an even $(\delta,\ell)$ disorder parameter and vice-versa. 

Next observe that every self-dual Hamiltonian is translationally invariant since the square of the duality transformation acts like a unit translation on all even operators. This means that we can apply Theorem \ref{mainthm1} which tells us that $|\Omega\>$ has either a $(\delta, \ell)$ order parameter or a $(\delta, \ell)$ disorder parameter with
\begin{align}
\delta = \frac{1}{72}, \quad \quad \ell \leq \tilde{\mathcal{O}}\left(\frac{r^3}{(\epsilon/r)^2}\right)
\label{ellbound}
\end{align}
Here, the bound on $\ell$ comes from the fact that, in order to apply Theorem \ref{mainthm1}, we first need to cluster $r$ adjacent spin-$1/2$'s into superspins of dimension $d = 2^r$ in order to convert our spin chain into one with nearest neighbor interactions. In addition to clustering, we also have to multiply the resulting superspin Hamiltonian by $1/r$ in order to satisfy the norm condition (\ref{ham}); this leads to a gap of $\epsilon/r$ and explains the factor of $\epsilon/r$ in the denominator of (\ref{ellbound}).

Combining this observation with the previous one, we conclude that either (a) $|\Omega\>$ has \emph{both} a $(\delta, \ell)$ order parameter and an even $(\delta, \ell)$ disorder parameter obeying (\ref{ellbound}) or (b) $|\Omega\>$ has an odd $(\delta, \ell)$ disorder parameter obeying (\ref{ellbound}). In the first case, Theorem \ref{mutconst} implies that $\bar{L} \leq 8\ell + \tilde{\mathcal{O}}(\frac{r}{\epsilon})$ where $\bar{L} \equiv \frac{L}{r}$ is the length of our spin chain after clustering together $r$ adjacent spin-$1/2$'s into superspins. In the second case, we again deduce that $\bar{L} \leq 8\ell + \tilde{\mathcal{O}}(\frac{r}{\epsilon})$ --- this time using Theorem \ref{exchconst}. 

The last step is to use the bound on $\ell$ (\ref{ellbound}), which gives the inequality $L \leq \tilde{\mathcal{O}}(\frac{r^6}{\epsilon^2})$. It then follows from our definition of $\tilde{\mathcal{O}}$ that $L \leq \mathcal{O}([\frac{r^6}{\epsilon^2}]^{1+2\alpha})$ for every $\alpha > 0$. Inverting this inequality gives the desired bound $\epsilon \leq \mathcal{O}(\frac{r^3}{L^{1/2-\alpha}})$.
\end{proof}  

\section{Proof of main result}
\label{sec-main-theorem}
In this section, we prove Theorems \ref{mainthm1} and \ref{mainthm2}. We begin with some definitions.

\subsection{Definitions}
First some notation: for each subset $X \subset \{1,...,L\}$, let $S_{X}$ be the symmetry transformation restricted to $X$, i.e. 
\begin{align*}
S_{X} \equiv \prod_{i \in X} S_i.
\end{align*}
Next, we define a notion of when a state $|\psi\>$ is ``$\delta$ strongly-ordered'' on a pair of intervals $I_1, I_2$: 
\begin{definition}\label{strongord}
A state $|\psi\>$ is $\delta$ strongly-ordered on disjoint intervals $I_1, I_2$ if there exist operators $O_1$ and $O_2$ such that:
\begin{enumerate}
\item{$O_1$ and $O_2$ are supported on $I_1$ and $I_2$.}
\item{$O_1$ and $O_2$ are odd under $S$.}
\item{$|\<O_1 O_2\>_\psi| \geq \delta$.~\footnote{We use the condition $|\<O_1 O_2\>_\psi| \geq \delta$ instead of $|\<O_1^\dagger O_2\>_\psi| \geq \delta$ for two reasons: (1) the first condition is simpler and (2) the two conditions are equivalent since one can replace $O_1 \rightarrow O_1^\dagger$.}}
\item{$\|O_1\|, \|O_2\| \leq 1$.}
\end{enumerate}
\end{definition}
We also define a notion of when a state is ``$\delta$ \emph{weakly}-ordered'' on a pair of intervals $I_1, I_2$:
\begin{definition}
A state $|\psi\>$ is $\delta$ weakly-ordered on disjoint intervals $I_1, I_2$ if there exists an operator $A$ such that: 
\begin{enumerate}
\item{$A$ is supported on $I_1 \cup I_2$.}
\item{$A$ is odd under $S_{I_1}$ and $S_{I_2}$.}
\item{$|\<A\>_\psi| \geq \delta$.}
\item{$\|A\| \leq 1$.}
\end{enumerate}
\end{definition}
To get some intuition about the above definition, notice that we can always expand $A$ as a linear combination $A = \sum_\alpha O_{1\alpha} O_{2\alpha}$ where $O_{1\alpha}, O_{2\alpha}$ are operators supported on $I_1, I_2$. The second condition then amounts to the requirement that $O_{1 \alpha}$ and $O_{2 \alpha}$ are odd under $S$ for all $\alpha$. Likewise, the third condition becomes $|\sum_\alpha \< O_{1\alpha} O_{2\alpha}\>_\psi| \geq \delta$. From this point of view, we can see that the main difference between $\delta$ weak order and $\delta$ strong order is that the former characterizes the correlations that can be detected by \emph{entangled} operators of the form $\sum_\alpha O_{1\alpha} O_{2\alpha}$, while the latter characterizes the correlations that can be detected by simple products $O_1 O_2$. 
%In particular, strong order implies weak order: if $O_1, O_2$ satisfy the conditions for $\delta$ strong order, then $A = O_1 O_2$ satisfies the conditions for $\delta$ weak order.

We define the notion of $\delta$ strong disorder and $\delta$ weak disorder in a similar fashion:
\begin{definition}
A state $|\psi\>$ is $\delta$ strongly-disordered on disjoint, non-adjacent intervals $I_1, I_2$ if there exist operators $O_1$ and $O_2$ such that
\begin{enumerate}
\item{$O_1$ and $O_2$ are supported on $I_1$ and $I_2$.}
\item{$O_1$ and $O_2$ are both even or both odd under $S$.}
\item{$|\<O_1 O_2 S_{J}\>_\psi| \geq \delta$ where $J$ is the interval between $I_1$ and $I_2$, going clockwise.}
\item{$\|O_1\|, \|O_2\| \leq 1$.}
\end{enumerate}
\end{definition}
Notice that $O_1, O_2$ are either both even or both odd under $S$: in the former case, we will say that $|\psi\>$ is $\delta$ strongly-disordered with \emph{even} parity, while in the latter case we will say that $|\psi\>$ is $\delta$ strongly-disordered with \emph{odd} parity.

\begin{definition}
A state $|\psi\>$ is $\delta$ weakly-disordered on disjoint, non-adjacent intervals $I_1, I_2$ if there exists an operator $A$ such that
\begin{enumerate}\label{weakdisord}
\item{$A$ is supported on $I_1 \cup I_2$.}
\item{$A$ is even under both $S_{I_1}, S_{I_2}$ or odd under both $S_{I_1}, S_{I_2}$.}
\item{$|\<A S_{J}\>_\psi| \geq \delta$ where $J$ is the interval between $I_1$ and $I_2$, going clockwise}.
\item{$\|A\| \leq 1$.}
\end{enumerate}
\end{definition}
Similarly to above, we will say that $|\psi\>$ is $\delta$ weakly-disordered with even or odd parity depending on whether $A$ is even or odd under $S_{I_1}, S_{I_2}$.

\begin{figure}[tb]
\includegraphics[width=.25\columnwidth]{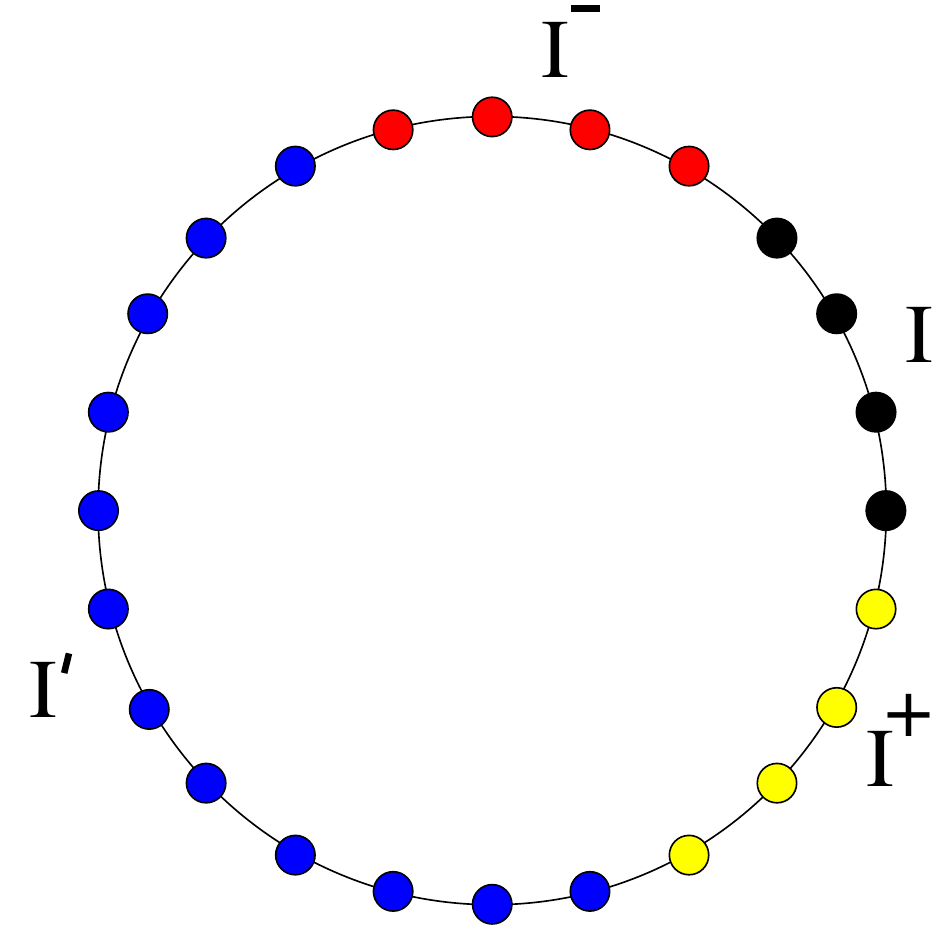}
\centering
\caption{Example of $I^\pm$ notation: the spins in the interval $I$ are shown in black; the spins in $I^-$ and $I^+$ are shown in red and yellow; and the
spins in $I' \equiv (I^- \cup I \cup I^+)^c$ are shown in blue.}
\label{Iplusminus}
\end{figure}

\subsection{Main argument}

Before giving the proof of Theorems \ref{mainthm1}-\ref{mainthm2}, we need to introduce one last piece of notation: given an interval $I \subset \{1,...,L\}$ we define $I^+$ and $I^-$ to be the two intervals that are adjacent to $I$ and have the same length as $I$ (see Fig. \ref{Iplusminus}). Also, we define $I'$ to be the interval 
\begin{align}
I' \equiv (I^- \cup I \cup I^+)^c.
\end{align}

\begin{proof} (of Theorems \ref{mainthm1} and \ref{mainthm2}) We start by proving Theorem \ref{mainthm2}: we show that there exists three sets 
$X_o, X_{d1}, X_{d2}$ with $X_o \cup X_{d1} \cup X_{d2} = \{1,...,L\}$, such that $|\Omega\>$ has a $(\delta, \ell)$ order parameter on $X_o$, a $(\delta, \ell)$ disorder parameter that is even under $S$ on $X_{d1}$, and a $(\delta, \ell)$ disorder parameter that is odd under $S$ on $X_{d2}$. 

We proceed in three steps --- each of which depends on an associated lemma. We give the proofs of these lemmas in the next section. The first lemma is a quantum information theory result that applies to \emph{any} state $|\psi\>$ that has a definite parity under $S$:
\begin{lemma}
\label{orddis}
Let $|\psi\>$ be a state that is even or odd under $S$. For every pair of disjoint, non-adjacent intervals $I_1, I_2$, the state $|\psi\>$ is either $\delta$ weakly-ordered on $I_1, I_2$ or $(1-\delta)/2$ weakly-disordered on the complementary intervals $J_1, J_2$, where the intervals are ordered $(J_1, I_1, J_2, I_2)$, going clockwise.
\end{lemma}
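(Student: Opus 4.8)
The plan is to prove the contrapositive: assuming $|\psi\rangle$ is \emph{not} $\delta$ weakly-ordered on $I_1,I_2$, I will exhibit a $(1-\delta)/2$ weak disorder parameter on $J_1,J_2$ (with the inserted interval $K=I_1$, which is indeed the clockwise interval from $J_1$ to $J_2$). Set $M=I_1\cup I_2$, $M^c=J_1\cup J_2$, and take the total parity $s=+1$ without loss of generality. Decompose $\mathcal{H}_M$ into the four sectors labeled by the eigenvalues $(\epsilon_1,\epsilon_2)$ of $S_{I_1},S_{I_2}$, write $|\psi^{\epsilon_1\epsilon_2}\rangle=(P^M_{\epsilon_1\epsilon_2}\otimes I)|\psi\rangle$ with weights $p_{\epsilon_1\epsilon_2}=\||\psi^{\epsilon_1\epsilon_2}\rangle\|^2$, and let $\rho^{\epsilon_1\epsilon_2}_{M^c}=\mathrm{Tr}_M|\psi^{\epsilon_1\epsilon_2}\rangle\langle\psi^{\epsilon_1\epsilon_2}|$. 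Because $S|\psi\rangle=|\psi\rangle$, the $M^c$-support of $\rho^{\epsilon_1\epsilon_2}_{M^c}$ lies in the sector $S_{J_1}S_{J_2}=\epsilon_1\epsilon_2$.

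The first step is to reduce both \emph{weak} notions to trace norms using $\max_{\|A\|\le1}|\mathrm{Tr}(AX)|=\|X\|_1$ together with the parity constraints, which restrict $A$ to specific blocks. Since an order operator must be odd under both $S_{I_1}$ and $S_{I_2}$, it only connects $(\epsilon_1,\epsilon_2)\leftrightarrow(-\epsilon_1,-\epsilon_2)$, so the weak-order strength equals $2\|\rho_{++,--}\|_1+2\|\rho_{+-,-+}\|_1$ where $\rho_{\cdots}$ are the off-diagonal blocks of $\rho_M=\mathrm{Tr}_{M^c}|\psi\rangle\langle\psi|$. Thus ``not $\delta$ weakly-ordered'' gives the single inequality $\|\rho_{++,--}\|_1+\|\rho_{+-,-+}\|_1<\delta/2$. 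For disorder, let $T=\mathrm{Tr}_M\!\big(S_{I_1}|\psi\rangle\langle\psi|\big)$, an operator on $M^c$; then an even (resp. odd) disorder operator $A'$ (block-diagonal, resp. block-off-diagonal in the $S_{J_1},S_{J_2}$ sectors) makes $\langle A' S_{I_1}\rangle_\psi=\mathrm{Tr}(A'T)$, so the even and odd disorder strengths are exactly the sums of the diagonal, resp. off-diagonal, block trace norms of $T$.

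The second step is the key algebraic identity. Because $S_{I_1}|\psi\rangle=\sum_{\epsilon}\epsilon_1|\psi^{\epsilon}\rangle$ and the partial trace over $M$ kills cross terms between orthogonal $M$-sectors, one gets $T=\sum_{\epsilon_1\epsilon_2}\epsilon_1\,\rho^{\epsilon_1\epsilon_2}_{M^c}=T^{+}\oplus T^{-}$, where $T^{+}=\rho^{++}_{M^c}-\rho^{--}_{M^c}$ lives in the $S_{J_1}S_{J_2}=+1$ subspace and $T^{-}=\rho^{+-}_{M^c}-\rho^{-+}_{M^c}$ in the $-1$ subspace, so $\|T\|_1=\|T^{+}\|_1+\|T^{-}\|_1$. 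In parallel, Uhlmann's theorem applied to the purifications $|\psi^{++}\rangle,|\psi^{--}\rangle$ (varying unitaries on the purifying system $M$) gives the identity $\|\rho_{++,--}\|_1=\sqrt{p_{++}p_{--}}\,F(\hat\rho^{++}_{M^c},\hat\rho^{--}_{M^c})$ relating the order block to the fidelity of the normalized $M^c$-reductions, and likewise for $\rho_{+-,-+}$.

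The third step, which I expect to be the main obstacle, is the quantitative bridge: the weighted Fuchs--van de Graaf-type inequality $\|a\rho-b\sigma\|_1\ge a+b-2\sqrt{ab}\,F(\rho,\sigma)$. Applied to $T^{\pm}$ this yields $\|T^{+}\|_1\ge (p_{++}+p_{--})-2\|\rho_{++,--}\|_1$ and analogously for $T^{-}$, so that $\|T\|_1\ge 1-2\big(\|\rho_{++,--}\|_1+\|\rho_{+-,-+}\|_1\big)>1-\delta$ by Step~1. Finally, since the total disorder strength (even plus odd) equals the sum of all block trace norms of $T$ and hence is at least $\|T^{+}\|_1+\|T^{-}\|_1=\|T\|_1$, at least one of the even or odd disorder strengths is $\ge\tfrac12\|T\|_1>\tfrac{1-\delta}{2}$, producing the claimed weak disorder parameter. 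The delicate points are verifying the weighted inequality (and its tightness, which is what forces the precise constant $(1-\delta)/2$ rather than something weaker) and the careful bookkeeping of factors of $2$ across the order reduction, so that ``not $\delta$ ordered'' feeds in as a bound on the \emph{sum} $\|\rho_{++,--}\|_1+\|\rho_{+-,-+}\|_1$.
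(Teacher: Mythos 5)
Your proposal is correct, but it takes a genuinely different route from the paper's. The paper applies the Fuchs--van de Graaf inequality once, globally, to the pair $|\psi\rangle$ and $S_{I_1}|\psi\rangle$ reduced on $I_1\cup I_2$, and only afterwards enforces the parity requirements by symmetrizing the optimizers: $B=\frac{1}{4}(A_*-S_{I_1}A_*S_{I_1})+\frac{1}{4}S(A_*-S_{I_1}A_*S_{I_1})S$ on the order side, and the even/odd split $U_\pm=\frac{1}{4}(U_*+SU_*S)\pm\frac{1}{4}S_{J_1}(U_*+SU_*S)S_{J_1}$ on the disorder side; the constant $(1-\delta)/2$ comes from that last split. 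You instead argue in the contrapositive and build the parity sectors in from the start: the weak-order strength becomes the off-diagonal block trace norms of $\rho_M$ (which, because $\rho_M$ commutes with $S_{I_1}S_{I_2}$, coincides exactly with the paper's trace-distance term), the disorder strengths become block trace norms of $T=\mathrm{Tr}_M(S_{I_1}|\psi\rangle\langle\psi|)$, the exact identity $\|\rho_{++,--}\|_1=\sqrt{p_{++}p_{--}}\,F(\hat\rho^{++}_{M^c},\hat\rho^{--}_{M^c})$ stands in for Uhlmann's theorem, and a weighted Fuchs--van de Graaf bound is run sector by sector; your factor of $2$ arises from splitting $\|T\|_1$ between the even and odd disorder classes rather than from splitting a unitary. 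The one ingredient you must supply that the paper does not need is the subnormalized inequality $\|a\rho-b\sigma\|_1\geq a+b-2\sqrt{ab}\,F(\rho,\sigma)$; it is true --- run the standard FvdG lower-bound argument through the POVM characterization of fidelity, noting that the elementary step $\frac{1}{2}\sum_m(\sqrt{p_m}-\sqrt{q_m})^2\leq\frac{1}{2}\sum_m|p_m-q_m|$ never uses normalization --- and, contrary to your worry, only the inequality (not its tightness) is needed to land on $(1-\delta)/2$. What the paper's version buys is brevity and purely off-the-shelf inputs; what yours buys is an explicit accounting of each parity sector and the slightly sharper conclusion that the even and odd disorder strengths \emph{sum} to more than $1-\delta$.
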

Lemma \ref{orddis} is useful because it gives a way to construct the three sets $X_o$, $X_{d1}$ and $X_{d2}$. Fix an integer $\ell > 0$, to be chosen later, and for each $i \in \{1,...,L\}$, define $I_i \equiv [i+1, i+\ell]$. We claim that the state $|\Omega\>$ is either $1/3$ weakly-ordered on $I_i, I_i'$ or $1/3$ weakly-disordered on $I_i^-, (I_i^-)'$ for each $i$. To see this, apply Lemma \ref{orddis} to the two intervals $I_i, I_i'$ with $\delta = 1/3$. The lemma tells us that $|\Omega\>$ is either $1/3$ weakly-ordered on $I_i, I_i'$ or $1/3$ weakly-disordered on the complementary intervals $I_i^-, I_i^+$. This implies our claim since every state that is $1/3$ weakly-disordered on $I_i^-, I_i^+$ is also $1/3$ weakly disordered on $I_i^-, (I_i^-)'$ since $I_i^+ \subset (I_i^-)'$. We now define:
\begin{align*}
X_o &= \{i: \text{$|\Omega\>$ is $1/3$ weakly-ordered on $I_i, I_i'$}\} \\
X_{d1} &= \{i: \text{$|\Omega\>$ is $1/3$ weakly-disordered on $I_i^-, (I_i^-)'$ with even parity}\} \\
X_{d2} &= \{i: \text{$|\Omega\>$ is $1/3$ weakly-disordered on $I_i^-, (I_i^-)'$ with odd parity}\}
\end{align*}
By construction, $X_o \cup X_{d1} \cup X_{d2} = \{1,...,L\}$.
%\footnote{Note that the three sets $X_o, X_{d1}, X_{d2}$ may overlap with one another.} 
To proceed further, we invoke the following lemma:
\begin{lemma}\label{weakimpstrong}
For every $\delta > 0$, there exists a length\footnote{Here, we treat $\delta$ as a constant when using $\tilde{\mathcal{O}}$ notation.} $\lambda = \tilde{\mathcal{O}}(\frac{(\log d)^3}{\epsilon^2})$ such that if $|\Omega\>$ is $\delta$ weakly-ordered on $I_1, I_2$ then it is $\delta/2$ strongly-ordered on $I_1, I_2$ for all $I_1, I_2$ that are separated by a distance of at least $\lambda$.  Likewise, if $|\Omega\>$ is $\delta$ weakly-disordered on $I_1, I_2$ with even (odd) parity then it is $\delta/2$ strongly-disordered on $I_1, I_2$ with even (odd) parity for all $I_1, I_2$ separated by at least $\lambda$.  
\end{lemma}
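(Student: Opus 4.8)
The plan is to work entirely with the reduced density matrix $\rho \equiv \rho_{I_1 \cup I_2}$ of $|\Omega\rangle$ and to regard both notions of order as linear functionals on operators supported on $I_1\cup I_2$: weak order asserts that some entangled (across the cut $I_1|I_2$), symmetry-odd operator $A$ with $\|A\|\le 1$ satisfies $|\mathrm{Tr}(\rho A)|\ge\delta$, while strong order asks for the same inequality (with $\delta\to\delta/2$) for a single \emph{product} $O_1 O_2$. The disorder statement is identical after dressing every operator by the fixed string $S_J$ supported on the gap $J$ between $I_1$ and $I_2$, which commutes through the construction, so I would treat both cases in parallel and absorb $S_J$ into the definition of the correlation functional.

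First I would record the structural fact that, because $|\Omega\rangle$ has definite global parity, each single-region reduced state $\rho_{I_1},\rho_{I_2}$ commutes with $S_{I_1},S_{I_2}$ and is therefore block diagonal in the even/odd sectors; consequently every symmetry-odd operator has vanishing one-region expectation and $\mathrm{Tr}(\rho A)$ is a genuine \emph{connected} correlator. This is what lets the two one-dimensional inputs do their work. From the exponential decay of connected correlations in gapped ground states~\cite{Hastings_area_law,Hamzaetal} I would conclude that, once $I_1,I_2$ are separated by $\lambda$, the ``ordinary'' (non-order-parameter) part of these correlations is negligible, while from the area law~\cite{AKLV} I would obtain an approximation $\|\rho-\tilde\rho\|_1\le\eta$ by a state $\tilde\rho$ whose operator-Schmidt rank across $I_1|I_2$ is bounded. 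The length scale $\lambda=\tilde{\mathcal{O}}((\log d)^3/\epsilon^2)$ is then dictated by pushing $\eta$ (and the decaying tail) below the error budget needed to preserve the constant $\delta/2$.

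With $\rho$ replaced by the finite-rank $\tilde\rho$, the correlation $A\mapsto\mathrm{Tr}(\tilde\rho A)$ becomes a bilinear form of bounded rank on the symmetry-odd operators of $I_1$ and of $I_2$, and the task reduces to showing that a norm-bounded entangled probe cannot outperform the best product probe by more than a factor of two. I expect to extract the product operators through a distinguishability argument: interpreting the surviving long-range correlation as correlation between a small number of distinguishable ``broken'' reduced states on each side, I would use the Fuchs--van de Graaf inequality~\cite{Fuchs} to turn large trace distance between these reduced states into a single norm-bounded operator $O_1$ on $I_1$ (and likewise $O_2$ on $I_2$) that detects the correlation. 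The factor $\delta\to\delta/2$ I anticipate to arise naturally from the (at best) equal-weight mixing of the two branches rather than from any rank rounding.

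The main obstacle is precisely this disentangling step: converting an \emph{arbitrary} entangled, norm-bounded probe into a \emph{single} product with only a factor-of-two loss, uniformly in the (possibly large) Schmidt rank of $\tilde\rho$. A naive rounding of a rank-$D$ bilinear form to rank one loses a factor growing with $D$, so the argument must use more than finite-dimensionality --- it must use that the symmetry block structure together with decay of correlations forces the long-range part of the correlation into effectively one channel. A secondary but genuine hazard is that the gap $\epsilon$ is assumed only \emph{within the even sector}: the penalty $\frac{1}{2}(1-S)$ that would isolate $|\Omega\rangle$ as a unique gapped ground state is non-local, so I would have to invoke the area-law and decay-of-correlation inputs in a sector-respecting form rather than citing them off the shelf.
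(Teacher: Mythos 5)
You have set up the right scaffolding (a sector-adapted area-law truncation, a trace-distance error budget, and the correct worry that the Hastings and AKLV inputs must be invoked in a form respecting the gap-within-the-even-sector assumption), and you have correctly located the crux of the lemma --- but you have not solved it, and you say so yourself. The ``main obstacle'' you name, namely converting an arbitrary entangled norm-bounded probe into a single product probe with only an $\mathcal{O}(1)$ loss uniformly in the Schmidt rank, \emph{is} the entire content of the lemma, and the instrument you propose for it (a Fuchs--van de Graaf distinguishability argument applied to ``broken'' reduced states on each side) is not the tool that accomplishes this; in the paper, Fuchs--van de Graaf is what powers the order/disorder dichotomy of Lemma \ref{orddis}, and it plays no role in the proof of Lemma \ref{weakimpstrong}. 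As written, your proposal is a well-organized statement of the problem rather than a proof of it.

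The idea that fills the gap is Proposition \ref{factor} (generalized short-range correlations), which is itself a consequence of Hastings' factorization of the ground-state projector (Theorem \ref{hastthm}, via Propositions \ref{tildeO} and \ref{src}): there exist \emph{fixed} operators $\rho_1,\rho_2$ supported on $I_1,I_2$ such that $\langle\Omega|O_1 O_2|\Omega\rangle \approx \mathrm{Tr}(O_1\rho_1)\,\mathrm{Tr}(O_2\rho_2)$ up to an error $h(\ell)$, \emph{uniformly} over all symmetry-odd probes $O_1,O_2$ of norm at most $1$ (with string insertions for the disordered case); the proof of that proposition extracts the ``single channel'' you correctly intuited by taking $U_1,U_2$ to maximize $|\langle\Omega|U_1U_2|\Omega\rangle|$ and dividing through by this maximum. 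Given this uniform factorization, the paper truncates $|\Omega\rangle$ across the single cut $I_1|I_1^c$ using Theorem \ref{schmidtthm} (rather than approximating $\rho_{I_1\cup I_2}$ in operator-Schmidt rank, as you propose), truncates $A$ in the resulting Schmidt basis to $A_{\mathrm{trun}}$ with $s^2$ product terms, and applies the factorization term by term at cost $s^2 h(\ell)$ --- this product of $s^2 \sim e^{\tilde{\mathcal{O}}((\log d)^3/\epsilon)}$ against $h(\ell)\sim e^{-c_2\epsilon\ell}$ is exactly what dictates $\lambda = \tilde{\mathcal{O}}((\log d)^3/\epsilon^2)$. The punchline is then that a \emph{product} functional $B\mapsto \mathrm{Tr}(B[\rho_1\otimes\rho_2])$ is maximized, exactly and with zero loss, by a product operator $O_{1*}\otimes O_{2*}$ in the relevant symmetry class, so the rank-dependent rounding loss you feared never arises. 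Relatedly, your anticipation that the factor $\delta\to\delta/2$ comes from ``equal-weight mixing of two branches'' is incorrect: it is purely bookkeeping, $\delta - \delta/4 - \delta/4$, with $\Delta$ chosen so that $4\sqrt{2\Delta}=\delta/4$ and $\lambda$ chosen so that $(s^2+1)h(\lambda)\leq\delta/4$.
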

Lemma \ref{weakimpstrong} is important because it allows us to upgrade from \emph{weak} order/disorder to \emph{strong} order/disorder. This upgrade, in turn, allows us to construct order and disorder parameters. To construct an order parameter, we apply the above lemma with $\delta = 1/3$. We then choose $\ell$ to be larger than the length $\lambda$ given in the above lemma. With this choice of $\ell$, it follows that $|\Omega\>$ is $1/6$ strongly-ordered on $I_i, I_i'$ for each $i \in X_o$. This means that, for each $i \in X_o$, there exists operators $O_i, O_i'$ supported on $I_i, I_i'$ that are odd under the symmetry and satisfy $|\<O_i O_i'\>_\Omega| \geq 1/6$. We define our order parameter to be the corresponding $O_i$ operators: $\{O_i : i \in X_o\}$. Likewise, to construct \emph{disorder} parameters, we use the lemma to deduce that, for each $i \in X_{d1}$ ($i \in X_{d2}$), there exists operators $O_i, O_i'$ supported on $I_i^-, (I_i^-)'$ that are even (odd) under the symmetry and satisfy $|\<O_i O_i' S_{I_i}\>_\Omega| \geq 1/6$. We define the disorder parameters on $X_{d1}, X_{d2}$ to be the corresponding $O_i$ operators: $\{O_i : i \in X_{d1}\}$ and $\{O_i : i \in X_{d2}\}$. 

All that remains is to show that the above order and disorder parameters have long range correlations. We do this with the help of the following lemma:
\begin{lemma}\label{ordparam}
Let $I_1, I_2$ be two disjoint intervals and let $\ell_{\text{min}} = \text{min}(|I_1|, |I_2|, \text{dist}(I_1, I_2))$. For any operators $O_1, O_1', O_2, O_2'$ supported on $I_1, I_1', I_2, I_2'$ that are all odd or all even under $S$ and have norm of at most $1$,
\begin{align}
|\<O_1 O_2\>_\Omega| &\geq |\<O_1 O_1'\>_\Omega| \cdot |\<O_2 O_2'\>_\Omega| - f(\ell_{\text{min}}) \label{o1o2ineq0}\\
|\<O_1 O_2 S_{J \cup I_2}\>_\Omega| &\geq |\<O_1 O_1' S_{I_{1}^+}\>_\Omega| \cdot |\<O_2 O_2' S_{I_{2}^+}\>_\Omega| - f(\ell_{\text{min}}) \label{o1o2sineq0}
\end{align}
where $f(\ell) = \text{poly}(\ell,\epsilon^{-1}) e^{-c \epsilon \ell}$ for some $c > 0$, and $J$ denotes the interval between $I_1$ and $I_2$ (going clockwise).
\end{lemma}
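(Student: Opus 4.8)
The plan is to reduce both inequalities to a single approximate factorization (``crossing'') identity and to prove that identity using exponential clustering of correlations for operators that are \emph{even} under $S$. The crucial preliminary observation is that, although $|\Omega\>$ need not be the ground state of the full Hamiltonian (so ordinary clustering may fail for odd operators such as the $O_i$ themselves), it \emph{is} the unique lowest-energy state in the even sector, with gap $\epsilon$ there. Since an even operator maps the even sector into itself, the Hastings/Hamza-style clustering bound applies verbatim inside the even sector: for any even operators $A,B$ supported on regions separated by distance $r$, one has $|\<AB\>_\Omega - \<A\>_\Omega\<B\>_\Omega| \le \mathrm{poly}(r,\epsilon^{-1})\,e^{-c\epsilon r}\|A\|\|B\|$. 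This is the only analytic input beyond elementary manipulations.

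For the order inequality (\ref{o1o2ineq0}), first note that each individual operator is odd, so $\<O_1\>_\Omega=\<O_1'\>_\Omega=\<O_2\>_\Omega=\<O_2'\>_\Omega=0$, while the \emph{products} $O_1O_2$, $O_1'O_2'$, $O_1O_1'$, $O_2O_2'$ are all even. I would establish the crossing relation $\<O_1O_1'\>_\Omega\,\<O_2O_2'\>_\Omega \approx \<O_1O_2\>_\Omega\,\<O_1'O_2'\>_\Omega$ up to an error of order $f(\ell_{\mathrm{min}})$. Granting this, the inequality is immediate from the norm bound $|\<O_1'O_2'\>_\Omega|\le\|O_1'\|\,\|O_2'\|\le 1$, since then $|\<O_1O_2\>_\Omega|\ge|\<O_1O_2\>_\Omega\,\<O_1'O_2'\>_\Omega|\ge|\<O_1O_1'\>_\Omega\,\<O_2O_2'\>_\Omega|-f(\ell_{\mathrm{min}})$.

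To prove the crossing relation I would apply the even-sector clustering bound to even operators anchored near the two well-separated locations $I_1$ and $I_2$. The three relevant separations --- $\mathrm{dist}(I_1,I_2)$, $\mathrm{dist}(I_1,I_1')$, and $\mathrm{dist}(I_2,I_2')$ --- are each at least $\ell_{\mathrm{min}}$, because the buffer intervals $I_1^\pm,I_2^\pm$ have width $|I_1|,|I_2|\ge\ell_{\mathrm{min}}$. The idea is to cut the ring at two points lying in the gaps between $I_1$ and $I_2$, decompose the spread-out operators $O_1',O_2'$ across this cut, and use clustering to discard the cross terms; what survives reassembles into $\<O_1O_2\>_\Omega\,\<O_1'O_2'\>_\Omega$. \textbf{The main obstacle is exactly this factorization for the primed operators:} $O_1'$ and $O_2'$ are supported on the \emph{large} intervals $I_1',I_2'$, which overlap each other and overlap the opposite anchor, so one cannot simply cluster a single four-point function. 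The argument must be organized so that the decay is governed by the genuine separations ($\ge\ell_{\mathrm{min}}$) between the anchors and their partners, not by the (uncontrolled) sizes of the supports, and so that the polynomial clustering prefactors stay under control in one dimension.

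For the disorder inequality (\ref{o1o2sineq0}) the same scheme applies once one observes that each disorder operator \emph{together with its string} is even under the global symmetry: since $O_i,O_i'$ share a common parity and $S$ commutes with $S_{I_i^+}$, one checks $S\,(O_i\,S_{I_i^+}\,O_i')\,S = O_i\,S_{I_i^+}\,O_i'$. Hence clustering again applies to these composite even operators, and the factorization goes through as before, ending with the analogous norm bound $|\<O_1'O_2'\,S_{\cdots}\>_\Omega|\le 1$. The one extra, and admittedly fiddly, bookkeeping step is to reconcile the string regions, converting the strings $S_{I_1^+}$ and $S_{I_2^+}$ into the target $S_{J\cup I_2}$; here I would use that $|\Omega\>$ is even, so that $\<\,\cdot\,S_R\>_\Omega=\<\,\cdot\,S_{R^c}\>_\Omega$ for any region $R$, which gives the freedom to slide and complement strings until they match.
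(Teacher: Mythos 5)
Your reduction of (\ref{o1o2ineq0}) to the crossing relation $\langle O_1O_1'\rangle_\Omega\,\langle O_2O_2'\rangle_\Omega \approx \langle O_1O_2\rangle_\Omega\,\langle O_1'O_2'\rangle_\Omega$ plus the bound $|\langle O_1'O_2'\rangle_\Omega|\le 1$ is sound, and even-sector clustering is a legitimate analytic input (it is essentially the paper's Proposition \ref{src}). But the crossing relation \emph{is} the hard content of the lemma, and your proposal leaves it unproven --- you flag it yourself as ``the main obstacle.'' The route you sketch (cut the ring in the gaps, decompose $O_1',O_2'$ across the cut, cluster term by term) does not go through: a decomposition $O_1'=\sum_\alpha A_\alpha\otimes B_\alpha$ across a cut generically has a number of terms exponential in the size of the regions involved, with no control on the norms of the individual terms, and the terms need not individually be even under $S$, so even-sector clustering cannot be applied to them one by one; summing the per-term errors then gives nothing like $f(\ell_{\text{min}})=\text{poly}(\ell_{\text{min}},\epsilon^{-1})e^{-c\epsilon\ell_{\text{min}}}$. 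The only tool in this circle of ideas that controls such a decomposition is a Schmidt-rank (area-law) bound, i.e.\ Theorem \ref{schmidtthm}; that is precisely what the paper uses in the \emph{other} lemma (Lemma \ref{weakimpstrong}), and there the price is a multiplicative factor $s^2$ with $s=e^{\tilde{\mathcal{O}}((\log d)^3/\epsilon)}$, which is only overcome at separations $\ell\gtrsim(\log d)^3/\epsilon^2$. An error of that form is $d$-dependent and would not prove Lemma \ref{ordparam} as stated.

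The idea you are missing is that the primed operators never need to be localized, factorized, or clustered at all. The paper inserts Hastings' approximate ground-state projectors $P_I(\ell)$ (Theorem \ref{hastthm}) and converts all correlators into norms of states via Proposition \ref{tildeO}: $\|P_1P_2O_1O_2|\Omega\rangle\|\le|\langle O_1O_2\rangle_\Omega|+\mathcal{O}(\sqrt{g})$ and $\|P_iP_i'O_iO_i'|\Omega\rangle\|\ge|\langle O_iO_i'\rangle_\Omega|-\mathcal{O}(\sqrt{g})$, where $g$ is the function from Theorem \ref{hastthm}. Clustering (Proposition \ref{src}) is applied only to the even, \emph{small-support} operators $O_i^\dagger P_iO_i$, giving $\|P_1P_2O_1O_2|\Omega\rangle\|^2\ge\|P_1O_1|\Omega\rangle\|^2\,\|P_2O_2|\Omega\rangle\|^2-\mathcal{O}(\sqrt{g})$. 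The large operators $O_i'$ then enter only through the trivial contraction $\|P_iO_i|\Omega\rangle\|\ge\|P_i'O_i'P_iO_i|\Omega\rangle\|$, valid simply because $\|P_i'O_i'\|\le1$; no locality of $O_i'$ is ever invoked, and $P_i'O_i'P_iO_i=P_iP_i'O_iO_i'$ since all four factors have disjoint supports. Chaining these inequalities yields (\ref{o1o2ineq0}); the string identity $S_{J\cup I_2}(O_1^\dagger P_1O_1)S_{J\cup I_2}=S_{I_1^+}(O_1^\dagger P_1O_1)S_{I_1^+}$ (close in spirit to your string-sliding step, which is fine) then yields (\ref{o1o2sineq0}). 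Without this projector-plus-contraction mechanism, or some substitute for it, your outline cannot be completed at the stated error level.
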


To apply Lemma \ref{ordparam}, consider any $i,j \in X_o$ with $\mathrm{dist}(i,j) > 2 \ell$. Then, it is easy to see that
\begin{align*}
\min(|I_i|, |I_j|, \text{dist}(I_i, I_j)) = \ell
\end{align*}
Therefore the inequality (\ref{o1o2ineq0}), with $\{I_1, I_2\} = \{I_i, I_j\}$ and $\{O_1, O_1', O_2, O_2'\} = \{O_i^\dagger, (O_i')^\dagger, O_j, O_j'\}$ gives
\begin{align*}
|\<O_i^\dagger O_j\>_\Omega| &\geq |\<O_i^\dagger (O_i')^\dagger\>_\Omega| \cdot |\<O_j O_j'\>_\Omega| - f(\ell)
\end{align*}
Using $\<O_i^\dagger (O_i')^\dagger\>_\Omega = \<O_i O_i'\>^*_\Omega$, we deduce that
\begin{align*}
|\<O_i^\dagger O_j\>_\Omega| &\geq |\<O_i O_i'\>_\Omega| \cdot |\<O_j O_j'\>_\Omega| - f(\ell) \nonumber \\
&\geq (1/6)^2 - f(\ell) 
\end{align*}
By the same reasoning, the inequality (\ref{o1o2sineq0}) implies that
\begin{align*}
|\<O_i^\dagger O_j \prod_{k=i+1}^j S_k\>_\Omega| \geq (1/6)^2 - f(\ell)  
\end{align*}
for every $i,j \in X_{d1}$ or $i,j \in X_{d2}$ with $\mathrm{dist}(i,j) > 2 \ell$.

The last step is to note that above expression for $f(\ell)$ implies that there exists $\xi = \tilde{\mathcal{O}}(\frac{1}{\epsilon})$, such that $f(\ell) \leq \frac{1}{72}$ for $\ell \geq \xi$. Therefore, if we choose $\ell = \max(\lambda, \xi)$, then we have 
\begin{align*}
|\<O_i^\dagger O_j\>_\Omega| &\geq \frac{1}{72}  \ \ \ \text{ for all } i,j \in X_o \text{ with }\mathrm{dist}(i,j) > 2 \ell \nonumber \\
|\<O_i^\dagger O_j \prod_{k=i+1}^j S_k\>_\Omega| &\geq \frac{1}{72} \ \ \ \text{ for all } i,j \in X_{d1} \text{ or } i,j \in X_{d2} \text{ with }\mathrm{dist}(i,j) > 2 \ell 
\end{align*}
At this point we have constructed a $(\delta, \ell)$ order parameter on $X_o$, and a $(\delta, \ell)$ disorder parameter on $X_{d1}$ and $X_{d2}$ with $\delta = \frac{1}{72}$ and $\ell = \max(\lambda, \xi) \leq \tilde{\mathcal{O}}(\frac{(\log d)^3}{\epsilon^2})$. This establishes Theorem \ref{mainthm2}.

The proof of Theorem \ref{mainthm1} is almost exactly the same: the only change in this case is that translational symmetry guarantees that $X_o$ and $X_{d1}$ and $X_{d2}$ must either be empty sets or the full set $\{1,...,L\}$. This means that at least one of $X_o$, $X_{d1}$ or $X_{d2} = \{1,...,L\}$, so $|\Omega\>$ has either a $(\delta, \ell)$ order parameter or $(\delta, \ell)$ disorder parameter defined on the whole system, as claimed.
\end{proof}

\subsection{The Fuchs-van de Graaf inequality and the proof of Lemma \ref{orddis}}
In this section, we prove Lemma \ref{orddis}. At the heart of the proof is the following general inequality from quantum information theory. Consider any two spin states $|\psi\>, |\psi'\>$ and any collection of spins, $X \subset \{1,...,L\}$. Then:
\begin{align}
\max_{\text{supp}(A) \subset X} \frac{1}{2} |\<A\>_\psi - \<A\>_{\psi'}| + \max_{\text{supp}(U) \subset X^c} |\<\psi|U|\psi'\>| \geq 1
\label{mainineq}
\end{align}
Here the first maximum is over all operators $A$ that have $\|A\| \leq 1$ and are supported in $X$, and the second maximum is over all unitary operators $U$ that are supported in $X^c$. Intuitively, the first term in Eq. (\ref{mainineq}) measures how distinguishable $|\psi\>, |\psi'\>$ are with respect to observables in $X$, while the second term measures to what extent $|\psi'\>$ can be transformed into $|\psi\>$ by a unitary operation in $X^c$. Thus, (\ref{mainineq}) tells us that if two states look similar with respect to measurements in $X$, then it must be possible to (approximately) transform one state into the other with an operation in $X^c$. 

The inequality (\ref{mainineq}) is easiest to prove in the limiting case where the first term is exactly zero: i.e. $\<A\>_\psi = \<A\>_{\psi'}$ for every $A$ supported in $X$. In this case, $|\psi\>$ and $|\psi'\>$ have identical reduced density matrices in the region $X$. It then follows that there exists a unitary operator $U$, supported in $X^c$, such that $U|\psi'\> = |\psi\>$: indeed, the operator $U$ that does the job is the unitary transformation that rotates the Schmidt states for $|\psi'\>$ in the region $X^c$ into the corresponding Schmidt states for $|\psi\>$. 

To derive the inequality (\ref{mainineq}) in the general case, let $\rho, \rho'$ be the reduced density matrices of $|\psi\>, |\psi'\>$ in the region $X$. The first term in Eq. (\ref{mainineq}) is then equal to the trace distance between $\rho, \rho'$, usually denoted by $D(\rho,\rho')$. Likewise, the second term in Eq. (\ref{mainineq}) can be identified with the square root of the fidelity, $\sqrt{F(\rho,\rho')}$, using Uhlmann's theorem\cite{uhlmann}. With these identifications, Eq. (\ref{mainineq}) is an immediate consequence of the Fuchs-van de Graaf inequality\cite{Fuchs}: $D(\rho,\rho') + \sqrt{F(\rho,\rho')} \geq 1$.

With the inequality (\ref{mainineq}) in hand, we are now ready to prove Lemma \ref{orddis}:
\begin{replemma}{orddis}
Let $|\psi\>$ be a state that is even or odd under $S$. For every pair of disjoint, non-adjacent intervals $I_1, I_2$, the state $|\psi\>$ is either $\delta$ weakly-ordered on $I_1, I_2$ or $(1-\delta)/2$ weakly-disordered on the complementary intervals $J_1, J_2$, where the intervals are ordered $(J_1, I_1, J_2, I_2)$ going clockwise.
\end{replemma}

\begin{proof}
Let $|\psi'\> = S_{I_1} |\psi\>$, and $X = I_1 \cup I_2$. Substituting this into (\ref{mainineq}) gives
\begin{align}
\max_{\text{supp}(A) \subset I_1 \cup I_2} |\<\frac{1}{2}(A- S_{I_1} A S_{I_1})\>_\psi| + \max_{\text{supp}(U) \subset J_1 \cup J_2} |\<U S_{I_1}\>_\psi| \geq 1
\label{mainineq2}
\end{align}
An obvious consequence of Eq. (\ref{mainineq2}) is that either: 
\begin{enumerate}[(i)]
\item{The first term is larger than $\delta$.}
\item{The second term is larger than $1-\delta$.}
\end{enumerate}
We claim that in case (i), $|\psi\>$ is $\delta$ weakly ordered on $I_1, I_2$, while in case (ii), $|\psi\>$ is $(1-\delta)/2$ weakly disordered on $J_1, J_2$. Indeed, in case (i), define
\begin{align*}
B \equiv \frac{1}{4}(A_* - S_{I_1} A_* S_{I_1}) + \frac{1}{4} S ( A_*  - S_{I_1} A_* S_{I_1}) S 
\end{align*}
where $A_*$ is the maximal choice of $A$. By assumption, $|\<B\>_\psi| \geq \delta$. Furthermore, it is easy to see that $B$ is supported on $I_1 \cup I_2$, is odd under $S_{I_1}, S_{I_2}$ and has norm of at most $1$. It follows that $|\psi\>$ is $\delta$ weakly ordered on $I_1, I_2$. In case (ii), define 
\begin{align*}
U_{\pm} \equiv \frac{1}{4}(U_* + S U_* S) \pm \frac{1}{4} S_{J_1} (U_* + S U_* S) S_{J_1}
\end{align*}
where $U_*$ is the maximal choice of $U$. By assumption, $|\<(U_+ + U_-) S_{I_1}\>_\psi| \geq 1-\delta$. Hence either $|\<U_{+} S_{I_1}\>_\psi| \geq (1-\delta)/2$ or $|\<U_{-} S_{I_1}\>_\psi| \geq (1-\delta)/2$. Furthermore, it is easy to see that $U_\pm$ are supported on $J_1 \cup J_2$, are even/odd under $S_{J_1}, S_{J_2}$, and have norm of at most $1$. It follows that $|\psi\>$ is $(1-\delta)/2$ weakly disordered on $J_1, J_2$. 
\end{proof}

\subsection{Some bounds on correlations and the proof of Lemma \ref{ordparam}}
We now give the proof of Lemma \ref{ordparam}. One of the key tools in our proof is a general result about the factorizability of gapped ground states. This result was first proven by Hastings\cite{Hastings_area_law} and subsequently generalized in Ref. \cite{Hamzaetal}. Here we adapt this result to the case of interest, namely Ising symmetric Hamiltonians in a ring geometry:
\begin{theorem}
\label{hastthm}
{\bf (Hastings)}
Let $|\Omega\>$ be the lowest energy even eigenstate of an Ising symmetric Hamiltonian $H$ (\ref{ham}). Then for every interval $I$ and every $\ell > 0$ there exists a corresponding Hermitian projection operator $P_I(\ell)$ that is supported on $I$, is even under $S$, and has the following property: given a partition of $\{1,...,L\}$ into four intervals $I_1, I_2, I_3, I_4$, there exists an operator $P_{bd}(\ell)$ with $\|P_{bd}(\ell)\| \leq 1$, that is supported in the region 
$X_{bd} = \{i : \text{dist}(i, I_n) < \ell/2, \text{for at least two different $n$'s}\}$ and that is even under $S_{X_{bd}^{(k)}}$ for every connected component $X_{bd}^{(k)}$ of $X_{bd}$, such that
\begin{align}
\|  P_{I_1}(\ell) P_{I_2}(\ell) P_{I_3}(\ell) P_{I_4}(\ell) P_{bd}(\ell) P_s - |\Omega\>\<\Omega| \  \| \leq g(\ell) \equiv \text{poly}(\ell,\epsilon^{-1}) e^{-c_1 \epsilon \ell}
\label{Pineq}
\end{align}
where $c_1 > 0$ is a numerical constant, $\epsilon$ is the energy gap within the even subspace, and $P_s$ is the projection onto the even subspace.
\end{theorem}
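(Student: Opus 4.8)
The plan is to adapt the gapped--ground--state factorization machinery of Hastings \cite{Hastings_area_law} and Hamza et al. \cite{Hamzaetal}, upgrading it to respect the Ising symmetry and the even--subspace structure. The foundation is to approximate the even--subspace ground--state projector $|\Omega\>\<\Omega|$ by a filter function of $H$. Concretely, I would write
\[
F(H) = \int_{-\infty}^{\infty} dt\, w(t)\, e^{iHt},
\]
with $w(t) = e^{iE_0 t}\, e^{-t^2/2\tau^2}$ a Gaussian weight centered at the ground energy $E_0$ of $|\Omega\>$; its Fourier transform equals $1$ at $E_0$ and suppresses any energy $E \geq E_0 + \epsilon$ by a factor $e^{-\tau^2\epsilon^2/2}$. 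The crucial point is that $[H,S]=0$ forces $F(H)$ to commute with $S$ and to preserve each parity sector, so that within the even sector the only relevant gap is $\epsilon$; hence $\|F(H)P_s - |\Omega\>\<\Omega|\| \leq e^{-\tau^2\epsilon^2/2}$, even though the global gap may vanish in a symmetry--broken phase.

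Next I would localize $F(H)$ using Lieb--Robinson bounds. Truncating the $t$--integral at $|t| \leq T$ costs an error of order $e^{-T^2/2\tau^2}$, and each $e^{iHt}$ with $|t|\leq T$ spreads operators only a distance $vT$. For a partition into intervals $I_1,\dots,I_4$, I would replace $H$ by the $S$--invariant local Hamiltonians $H_{I_n}$ built from the terms lying within $\ell/2$ of $I_n$, and take $P_{I_n}(\ell)$ to be the spectral projector of $H_{I_n}$ onto energies below $E_0 + \epsilon/2$ --- a genuine Hermitian projection that the local filter $F(H_{I_n})$ approximates. This is the situation analyzed in \cite{Hastings_area_law, Hamzaetal}, where the ordered product $P_{I_1}\cdots P_{I_4}$ is shown to reproduce $F(H)$ up to corrections supported only where two of the neighborhoods overlap. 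Choosing $\tau \sim \sqrt{\ell/\epsilon}$ and $T \sim \ell/v$ balances the spectral and truncation errors and keeps all tails at size $\text{poly}(\ell,\epsilon^{-1})\,e^{-c_1\epsilon\ell}$. Collecting the leftover corrections, which live precisely in $X_{bd} = \{i : \text{dist}(i,I_n) < \ell/2 \text{ for at least two } n\}$, and normalizing defines the boundary operator $P_{bd}(\ell)$ with $\|P_{bd}(\ell)\| \leq 1$.

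The symmetry requirements are then imposed by performing the whole construction covariantly. Because each $H_{I_n}$ is $S$--invariant, its low--energy spectral projector commutes with $S$, so $P_{I_n}(\ell)$ is even under $S$ with no further work; evenness must be built in from the start, since symmetrizing a finished projector would in general spoil the projector property. For $P_{bd}(\ell)$ the relevant condition is only \emph{local} symmetry, namely evenness under $S_{X_{bd}^{(k)}}$ on each connected component $X_{bd}^{(k)}$, which I would secure by averaging the boundary correction over conjugation by the finite group generated by the $S_{X_{bd}^{(k)}}$. Conjugation by a unitary supported in $X_{bd}$ preserves both $\|P_{bd}\| \leq 1$ and the support in $X_{bd}$, so the averaged $P_{bd}(\ell)$ is even under every $S_{X_{bd}^{(k)}}$.

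The hard part will be showing that this symmetrization is compatible with the factorization: that averaging the boundary piece over the local symmetries neither disturbs the reconstruction of $|\Omega\>\<\Omega|$ nor lets the four boundary errors compound. Since $|\Omega\>\<\Omega|$ is invariant only under the \emph{global} $S$ and not under the partial transformations $S_{X_{bd}^{(k)}}$, one cannot simply conjugate the entire product; instead one must arrange the local filters so that the boundary correction alone absorbs the local symmetry average while the $S$--even bulk projectors $P_{I_n}$ are left untouched. Making this precise amounts to checking that the $\mathcal{O}(1)$ commutators among the $P_{I_n}$, the boundary tails, and the symmetry strings are each individually $\mathcal{O}(e^{-c_1\epsilon\ell})$. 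Once that is in place, the remaining bookkeeping of Lieb--Robinson tails is routine and yields the stated bound $g(\ell) = \text{poly}(\ell,\epsilon^{-1})\,e^{-c_1\epsilon\ell}$.
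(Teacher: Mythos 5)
Your overall route --- a Gaussian filter $F(H)$, Lieb--Robinson localization, truncated Hamiltonians, and a boundary correction --- is exactly the machinery underlying the result, and in that sense you are reconstructing Hastings' proof rather than finding a new one; note that the paper itself does not reconstruct it, but simply invokes Lemma 1 of Ref.~\cite{Hastings_area_law} and explains the three modifications needed (inserting $P_s$, ring geometry and four intervals, and the symmetry statement). The genuine gap in your proposal lies in precisely the part of the theorem that is \emph{new} relative to Hastings: the evenness of $P_{bd}(\ell)$ under $S_{X_{bd}^{(k)}}$ for each connected component. Your plan is to average the boundary correction over conjugation by the group generated by the $S_{X_{bd}^{(k)}}$. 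But the bound (\ref{Pineq}) is not stable under this averaging: neither $|\Omega\>\<\Omega|$ nor the bulk projectors $P_{I_n}$ are invariant under the \emph{partial} symmetries $S_{X_{bd}^{(k)}}$ (only under the global $S$), so for a nontrivial group element $g$ there is no reason that $P_{I_1}\cdots P_{I_4}\, \bigl(g P_{bd} g^{-1}\bigr) P_s$ remains close to $|\Omega\>\<\Omega|$, and hence no reason the average does. You acknowledge this yourself ("the hard part") but supply no mechanism; as written, the averaging step is either vacuous (if the boundary operator is already locally even) or unjustified (if it is not), and the entire burden of the proof sits in that unresolved step.

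The resolution, and the route the paper takes, requires no averaging at all: in Hastings' construction the boundary operator $M_B$ (Eq.~(A.10) of Ref.~\cite{Hastings_area_law}) is built by filtering symmetric truncated Hamiltonians, so it is even under the global $S$, and it decomposes as a sum $M_B = \sum_k M_B^{(k)}$ with each $M_B^{(k)}$ supported on a single component $X_{bd}^{(k)}$ and even under $S$. Since any operator $O$ supported on $Y$ satisfies $S O S = S_Y O S_Y$, global evenness plus support on $X_{bd}^{(k)}$ \emph{automatically} gives evenness under $S_{X_{bd}^{(k)}}$; this observation replaces your symmetrization entirely. Two secondary flaws in your sketch: (i) your $P_{I_n}(\ell)$, defined from a Hamiltonian containing all terms within $\ell/2$ of $I_n$, is supported on a neighborhood of $I_n$ rather than on $I_n$ itself, violating the support condition in the statement (that condition is used downstream, e.g.\ when these projectors are commuted past operators on disjoint regions in Propositions \ref{tildeO}--\ref{factor}); (ii) the cutoff ``below $E_0 + \epsilon/2$'' is ill-posed for a local Hamiltonian, since $E_0$ is the extensive global ground energy and need not lie anywhere near the spectrum of $H_{I_n}$; any such cutoff must be fixed relative to the local ground energy.
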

\begin{proof}
There are only three small differences between Theorem \ref{hastthm} and Hastings' original result (Lemma 1 from Ref. \cite{Hastings_area_law}): First, Theorem \ref{hastthm} applies to Ising symmetric Hamiltonians that have a gap within the \emph{even} subspace while Lemma 1 from Ref. \cite{Hastings_area_law} applies to generic Hamiltonians that have a gap within the \emph{full} Hilbert space. Second, Theorem \ref{hastthm} applies to spin chains with a \emph{ring} geometry while Lemma 1 from Ref. \cite{Hastings_area_law} applies to spin chains with open boundary conditions. Finally, Theorem \ref{hastthm} applies to a partition into \emph{four} intervals while Lemma 1 from Ref. \cite{Hastings_area_law} applies to a partition into \emph{two} intervals. The first difference has only a minor effect on the proof --- all arguments go through exactly as in Ref. \cite{Hastings_area_law}, with the only modification being that one needs to multiply the right hand side of Eq. (A.6) in Ref. \cite{Hastings_area_law} by the projector $P_s$; this $P_s$ is carried through to all subsequent equations, including the final result (A.11). The claim that $P_{bd}(\ell)$ can be chosen to be even under $S_{X_{bd}^{(k)}}$ for every connected component $X_{bd}^{(k)}$ of $X_{bd}$ follows from the fact that the $M_B$ operator appearing in Eq.~(A.10) can be written as a sum, $M_B = \sum_k M_B^{(k)}$, where $M_B^{(k)}$ is even under $S$ and is supported on $X_{bd}^{(k)}$. The second and third differences are even more trivial: it is not hard to see that the proof given in Ref. \cite{Hastings_area_law} works equally well for a ring geometry and for a partition into any finite number of intervals. 
\end{proof}

We now use the above theorem to prove three propositions. The first proposition essentially says that the $P_I$ operators defined above can be used as (approximate) ground state projection operators in appropriate circumstances:
\begin{prop}
\label{tildeO}
{\bf (Gd.~state projectors)}
Let $A$ be an operator that is even under $S$, has norm of at most $1$ and is supported on $I_1 \cup I_2$ where $I_1, I_2$ are two intervals separated by a distance of at least $\ell$. Then,
\begin{align}
\| P_{\tilde{I}_1}(\ell) P_{\tilde{I}_2}(\ell) A |\Omega\> - |\Omega\>\<\Omega| A |\Omega\> \| &\leq \mathcal{O}(\sqrt{g(\ell)}) \label{o1o21}\\
\| P_{\tilde{I}_1}(\ell) P_{\tilde{I}_2}(\ell) A S_{J} |\Omega\> - |\Omega\>\<\Omega| A S_{J} |\Omega\> \| &\leq \mathcal{O}(\sqrt{g(\ell)}) \label{o1o22}
\end{align}
where $\tilde{I}_j \equiv \{i: \text{dist}(i,I_j) < \ell/2\}$ and $J$ denotes the interval between $I_1$ and $I_2$ (going clockwise).
\end{prop}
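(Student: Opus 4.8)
The plan is to prove both bounds from the Hastings factorization of Theorem \ref{hastthm}, applied with a partition of the ring tailored to $I_1,I_2$, and then to exploit the fact that $A$ — being even under $S$ and supported well inside the enlarged intervals — commutes with every factor of the factorization except the two projectors $P_{\tilde I_1}(\ell)$ and $P_{\tilde I_2}(\ell)$. First I would take the four-interval partition of the ring to be $(\tilde I_1, M_1, \tilde I_2, M_2)$, where $M_1,M_2$ are the ``middle'' portions of the two gaps between $I_1$ and $I_2$ that remain after the $\ell/2$ buffers have been assigned to $\tilde I_1,\tilde I_2$. This is legitimate because $I_1,I_2$ are separated by at least $\ell$, so $\tilde I_1,\tilde I_2$ are disjoint. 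With this choice, two of the Hastings projectors are exactly $P_{\tilde I_1}(\ell),P_{\tilde I_2}(\ell)$, and Theorem \ref{hastthm} gives $\|\,|\Omega\rangle\langle\Omega| - \Pi\,\| \le g(\ell)$ with $\Pi = P_{\tilde I_1}P_{M_1}P_{\tilde I_2}P_{M_2}P_{bd}P_s$, where I suppress the argument $\ell$.

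The second step records two facts. (a) A short ``peel-off'' observation: for any factor $P$ of $\Pi$ that can be commuted to the front through the projectors to its left (which live on disjoint intervals), one has $(1-P)\Pi=0$, hence $\|(1-P)|\Omega\rangle\| = \|(1-P)(|\Omega\rangle - \Pi|\Omega\rangle)\| \le g(\ell)$, so $P$ approximately fixes $|\Omega\rangle$; this works for $P_{M_1},P_{\tilde I_2},P_{M_2},P_s$. (b) The crucial geometric input is the $\ell/2$ buffer built into $\tilde I_j$: since $\mathrm{supp}(A)=I_1\cup I_2$ sits at distance at least $\ell/2$ inside $\tilde I_1\cup\tilde I_2$, it is disjoint from $M_1,M_2$ and from $X_{bd}$, so $A$ commutes with $P_{M_1},P_{M_2},P_{bd}$; and because $A$ is even under $S$ it commutes with $P_s$. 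If $P_{bd}$ could be ignored, these two facts would already give (\ref{o1o21}) with the better rate $\mathcal{O}(g(\ell))$: collecting the environment factors into $R$ with $\Pi = P_{\tilde I_1}P_{\tilde I_2}R$ (up to reordering commuting disjoint factors), one would chain $P_{\tilde I_1}P_{\tilde I_2}A|\Omega\rangle \approx P_{\tilde I_1}P_{\tilde I_2}AR|\Omega\rangle = P_{\tilde I_1}P_{\tilde I_2}RA|\Omega\rangle = \Pi A|\Omega\rangle \approx |\Omega\rangle\langle\Omega|A|\Omega\rangle$.

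I expect the main obstacle to be precisely the boundary operator $P_{bd}$. Although its support $X_{bd}$ avoids $I_1\cup I_2$ (so $[A,P_{bd}]=0$, which is all the commutation argument needs), $X_{bd}$ does penetrate the $\ell/2$ buffers and therefore overlaps $\tilde I_1,\tilde I_2$; consequently $P_{bd}$ neither commutes with $P_{\tilde I_1},P_{\tilde I_2}$ nor can be peeled off onto $|\Omega\rangle$, so $\Pi$ cannot literally be reordered into $P_{\tilde I_1}P_{\tilde I_2}R$. The clean way around this is to avoid reordering altogether and bound the target through a norm-squared computation: writing $u = P_{\tilde I_1}P_{\tilde I_2}A|\Omega\rangle$ and $v = \langle\Omega|A|\Omega\rangle\,|\Omega\rangle$, I would expand $\|u-v\|^2 = \|u\|^2 - 2\,\mathrm{Re}\langle v,u\rangle + \|v\|^2$ and evaluate each inner product by inserting the factorization once (using $\Pi\approx|\Omega\rangle\langle\Omega|$ and $\langle\Omega|\Pi\approx\langle\Omega|$) and commuting $A$ past the environment; each insertion costs $\mathcal{O}(g(\ell))$. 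Since the error enters at the level of $\|u-v\|^2$, taking the square root produces the stated $\mathcal{O}(\sqrt{g(\ell)})$ rate, which explains why the bound is $\sqrt{g}$ rather than $g$.

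Finally, the disorder bound (\ref{o1o22}) follows from the identical argument with $A S_J$ in place of $A$. The key point is that the environment factors still commute with $A S_J$: the Hastings projectors $P_{M_1},P_{M_2},P_{bd}$ are even under the symmetry restricted to their supports, hence commute with the string $S_J$, while $A S_J$ is even under the global $S$ and so commutes with $P_s$. Running the same norm-squared estimate with $u = P_{\tilde I_1}P_{\tilde I_2}A S_J|\Omega\rangle$ and $v = \langle\Omega|A S_J|\Omega\rangle\,|\Omega\rangle$ then yields (\ref{o1o22}).
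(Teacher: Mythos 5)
Your overall strategy matches the paper's: same four-interval partition (the paper calls your $M_1,M_2$ by the names $\tilde J,\tilde K$), same commutation facts for $A$ and $AS_J$, and your ``peel-off'' observation is a legitimate (in fact slightly sharper) substitute for the paper's Claim~\ref{qalemma} on the factors $P_{M_1},P_{\tilde I_2},P_{M_2},P_s$: it gives error $g(\ell)$ where the paper gets $2\sqrt{g(\ell)}$. You also correctly identify $P_{bd}$ as the one factor these arguments cannot touch. The gap is that your proposed workaround does not actually get around it. Expand $\|u-v\|^2$ as you suggest: the terms $\langle v|u\rangle$ and $\|v\|^2$ are indeed controlled by peel-off alone, but the term $\|u\|^2 = \langle\Omega|A^\dagger P_{\tilde I_1}P_{\tilde I_2}A|\Omega\rangle$ cannot be evaluated by ``inserting the factorization once and commuting $A$ past the environment.'' Try it: inserting $\langle\Omega|\approx\langle\Omega|\Pi$ on the left and commuting gives $\Pi A^\dagger = P_{\tilde I_1}P_{\tilde I_2}A^\dagger P_{M_1}P_{M_2}P_{bd}P_s$, so $P_{bd}$ ends up sandwiched between $P_{M_1}P_{M_2}$ and $P_{\tilde I_1}P_{\tilde I_2}A|\Omega\rangle$; it commutes with $A$ but not with $P_{\tilde I_1}P_{\tilde I_2}$, so it can neither be moved out of the way nor absorbed. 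Inserting $|\Omega\rangle\approx\Pi|\Omega\rangle$ on the right instead leaves, after all allowed commutations, a stray factor $P_{bd}$ acting directly on $|\Omega\rangle$ --- which is exactly the statement you were unable to establish by peel-off. Every routing of the factorization through $\|u\|^2$ reproduces the $P_{bd}$ problem rather than avoiding it, so the argument as written is circular.

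What is missing is a separate estimate $\|P_{bd}|\Omega\rangle - |\Omega\rangle\| \leq \mathcal{O}(\sqrt{g(\ell)})$, and this is precisely what the paper supplies via Claim~\ref{qalemma} (its Eq.~(\ref{lemmaineq2})). The fix is short, and it is, ironically, your norm-squared trick applied to the right object: from $\langle\Omega|\Pi|\Omega\rangle \geq 1-g(\ell)$ and peel-off of the four interval projectors one gets $\mathrm{Re}\,\langle\Omega|P_{bd}|\Omega\rangle \geq 1-\mathcal{O}(g(\ell))$, and then, since $\|P_{bd}\|\leq 1$ but $P_{bd}$ is \emph{not} a projector, $\|(1-P_{bd})|\Omega\rangle\|^2 = 1 - 2\,\mathrm{Re}\,\langle\Omega|P_{bd}|\Omega\rangle + \|P_{bd}|\Omega\rangle\|^2 \leq 2 - 2\,\mathrm{Re}\,\langle\Omega|P_{bd}|\Omega\rangle \leq \mathcal{O}(g(\ell))$. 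This, not the expansion of $\|u-v\|^2$, is the true source of the $\sqrt{g}$ rate. Once it is in hand, the straightforward linear argument (multiply the factorization by $A|\Omega\rangle$, commute $A$ left, strip $P_{M_1},P_{M_2},P_{bd}$ off $|\Omega\rangle$ one at a time) finishes both (\ref{o1o21}) and (\ref{o1o22}), and your norm-squared reformulation becomes unnecessary. One smaller imprecision in your treatment of (\ref{o1o22}): to get $[P_{bd},S_J]=0$ it is not enough that $P_{bd}$ is even under the symmetry restricted to its full support $X_{bd}$, because $S_J$ contains the symmetry on only two of the four connected components of $X_{bd}$; you need the component-wise evenness that Theorem~\ref{hastthm} explicitly provides (this is exactly the strengthening of Hastings' lemma that the paper points out it requires).
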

\begin{proof} 
We will start with (\ref{o1o21}). Let $\tilde{J}$ and $\tilde{K}$ be the two intervals that make up $(\tilde{I}_1 \cup  \tilde{I}_2)^c$. Then $\tilde{I}_1, \tilde{I}_2, \tilde{J}, \tilde{K}$ define a partition of $\{1,...,N\}$ into four intervals. Therefore Theorem \ref{hastthm} implies that
\begin{align}
\|P_{\tilde{I}_1}(\ell) P_{\tilde{I}_2}(\ell) P_{\tilde{J}}(\ell) P_{\tilde{K}}(\ell) P_{bd}(\ell) P_s - |\Omega\>\<\Omega| \| \leq g(\ell)
\label{lemmaineq0}
\end{align}
Multiplying this equation by $A |\Omega\>$, and dropping the $\ell$'s for notational simplicity, we derive:
\begin{align*}
\|P_{\tilde{I}_1} P_{\tilde{I}_2} P_{\tilde{J}} P_{\tilde{K}} P_{bd} P_s A |\Omega\> - |\Omega\>\<\Omega| A |\Omega\> \| \leq g(\ell)
\end{align*}
Reordering the terms, using the fact that $P_{\tilde{J}}, P_{\tilde{K}}, P_{bd}, P_s$ commute with $A$, gives
\begin{align}
\| P_{\tilde{I}_1} P_{\tilde{I}_2} A P_{\tilde{J}} P_{\tilde{K}} P_{bd}  |\Omega\> - |\Omega\>\<\Omega| A |\Omega\> \| \leq g(\ell)
\label{lemmaineq1}
\end{align} 
At the same time, Eq. \ref{lemmaineq0} implies that
\begin{align}
\| P_{\tilde{J}} |\Omega\> - |\Omega\> \| &\leq 2 \sqrt{g(\ell)} \nonumber \\
\| P_{\tilde{K}} |\Omega\> - |\Omega\> \| &\leq 2 \sqrt{g(\ell)} \nonumber \\
\| P_{bd} |\Omega\> - |\Omega\> \| &\leq (2 + \sqrt{2}) \sqrt{g(\ell)} 
\label{lemmaineq2}
 \end{align}
Indeed, this follows from the following claim\cite{Hastings_private}:
\begin{claim}
\label{qalemma}
Suppose that $|\<\psi| P O |\psi\> -1 | \leq \alpha$ where $P$ is a projector and $\|O\| \leq 1$. Then:
\begin{align}
\|P |\psi\> -|\psi\> \| &\leq 2 \sqrt{\alpha} \label{qineq} \\
\|O |\psi\> - |\psi\> \| &\leq (2+\sqrt{2}) \sqrt{\alpha} \label{aineq}
\end{align}
\end{claim}
For example, to derive the first equation in (\ref{lemmaineq2}), set $P = P_{\tilde{J}}$ and $O = P_{\tilde{I}_1}(\ell) P_{\tilde{I}_2}(\ell) P_{\tilde{K}}(\ell) P_{bd}(\ell) P_s$, and $|\psi\> = |\Omega\>$. The other equations follow in a similar manner.  
\begin{proof} (of Claim \ref{qalemma})
Note that the Cauchy-Schwarz inequality implies
\begin{align*}
\sqrt{\<\psi| P |\psi\>} \sqrt{\<\psi| O^\dagger O |\psi\>} \geq |\<\psi| PO|\psi\>| \geq 1 - \alpha
\end{align*}
It follows that $\<\psi| P |\psi\> \geq (1-\alpha)^2$. At the same time, the following inequality holds for any vectors $|\phi\>, |\phi'\>$ with norm at most $1$:
\begin{align}
\| |\phi\> - |\phi'\> \| \leq \sqrt{2 - 2 \text{Re} \<\phi| \phi'\>}
\label{phipsi}
\end{align}
Applying this inequality to $|\psi\>, P|\psi\>$, we derive
\begin{align*}
\| P |\psi\> - |\psi\> \| \leq \sqrt{2 - 2(1-\alpha)^2}
\end{align*}
This establishes (\ref{qineq}). Likewise, applying (\ref{phipsi}) to $O |\psi\>, P|\psi\>$, we deduce that
\begin{align*}
\| O |\psi\> - P |\psi\> \| \leq \sqrt{2 - 2(1-\alpha)}
\end{align*}
By the triangle inequality,
\begin{align*}
\| O |\psi\> - |\psi\> \| \leq \sqrt{2 - 2(1-\alpha)^2} + \sqrt{2 - 2(1-\alpha)}
\end{align*}
This establishes (\ref{aineq}).
\end{proof}
If we now combine (\ref{lemmaineq1}) and (\ref{lemmaineq2}), we deduce that
\begin{align*}
\| P_{\tilde{I}_1} P_{\tilde{I}_2} A|\Omega\> - |\Omega\>\<\Omega|A|\Omega\> \| \leq  (6 + \sqrt{2}) \sqrt{g(\ell)} + g(\ell)
\end{align*}
This completes our derivation of (\ref{o1o21}). As for (\ref{o1o22}), note that Theorem~\ref{hastthm} implies that
\begin{align*}
\|P_{\tilde{I}_1} P_{\tilde{I}_2} P_{\tilde{J}} P_{\tilde{K}} P_{bd} P_s A S_{J} |\Omega\> - |\Omega\>\<\Omega| A S_{J} |\Omega\> \| \leq g(\ell)
\end{align*}
Similarly to before, reordering the terms gives
\begin{align}
\| P_{\tilde{I}_1} P_{\tilde{I}_2} A S_{J} P_{\tilde{J}} P_{\tilde{K}} P_{bd} |\Omega\> - |\Omega\>\<\Omega| A S_{J}|\Omega\> \| \leq g(\ell)
\label{lemmaineq3}
\end{align} 
By the same reasoning as above, Eq. \ref{lemmaineq2} implies the desired inequality:
\begin{align*}
\| P_{\tilde{I}_1} P_{\tilde{I}_2} A S_{J} |\Omega\> - |\Omega\>\<\Omega|A S_{J}|\Omega\> \| \leq  (6 + \sqrt{2}) \sqrt{g(\ell)} + g(\ell)
\end{align*}
\end{proof}

We now move on to the second proposition --- a variation of Hastings' result that gapped ground states have short range correlations\cite{Hastings_LSM}:
\begin{prop}
\label{src}
{\bf (Short-range correlations)} Let $O_1, O_2$ be two operators that are even under the symmetry, have norm of at most $1$ and are supported on intervals $I_1, I_2$ separated by a distance of at least $\ell$. Then,
\begin{align}
| \<\Omega|O_1 O_2|\Omega\> - \<\Omega|O_1|\Omega\>\<\Omega|O_2|\Omega\>| \leq \mathcal{O}(\sqrt{g(\ell)})  
\label{srcineqlemma}
\end{align}
\end{prop}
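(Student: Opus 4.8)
The plan is to use the approximate ground-state projectors $P_{\tilde I_j}(\ell)$ supplied by Proposition \ref{tildeO} as a bridge that lets us factor $\<\Omega|O_1 O_2|\Omega\>$ into $\<\Omega|O_1|\Omega\>\<\Omega|O_2|\Omega\>$. Write $P_j \equiv P_{\tilde I_j}(\ell)$ for $j=1,2$, where $\tilde I_j = \{i : \mathrm{dist}(i,I_j) < \ell/2\}$. These operators are Hermitian, even under $S$, and have norm $1$. The key geometric observation is that since $I_1,I_2$ are separated by a distance of at least $\ell$, the fattened interval $\tilde I_1$ is disjoint from $I_2$ and $\tilde I_2$ is disjoint from $I_1$; consequently $P_1$ commutes with $O_2$ and $P_2$ commutes with $O_1$, a fact I will use repeatedly when moving projectors past operators.

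First I would record the near-stabilization bound $\|P_j|\Omega\> - |\Omega\>\| \leq \mathcal{O}(\sqrt{g(\ell)})$ for each $j$. This follows exactly as Eq.~(\ref{lemmaineq2}) was derived: complete $\tilde I_j$ to a four-interval partition of $\{1,\dots,L\}$, invoke Theorem \ref{hastthm}, and apply Claim \ref{qalemma}. Since $P_j$ is Hermitian this also gives $\|\<\Omega|P_j - \<\Omega|\| \leq \mathcal{O}(\sqrt{g(\ell)})$.

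The central step is to apply Proposition \ref{tildeO} with $A = O_2$, which is even and supported on $I_2 \subset I_1\cup I_2$. This yields
\[
\big\| P_1 P_2 O_2 |\Omega\> - |\Omega\>\<\Omega|O_2|\Omega\> \big\| \leq \mathcal{O}(\sqrt{g(\ell)}).
\]
Pairing this vector estimate with $\<\Omega|O_1$ and using $\|O_1\|\leq 1$ converts it into the scalar statement
\[
\<\Omega|O_1|\Omega\>\<\Omega|O_2|\Omega\> = \<\Omega|O_1 P_1 P_2 O_2|\Omega\> + \mathcal{O}(\sqrt{g(\ell)}).
\]
It then remains to strip the two projectors from $\<\Omega|O_1 P_1 P_2 O_2|\Omega\>$. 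I would commute $P_1$ rightward past $P_2$ and past $O_2$ (both commute with $P_1$) so that it acts directly on $|\Omega\>$, and replace $P_1|\Omega\>$ by $|\Omega\>$ at a cost of $\mathcal{O}(\sqrt{g})$; then commute $P_2$ leftward past $O_1$ so that $\<\Omega|P_2$ becomes $\<\Omega|$ up to $\mathcal{O}(\sqrt g)$. This shows $\<\Omega|O_1 P_1 P_2 O_2|\Omega\> = \<\Omega|O_1 O_2|\Omega\> + \mathcal{O}(\sqrt{g(\ell)})$, and subtracting the two displayed identities gives the claimed bound.

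I expect the substantive content to be entirely contained in the near-stabilization estimate $\|P_j|\Omega\> - |\Omega\>\| \leq \mathcal{O}(\sqrt{g})$ and in Proposition \ref{tildeO} itself, both of which ultimately rest on Hastings' factorization (Theorem \ref{hastthm}); everything else is bookkeeping of error terms and commutations justified by support-disjointness. The one point requiring care is keeping track of which operators overlap which fattened intervals, so that each commutation used when stripping the projectors is genuinely valid — this is precisely what the hypothesis of separation $\geq \ell$ guarantees.
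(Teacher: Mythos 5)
Your proof is correct, and it is not circular: Proposition~\ref{tildeO} is proved in the paper directly from Theorem~\ref{hastthm} and Claim~\ref{qalemma}, with no reference to Proposition~\ref{src}, so deriving the latter from the former is legitimate. It is, however, a different route from the paper's. The paper never invokes Proposition~\ref{tildeO} here; instead it applies Theorem~\ref{hastthm} to the two-interval partition $I_1, I_1^c$ (taking $I_3, I_4$ empty), sandwiches the resulting operator bound $\|P_{I_1^c}(\ell) P_{I_1}(\ell) P_{bd}(\ell) P_s - |\Omega\>\<\Omega|\| \leq g(\ell)$ between $\<\Omega|O_1$ and $O_2|\Omega\>$, commutes $O_1$ past $P_{I_1^c}$ and $O_2$ past $P_s$, $P_{bd}$, $P_{I_1}$ (evenness of $O_2$ is exactly what lets it pass through $P_s$), and then strips the leftover projectors using Claim~\ref{qalemma}. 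Your argument has the same skeleton --- an approximate resolution of $|\Omega\>\<\Omega|$, support-disjointness commutations, near-stabilization bounds --- but packages the first step as an application of Proposition~\ref{tildeO} with $A = O_2$, so that evenness of $O_2$ enters as a hypothesis of that proposition rather than through $P_s$. The paper's direct route keeps the two propositions logically independent, each one step below Theorem~\ref{hastthm}, and uses a simpler partition; yours buys reuse of already-established machinery at the cost of an extra layer plus the auxiliary estimates $\|P_{\tilde{I}_j}(\ell)|\Omega\> - |\Omega\>\| \leq \mathcal{O}(\sqrt{g(\ell)})$, which, as you correctly note, must themselves be extracted from Theorem~\ref{hastthm} via a four-interval partition. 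Both arguments close with the same $\mathcal{O}(\sqrt{g(\ell)})$ error budget, so the results are of equal strength.
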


\begin{proof}
By construction, $I_1, I_1^c$ define a partition of $\{1,...,L\}$ into two intervals. Therefore, according to Theorem~\ref{hastthm} (with $I_3$ and $I_4$ being empty sets), 
\begin{align}
\| P_{I_1^c}(\ell) P_{I_1}(\ell) P_{bd}(\ell) P_s - |\Omega\>\<\Omega| \  \| \leq g(\ell) 
\label{Pineq2}
\end{align}
Sandwiching this inequality between $\<\Omega |O_1$ and $O_2 |\Omega\>$ gives 
\begin{align*}
|\<\Omega| O_1 P_{I_1^c} P_{I_1} P_{bd} P_s O_2 |\Omega\> -
\<\Omega| O_1|\Omega\>\<\Omega| O_2 |\Omega\>| \leq g(\ell)
\end{align*}
where we have dropped the $\ell$'s for clarity. Reordering the terms, we obtain
\begin{align}
|\<\Omega| P_{I_1^c} O_1 O_2  P_{I_1} P_{bd} |\Omega\> -
\<\Omega| O_1|\Omega\>\<\Omega| O_2 |\Omega\>| \leq g(\ell)
\label{srcprop1}
\end{align}
where we have used the fact that $O_1$ commutes with $P_{I_1^c}$ and $O_2$ commutes with $P_s, P_{bd}, P_{I_1}$.

At the same time, using (\ref{Pineq2}) and Claim \ref{qalemma}, we deduce
\begin{align}
\| P_{I_1} |\Omega\> - |\Omega\> \| &\leq 2 \sqrt{g(\ell)}  \nonumber \\
\| P_{I_1^c} |\Omega\> - |\Omega\> \| &\leq 2 \sqrt{g(\ell)} \nonumber \\
\| P_{bd} |\Omega\> - |\Omega\> \| &\leq (2 + \sqrt{2}) \sqrt{g(\ell)} 
\label{Pomegaineq}
 \end{align}
Combining (\ref{srcprop1}) and (\ref{Pomegaineq}) gives:
\begin{align*}
|\<\Omega| O_1 O_2 |\Omega\> - \<\Omega| O_1 |\Omega\>\<\Omega|O_2 |\Omega\>| \leq (6 + \sqrt{2}) \sqrt{g(\ell)} + g(\ell)
\end{align*}
This implies the desired inequality (\ref{srcineqlemma}).
\end{proof}

Our final proposition is also a result about short-range correlations, but in the more complicated case where the operators in question are either \emph{odd} under symmetry or involve a symmetry transformation along an interval:
\begin{prop}
\label{factor}
{\bf (Generalized short-range corr.)}
For any intervals $I_1, I_2$ separated by a distance of at least $\ell$, there exists three pairs of operators $(\rho_1,\rho_2)$, $(\rho^e_1,\rho^e_2)$, $(\rho^o_1,\rho^o_2)$ supported on $I_1, I_2$ with 
\begin{align}
|\<\Omega| O_1 O_2 |\Omega\> - \text{Tr}(O_1 \rho_1) \text{Tr}(O_2 \rho_2)| &\leq h(\ell) \equiv \text{poly}(\ell,\epsilon^{-1}) e^{-c_2 \epsilon \ell} \label{factor1} \\
|\<\Omega| O_1 O_2 S_{J}|\Omega\> - \text{Tr}(O_1 \rho^e_1) \text{Tr}(O_2 \rho^e_2)| &\leq h(\ell) \label{factor2} \\
|\<\Omega| O_1 O_2 S_{J}|\Omega\> - \text{Tr}(O_1 \rho^o_1) \text{Tr}(O_2 \rho^o_2)| &\leq h(\ell) \label{factor3}
\end{align}
where Eqs. (\ref{factor1}) and (\ref{factor3}) hold for every $O_1, O_2$ that are supported on $I_1, I_2$, are odd under $S$, and have norm of at most $1$, and Eq. (\ref{factor2}) holds for every $O_1, O_2$ that are supported on $I_1, I_2$, are even under $S$, and have norm of at most $1$. Here, $c_2 > 0$ is a constant, and $J$ is the interval between $I_1, I_2$ (going clockwise).
\end{prop}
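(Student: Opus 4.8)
The three inequalities all assert that a (possibly string-dressed) two-interval correlator, viewed as a bilinear functional on operators supported on $I_1$ and on operators supported on $I_2$, is approximately rank one: a product of one factor depending only on $O_1$ and one depending only on $O_2$. The plan is to exhibit this rank-one structure using Theorem \ref{hastthm} together with the propositions already established in this section, and to read off $\rho_1,\rho_2$ (and their string-dressed and odd analogues) from the single surviving ``channel'' that connects the two intervals. The even, string-free prototype is Proposition \ref{src}, where the dominant channel is the identity and the factorization is just $\langle O_1\rangle_\Omega \langle O_2\rangle_\Omega$; the task here is to treat the parity-odd and string-twisted channels that Proposition \ref{src} does not see.

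First I would choose a cut separating the ring into two arcs $R_1\supset I_1$ and $R_2\supset I_2$, placing the two cut points deep inside the two gaps between $I_1$ and $I_2$, each at distance at least $\ell/2$ from both intervals. Schmidt-decomposing $|\Omega\rangle$ across this cut, the evenness $S|\Omega\rangle=|\Omega\rangle$ forces $S_{R_1}$ to commute with the reduced density matrix $\rho_{R_1}$ (one checks $S_{R_1}\rho_{R_1}S_{R_1}=\rho_{R_1}$ from $S_{R_1}|\Omega\rangle=S_{R_2}|\Omega\rangle$), so the Schmidt vectors can be chosen with definite $R_1$-parity, with matched parities on the two sides. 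In this basis each of the three correlators becomes a bilinear form in the matrix elements of $O_1,O_2$ between Schmidt vectors, with the string $S_J$ in Eqs. (\ref{factor2}),(\ref{factor3}) split at the cut inside gap $J$; the parity selection rule restricts even operators to the parity-preserving block and odd operators to the parity-flipping block.

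The core step is to collapse this bilinear form to rank one. Because $O_1$ is supported at distance at least $\ell/2$ from the cut, I would argue that its matrix in the Schmidt basis is, up to an error of order $h(\ell)=\mathrm{poly}(\ell,\epsilon^{-1})e^{-c_2\epsilon\ell}$, a scalar multiple of a single fixed channel operator determined by the dominant transfer across the buffer region; the same holds for $O_2$. This is the role of Theorem \ref{hastthm}: operators far from the entangling cut act nontrivially on only an $e^{-c\epsilon\ell}$-small part of the bond space, and the local evenness of $P_{bd}$ on each connected component is what lets the symmetry string be screened in the interior of the gap and absorbed into the two channel factors. I would implement this through Proposition \ref{tildeO}, whose Eqs. (\ref{o1o21})--(\ref{o1o22}) already show that $P_{\tilde I_1}P_{\tilde I_2}$ acts as an approximate ground-state projector on both the bare and the string-dressed states, and Claim \ref{qalemma}, which converts the operator-norm bound of Theorem \ref{hastthm} into the vector estimates $P|\Omega\rangle\approx|\Omega\rangle$ needed to discard subdominant channels. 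The scalar attached to $O_1$ in the surviving channel is exactly $\mathrm{Tr}(O_1\rho_1)$ for an appropriate operator $\rho_1$ on $I_1$ — the untwisted parity-odd channel produces the $\rho_1$ of Eq. (\ref{factor1}), while the string-twisted parity-even and parity-odd channels produce $\rho^e_1$ and $\rho^o_1$ of Eqs. (\ref{factor2}),(\ref{factor3}) — and likewise for $O_2$.

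The main obstacle is this rank-one collapse in the odd and string-dressed sectors. For even operators without a string the dominant channel is the identity and everything reduces to Proposition \ref{src}; but for the parity-flipping sector the naive manipulation of Proposition \ref{src} degenerates, since the global even-subspace projector annihilates $O_2|\Omega\rangle$, and one must instead track the parity-odd block through the split symmetry $S=S_{R_1}S_{R_2}$ and show that far-from-cut odd operators couple to a single dominant odd channel. Establishing that this dominant channel is unique, and that all subdominant contributions — including the residual dependence of the screened string on the interior of the gap — are bounded by $h(\ell)$, is the delicate part of the argument, and is precisely where the nondegeneracy of $|\Omega\rangle$ within the even subspace, propagated through Theorem \ref{hastthm}, is essential.
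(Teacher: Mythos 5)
Your proposal sets up a reasonable framework (Schmidt decomposition across cuts in the two gaps, parity-graded Schmidt vectors, selection rules for even/odd operators), but it does not actually prove the proposition: the entire content of the statement is the ``rank-one collapse'' that you assert in your third paragraph and then, in your fourth, concede is ``the delicate part of the argument.'' You justify the collapse by claiming that Theorem \ref{hastthm} shows ``operators far from the entangling cut act nontrivially on only an $e^{-c\epsilon\ell}$-small part of the bond space,'' but that is not what Theorem \ref{hastthm} says --- it gives an approximate factorization of the ground-state \emph{projector} into local projectors, and says nothing about the Schmidt (bond) space being effectively low-dimensional, let alone about a unique dominant parity-odd or string-twisted channel. (The bounded-Schmidt-rank statement you may have in mind is Theorem \ref{schmidtthm}, but its error is a constant $\Delta$ traded against the rank $s$, so it cannot produce the exponentially small error $h(\ell)$ required here.) In short, you have identified where the difficulty lives and deferred it, which is a genuine gap.

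The paper closes this gap with a trick your proposal is missing, and it is worth internalizing because it avoids any channel analysis. Fix $U_1,U_2$ odd, supported on $I_1,I_2$, with $\|U_i\|\leq 1$, chosen to \emph{maximize} $|\<\Omega|U_1U_2|\Omega\>|$. The key observation is that $O_1P_1U_1$ and $O_2P_2U_2$ (with $P_i$ the Hastings projectors of Proposition \ref{tildeO}) are \emph{even} operators localized near $I_1,I_2$, because a product of two odd operators is even; hence Proposition \ref{src} applies to them and gives
\begin{align*}
\<\Omega|O_1O_2|\Omega\>\,\<\Omega|U_1U_2|\Omega\> \approx \<\Omega|O_1P_1U_1|\Omega\>\,\<\Omega|O_2P_2U_2|\Omega\>
\end{align*}
up to $\mathcal{O}(\sqrt{g(\ell/2)})$, after using Proposition \ref{tildeO} to insert $P_1P_2$ between $O_1O_2$ and $U_1U_2$. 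Now either $|\<\Omega|U_1U_2|\Omega\>|$ is small --- in which case maximality forces $|\<\Omega|O_1O_2|\Omega\>|$ to be small for \emph{every} admissible $O_1,O_2$, and one takes $\rho_1=\rho_2=0$ --- or one divides by $|\<\Omega|U_1U_2|\Omega\>|$ and reads off $\rho_i \propto \mathrm{Tr}_{I_i^c}(P_iU_i|\Omega\>\<\Omega|)$. This dichotomy is exactly what substitutes for your ``uniqueness of the dominant channel'': no such uniqueness is ever established or needed, and the string-dressed cases (\ref{factor2}), (\ref{factor3}) follow by the same two-case argument with $S_J$ inserted. If you want to rescue your approach, you would need to supply a rigorous version of the transfer-matrix/injectivity argument for general gapped Hamiltonians, which is a substantially harder route than the one the paper takes.
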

\begin{proof}
We start with Eq. (\ref{factor1}). To derive this result, define $\tilde{I}_j = \{i : \text{dist}(i,I_j) < \ell/4\}$ and let
\begin{align*}
P_1 \equiv P_{\tilde{I}_1}(\ell/2), \quad P_2 \equiv P_{\tilde{I}_2}(\ell/2)
\end{align*}
Also, let $U_1, U_2$ be any two operators that are supported on $I_1, I_2$, are odd under $S$ and have norm of at most $1$. We claim that the following inequalities hold:
\begin{align*}
|\<\Omega| O_1 O_2 |\Omega\>\<\Omega| U_1 U_2 |\Omega\> -
\<\Omega| O_1 O_2 P_1 P_2 U_1  U_2 |\Omega\>| &\leq \mathcal{O}(\sqrt{g(\ell/2)}) \nonumber \\
|\<\Omega| O_1 O_2 P_1 P_2 U_1 U_2 |\Omega\> -
\<\Omega| O_1 P_1 U_1 |\Omega\>\<\Omega| O_2 P_2 U_2 |\Omega\>| &\leq \mathcal{O}(\sqrt{g(\ell/2)}) 
\end{align*}
Here the first inequality follows from Proposition \ref{tildeO}, while the second inequality follows from the fact that $|\Omega\>$ has short-range correlations (Proposition \ref{src}) and that $O_1 P_1 U_1$ and $O_2 P_2 U_2$ are supported on two intervals that are separated by a distance of at least $\ell/2$. 

Adding together these two inequalities and using the triangle inequality, we derive
\begin{align}
|\<\Omega| O_1 O_2 |\Omega\>\<\Omega| U_1 U_2 |\Omega\> - 
\<\Omega| O_1 P_1 U_1  |\Omega\>\<\Omega| O_2 P_2 U_2 |\Omega\>| \leq \mathcal{O}(\sqrt{g(\ell/2)})
\label{factorineq}
\end{align}
Now, let us choose $U_1, U_2$ such that $|\<\Omega | U_1 U_2 | \Omega\>|$ is as large as possible. There are two cases to consider: $|\<\Omega | U_1 U_2 | \Omega\>| \leq \mathcal{O}(\sqrt[4]{g(\ell/2)})$ or
$|\<\Omega | U_1 U_2 | \Omega\>| > \mathcal{O}(\sqrt[4]{g(\ell/2)})$. In the first case, Eq. (\ref{factor1}) holds with $\rho_1 = \rho_2 = 0$ and
$ h(\ell) = \mathcal{O}(\sqrt[4]{g(\ell/2)})$. In the second case, we can divide (\ref{factorineq})
by $|\<\Omega | U_1 U_2 | \Omega\>|$ to obtain
\begin{align*}
|\<\Omega| O_1 O_2 |\Omega\> - \text{Tr}(O_1 \rho_1) \text{Tr}(O_2 \rho_2)| &\leq \mathcal{O}(\sqrt[4]{g(\ell/2)})
\end{align*}
with
\begin{align*}
\rho_1 = \frac{\text{Tr}_{I_1^c}(P_1 U_1 |\Omega\>\<\Omega|)}{\sqrt{|\<\Omega| U_1 U_2 |\Omega\>|}}, \quad \rho_2 = \frac{\text{Tr}_{I_2^c}(P_2 U_2 |\Omega\>\<\Omega|)}{\sqrt{|\<\Omega| U_1 U_2 |\Omega\>|}}.
\end{align*}
Again this establishes the desired inequality, Eq. (\ref{factor1}), with
$h(\ell) = \mathcal{O}(\sqrt[4]{g(\ell/2)})$. 

The proof of Eq. (\ref{factor2}) is similar. In this case, let $U_1, U_2$ be any two operators that are supported on $I_1, I_2$, are \emph{even} under $S$ and have norm of at most $1$. We claim that the following inequalities hold:
\begin{align*}
|\<\Omega| O_1 O_2 S_{J}|\Omega\>\<\Omega| U_1 U_2 S_{J}|\Omega\> -
\<\Omega| O_1 O_2 S_{J} P_1 P_2 U_1  U_2 S_{J}|\Omega\>| &\leq O(\sqrt{g(\ell/2)}) \nonumber \\
|\<\Omega| O_1 O_2 S_{J} P_1 P_2 U_1 U_2 S_{J}|\Omega\> -
\<\Omega| O_1 S_{J} P_1 U_1 S_{J}|\Omega\>\<\Omega| O_2 S_{J} P_2  U_2 S_{J}|\Omega\>| &\leq \mathcal{O}(\sqrt{g(\ell/2)}) 
\end{align*}
As above, the first inequality follows from Proposition \ref{tildeO}, while the second inequality follows from the fact that $|\Omega\>$ has short-range correlations (Proposition \ref{src}). 

Adding together these three inequalities, we derive
\begin{align}
|\<\Omega| O_1 O_2 S_{J}|\Omega\>\<\Omega| U_1 U_2 S_{J}|\Omega\>  - 
\<\Omega| O_1 S_{J} P_1 U_1 S_{J}|\Omega\>\<\Omega| O_2 S_{J} P_2  U_2 S_{J}|\Omega\> \leq \mathcal{O}(\sqrt{g(\ell/2)})
\label{factorineq2}
\end{align}
As above, we now choose $U_1, U_2$ such that $|\<\Omega | U_1 U_2 S_{J} | \Omega\>|$ is as large as possible. If $|\<\Omega | U_1 U_2 S_J | \Omega\>| \leq \mathcal{O}(\sqrt[4]{g(\ell/2)})$, then
Eq. (\ref{factor2}) holds with $\rho^e_1 = \rho^e_2 = 0$. On the other hand, if $|\<\Omega | U_1 U_2 S_J | \Omega\>| > \mathcal{O}(\sqrt[4]{g(\ell/2)})$, then we can divide (\ref{factorineq2})
by $|\<\Omega | U_1 U_2 S_J | \Omega\>|$ to obtain 
\begin{align*}
|\<\Omega| O_1 O_2 S_{J}|\Omega\> - \text{Tr}(O_1 \rho^e_1) \text{Tr}(O_2 \rho^e_2)| &\leq \mathcal{O}(\sqrt[4]{g(\ell/2)})
\end{align*}
with
\begin{align*}
\rho^e_1 = \frac{\text{Tr}_{I_1^c}(S_{J} P_1 U_1 S_{J}|\Omega\>\<\Omega|)}{\sqrt{|\<\Omega| U_1 U_2 S_{J}|\Omega\>|}}, \quad \rho^e_2 = \frac{\text{Tr}_{I_2^c}(S_{J}P_2 U_2 S_{J}|\Omega\>\<\Omega|)}{\sqrt{|\<\Omega| U_1 U_2 S_{J}|\Omega\>|}}.
\end{align*}
This gives the desired inequality, Eq. (\ref{factor2}), with $h(\ell) = \mathcal{O}(\sqrt[4]{g(\ell/2)})$. The proof of Eq. (\ref{factor3}) follows in exactly the same way.

\end{proof}

We are now ready to prove Lemma \ref{ordparam}:
\begin{replemma}{ordparam}
Let $I_1, I_2$ be two disjoint intervals and let $\ell_{\text{min}} = \text{min}(|I_1|, |I_2|, \text{dist}(I_1, I_2))$. For any operators $O_1, O_1', O_2, O_2'$ supported on $I_1, I_1', I_2, I_2'$ that are all odd or all even under $S$ and have norm of at most $1$,
\begin{align}
|\<O_1 O_2\>_\Omega| &\geq |\<O_1 O_1'\>_\Omega| \cdot |\<O_2 O_2'\>_\Omega| - f(\ell_{\text{min}}) \label{ordineq}\\
|\<O_1 O_2 S_{J \cup I_2}\>_\Omega| &\geq |\<O_1 O_1' S_{I_{1}^+}\>_\Omega| \cdot |\<O_2 O_2' S_{I_{2}^+}\>_\Omega| - f(\ell_{\text{min}})
\label{disordineq}
\end{align}
where $f(\ell) = \text{poly}(\ell,\epsilon^{-1}) e^{-c \epsilon \ell}$ for some $c > 0$, and $J$ denotes the interval between $I_1$ and $I_2$ (going clockwise).
\end{replemma}

\begin{proof}
We begin with (\ref{ordineq}). Let $\tilde{I} = \{i : \text{dist}(i,I) < \ell_{\text{min}}/4\}$ and 
\begin{align*}
P_1 \equiv P_{\tilde{I}_1}(\ell_{\text{min}}/2), \quad P_2 \equiv P_{\tilde{I}_2}(\ell_{\text{min}}/2), \quad P_1' \equiv P_{\tilde{I}_1'}(\ell_{\text{min}}/2),\quad P_2' \equiv P_{\tilde{I}_2'}(\ell_{\text{min}}/2)
\end{align*} 
The proof proceeds in three steps. First, we note that $P_1, O_1$ and $P_2, O_2$ are supported on two intervals $\tilde{I}_1, \tilde{I}_2$ which are separated by a distance of at least $\ell_{\text{min}}/2$. Therefore Proposition \ref{src} implies
\begin{align*}
|\<\Omega| (O_1^\dagger P_1 O_1) (O_2^\dagger P_2 O_2) |\Omega\> - 
\<\Omega| O_1^\dagger P_1 O_1 |\Omega\> \<\Omega|  O_2^\dagger P_2 O_2 |\Omega\>| \leq \mathcal{O}(\sqrt{g(\ell_{\text{min}}/2)})
\end{align*}
It follows that
\begin{align}
\| P_1 P_2 O_1 O_2 |\Omega\> \|^2
\geq \| P_1 O_1 |\Omega\>\|^2 \cdot \| P_2 O_2 |\Omega\>\|^2 - \mathcal{O}(\sqrt{g(\ell_{\text{min}}/2)})
\label{srcineq0}
\end{align}
Next observe that $\| P_1 O_1 |\Omega\> \| \geq \| P_1' O_1' P_1 O_1 |\Omega\>\|$ and $\| P_2 O_2 |\Omega\> \| \geq \| P_2' O_2' P_2 O_2 |\Omega\>\|$ since \\
$\|O_1'\|, \|O_2'\| \leq 1$. Substituting these inequalities into (\ref{srcineq0}), we conclude that 
\begin{align}
\| P_1 P_2 O_1 O_2 |\Omega\> \|
\geq \| P_1 P_1' O_1 O_1' |\Omega\>\| \cdot \| P_2 P_2' O_2 O_2' |\Omega\>\| - \mathcal{O}(\sqrt{g(\ell_{\text{min}}/2)})
\label{srcineq}
\end{align}
Next we invoke Proposition \ref{tildeO} which implies that
\begin{align}
\|P_1 P_2 O_1 O_2 |\Omega\> \| &\leq |\<\Omega| O_1 O_2 |\Omega\>| + \mathcal{O}(\sqrt{g(\ell_{\text{min}}/2)}) \nonumber \\
\| P_1 P_1' O_1 O_1' |\Omega\>\| &\geq |\<\Omega|O_1 O_1' |\Omega\>| - \mathcal{O}(\sqrt{g(\ell_{\text{min}}/2)}) \nonumber \\
\| P_2 P_2' O_2 O_2' |\Omega\>\| &\geq |\<\Omega|O_2 O_2' |\Omega\>| - \mathcal{O}(\sqrt{g(\ell_{\text{min}}/2)}) 
\label{12ineq}
\end{align}

Combining (\ref{srcineq}),(\ref{12ineq}), we derive:
\begin{align*}
|\<\Omega| O_1 O_2 |\Omega\>| \geq |\<\Omega|O_1 O_1' |\Omega\>| \cdot |\<\Omega|O_2 O_2' |\Omega\>| - \mathcal{O}(\sqrt{g(\ell_{\text{min}}/2)})
\end{align*}
This establishes the desired inequality (\ref{ordineq}) with $f(\ell) = \mathcal{O}(\sqrt{g(\ell/2)})$. 

The derivation of (\ref{disordineq}) is similar: first we note that
\begin{align*}
\| P_1 P_2 O_1 O_2 S_{J \cup I_2}|\Omega\> \|^2
\geq \| P_1 O_1 S_{J \cup I_2} |\Omega\>\|^2 \cdot \| P_2 O_2 S_{J \cup I_2} |\Omega\>\|^2 - \mathcal{O}(\sqrt{g(\ell_{\text{min}}/2)})
\end{align*}
by the same reasoning as in (\ref{srcineq0}). Next we observe that
\begin{align*}
\|P_1 O_1 S_{J \cup I_2} |\Omega\>\|^2 = \| P_1 O_1 S_{I_1^+} |\Omega\>\|^2, \quad \| P_2 O_2 S_{J \cup I_2} |\Omega\>\|^2 =  \| P_2 O_2 S_{I_2^+} |\Omega\>\|^2
\end{align*}
where the first equality follows from $S_{J \cup I_2} (O_1^\dagger P_1 O_1)  S_{J \cup I_2} = S_{I_1^+} (O_1^\dagger P_1 O_1) S_{I_1^+}$ and similarly for the second equality. Also, we have the inequalities 
\begin{align*}
\| P_1 O_1 S_{I_1^+}|\Omega\> \| \geq \| P_1' O_1' P_1 O_1 S_{I_1^+}|\Omega\>\|, \quad \quad
\| P_2 O_2 S_{I_2^+}|\Omega\> \| \geq \| P_2' O_2' P_2 O_2 S_{I_2^+}|\Omega\>\|
\end{align*}
Putting this together gives:
\begin{align}
\| P_1 P_2 O_1 O_2 S_{J \cup I_2}|\Omega\> \|
\geq \| P_1 P_1' O_1 O_1' S_{I_1^+}|\Omega\>\| \cdot \| P_2 P_2' O_2 O_2' S_{I_2^+}|\Omega\>\| - \mathcal{O}(\sqrt{g(\ell_{\text{min}}/2)})
\label{srcineq2}
\end{align}
Next, we invoke Proposition \ref{tildeO} to deduce 
\begin{align}
\| P_1 P_2 (O_1 O_2 S_{I_2}) S_{J}|\Omega\> \| &\leq |\<\Omega| (O_1 O_2 S_{I_2}) S_{J}|\Omega\>| + \mathcal{O}(\sqrt{g(\ell_{\text{min}}/2)}) \nonumber \\
\| P_1 P_1' O_1 O_1' S_{I_1^+}|\Omega\>\| &\geq |\<\Omega|O_1 O_1' S_{I_1^+}|\Omega\>| - \mathcal{O}(\sqrt{g(\ell_{\text{min}}/2)}) \nonumber \\
\| P_2 P_2' O_2 O_2' S_{I_2^+}|\Omega\>\| &\geq |\<\Omega| O_2 O_2' S_{I_2^+}|\Omega\>| - \mathcal{O}(\sqrt{g(\ell_{\text{min}}/2)})
\label{12ineq2}
\end{align}

Combining (\ref{srcineq2}), (\ref{12ineq2}), we derive
\begin{align*}
|\<\Omega| O_1 O_2 S_{J \cup I_2}|\Omega\>| \geq |\<\Omega|O_1 O_1' S_{I_1^+}|\Omega\>| \cdot |\<\Omega|O_2 O_2' S_{I_2^+}|\Omega\>| - \mathcal{O}(\sqrt{g(\ell_{\text{min}}/2)})
\end{align*}
This establishes the desired inequality (\ref{disordineq}) with
$f(\ell) = \mathcal{O}(\sqrt{g(\ell/2)})$. 
\end{proof}

\subsection{A bound on entanglement and the proof of Lemma \ref{weakimpstrong}}
In this section we prove Lemma \ref{weakimpstrong}. The key tool in our proof is a bound on entanglement in gapped ground states due to Arad, Kitaev, Landau, and Vazirani
\cite{AKLV}. Here we adapt this bound to the case of Ising symmetric Hamiltonians:
\begin{theorem}
\label{schmidtthm}
{\bf (AKLV)} 
Let $|\Omega\>$ be the lowest energy even eigenstate of an Ising symmetric Hamiltonian $H$ (\ref{ham}) and let $\epsilon$ be the energy gap in the even subspace. Then for every bipartition of the spin chain into two intervals, and every $\Delta > 0$, there exists an even state $|\psi\>$ with overlap $| \<\psi|\Omega\> | \geq  1-\Delta$ and with a Schmidt rank across the cut of at most $s = e^{\tilde{\mathcal{O}}([\log d]^3/\epsilon)}$.\footnote{Here we treat $\Delta$ as a constant when using $\tilde{\mathcal{O}}$ notation.}
\end{theorem}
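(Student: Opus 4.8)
The plan is to reduce this statement to the area-law theorem of Arad, Kitaev, Landau, and Vazirani \cite{AKLV}, introducing only the modifications needed to handle (a) the fact that the relevant gap $\epsilon$ is a gap within the even subspace rather than a gap of the full spectrum, (b) the ring geometry, and (c) the requirement that the approximating state be even. The key object that makes (a) and (c) tractable is the even projector $P_s = \frac{1}{2}(1 + S)$. Since $[H,S]=0$, $P_s$ commutes with $H$; and since across any cut of the ring $S$ factorizes as $S = S_L \otimes S_R$, we have $P_s = \frac{1}{2}(1 \otimes 1 + S_L \otimes S_R)$, which has operator Schmidt rank exactly $2$ across the cut. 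Thus $P_s$ can be inserted into the AKLV construction at the cost of only a constant factor in entanglement rank.

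The first step is to recall the approximate-ground-state-projector (AGSP) at the heart of the AKLV argument: for the given cut one constructs a polynomial $K = p(H)$ with $K|\Omega\> = |\Omega\>$, with $|p(E)| \leq \delta$ for energies $E \geq E_\Omega + \epsilon$, and with operator entanglement rank $D$ across the cut, where the degree of $p$ (hence $D$) and the shrinking factor $\delta$ obey the AKLV trade-off. Because $K$ is a polynomial in $H$ it commutes with $S$ and preserves parity. I would then define the modified AGSP $\tilde{K} = K P_s$. For any normalized $|\phi\> \perp |\Omega\>$, decomposing $|\phi\> = |\phi_+\> + |\phi_-\>$ into even and odd parts gives $\tilde{K}|\phi\> = K|\phi_+\>$, where $|\phi_+\>$ is even and orthogonal to $|\Omega\>$, hence supported on even eigenstates of energy at least $E_\Omega + \epsilon$; thus $\|\tilde{K}|\phi\>\| \leq \delta$ by the bound on $p$ over the even spectrum above the even gap. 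In this way $\tilde{K}$ keeps the shrinking factor $\delta$ while discarding precisely the odd sector that the even gap does not control, and its entanglement rank is at most $2D$.

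With $\tilde{K}$ in hand, the second step is to feed it into the AKLV bootstrap lemma unchanged: an AGSP with shrinking factor $\delta$ and entanglement rank $2D$ satisfying the AKLV viability condition produces, for every $\Delta > 0$, a bounded-rank state $|\psi_0\>$ with $|\<\psi_0|\Omega\>| \geq 1 - \Delta$ and Schmidt rank $e^{\tilde{\mathcal{O}}((\log d)^3/\epsilon)}$ across the cut. To enforce evenness I would finally replace $|\psi_0\>$ by $|\psi\> = P_s|\psi_0\> / \|P_s|\psi_0\>\|$: since $|\Omega\>$ is even we have $\<\Omega|P_s|\psi_0\> = \<\Omega|\psi_0\>$ and $\|P_s|\psi_0\>\| \leq 1$, so the overlap does not decrease, while the Schmidt rank at most doubles and $|\psi\>$ is manifestly even. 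The ring geometry is handled exactly as in the proof of Theorem \ref{hastthm}: a bipartition into two arcs has a two-point boundary, and building the AGSP around both cut points multiplies the single-cut rank bound by a bounded power, leaving the bound of the form $e^{\tilde{\mathcal{O}}((\log d)^3/\epsilon)}$ intact.

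The step requiring the most care --- the main obstacle --- is verifying that the $P_s$-modified AGSP still meets AKLV's viability condition when the spectral gap is taken to be the \emph{even} gap $\epsilon$. Concretely, one must confirm that $p$ can be chosen with shrinking factor $\delta$ and entanglement rank $D$ in the required regime using only the gap above $E_\Omega$ within the even subspace, and that the factor-of-two growth in rank from $P_s$ (together with the bounded-power growth from the ring's two cut points) does not break the recursion. Once these bookkeeping points are checked, the remainder is a line-by-line transcription of the AKLV argument.
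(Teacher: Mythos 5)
Your proposal is correct in outline, and for difficulties (a) and (c) it matches the paper's own strategy: the paper likewise reduces to Lemma 6.3 of \cite{AKLV}, observing that the AGSP is Ising symmetric whenever $H$ is, so the whole argument runs inside the even subspace where the gap is $\epsilon$; your explicit insertion of $P_s = \frac{1}{2}(1+S)$ (operator Schmidt rank $2$ across any cut) and the final replacement $|\psi\> \rightarrow P_s|\psi_0\>/\| P_s |\psi_0\> \|$ is a more concrete bookkeeping of the same idea. Where you genuinely diverge is the ring geometry: the paper never builds a two-cut AGSP. It instead ``squashes'' the ring into an open chain by clustering pairs of spins $i, L-i$ into superspins of dimension $d^2$, so that a two-arc bipartition of the ring becomes a single-cut bipartition of an open chain and Lemma 6.3 applies as a black box (the factor $(\log d^2)^3 = 8(\log d)^3$ is absorbed into $\tilde{\mathcal{O}}$). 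The squashing trick avoids re-opening the AKLV recursion at all; your two-cut route also works but forces you to redo the bootstrap with rank $D^2$ and to re-verify the viability condition, which is exactly the bookkeeping you flag as the main obstacle. One caveat in your write-up: the AKLV AGSP is not a polynomial of $H$ itself but of a spectrally truncated modification of $H$, so the claim ``$K = p(H)$ commutes with $S$'' needs the extra observation that the truncation can be carried out symmetrically (symmetrize each local term $H_{i,i+1}$, and note that any spectral truncation of an $S$-symmetric operator remains $S$-symmetric); without this, $KP_s$ need not preserve parity and your shrinking bound for $\tilde{K}$ would fail.
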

\begin{proof}
There are only two small differences between Theorem \ref{schmidtthm} and the original result of Arad, Kitaev, Landau, and Vazirani (Lemma 6.3 from Ref. \cite{AKLV}): first, Theorem \ref{schmidtthm} applies to Ising symmetric Hamiltonians that have a gap within the \emph{even} subspace, while Lemma 6.3 from Ref. \cite{AKLV} applies to generic Hamiltonians with a gap within the full Hilbert space. Second, Theorem \ref{schmidtthm} applies to spin chains in a ring geometry while Lemma 6.3 from Ref. \cite{AKLV} applies to spin chains with open boundary conditions. The first modification does not affect the proof in any way: the approximate ground state projector (AGSP) constructed in Ref. \cite{AKLV} is guaranteed to be Ising symmetric as long as the Hamiltonian is, so the same arguments go through in the Ising symmetric case as in Lemma 6.3 in Ref. \cite{AKLV}. The second modification can also be accommodated trivially since any spin chain in a ring geometry can be rewritten as a spin chain with open boundary conditions, by ``squashing'' the ring --- that is, clustering pairs of spins $i, L-i$ into a single superspin with dimension $d^2$.  
\end{proof}

We now proceed with the proof of Lemma \ref{weakimpstrong}:
\begin{replemma}{weakimpstrong}
For every $\delta > 0$, there exists a length $\lambda = \tilde{\mathcal{O}}(\frac{(\log d)^3}{\epsilon^2})$ such that if $|\Omega\>$ is $\delta$ weakly-ordered on $I_1, I_2$ then it is $\delta/2$ strongly-ordered on $I_1, I_2$ for all $I_1, I_2$ that are separated by a distance of at least $\lambda$.  Likewise, if $|\Omega\>$ is $\delta$ weakly-disordered on $I_1, I_2$ with even (odd) parity then it is $\delta/2$ strongly-disordered on $I_1, I_2$ with even (odd) parity for all $I_1, I_2$ separated by at least $\lambda$.   
\end{replemma}
\begin{proof}
We begin with the first part of the Lemma. Suppose that $|\Omega\>$ is $\delta$ weakly ordered on $I_1, I_2$. We wish to show that it is $\delta/2$ strongly ordered on $I_1, I_2$ as long as $I_1, I_2$ are separated by a sufficiently large distance. The first step is to invoke Theorem \ref{schmidtthm} to construct a
state $|\psi\>$ that has a Schmidt rank $s = e^{\tilde{\mathcal{O}}([\log d]^3/\epsilon)}$ for the bipartition $I_1 \cup I_1^c$ and that has an overlap $| \<\psi|\Omega\> | \geq  1-\Delta$. For the moment, we leave $\Delta$ undetermined --- we will choose a specific value later. 

By definition, we can Schmidt decompose $|\psi\>$ using at most $s$ Schmidt states: 
\begin{align*}
|\psi\> = \sum_{\alpha=1}^{s} \lambda^\alpha |\alpha\> \otimes |\phi^\alpha\>
\end{align*}
Here $\{|\alpha\>, \alpha = 1,...,d^{|I_1|}\}$ are the Schmidt basis states corresponding to the region $I_1$, and $|\phi^\alpha\>$ are the Schmidt states in region $I_1^c$, ordered in such a way that the Schmidt coefficients $\lambda^\alpha$ vanish for $\alpha > s$.

Now since $|\Omega\>$ is $\delta$ weakly ordered on $I_1, I_2$, there exists an operator $A$ supported on $I_1 \cup I_2$ that is odd under $S_{I_1}$ and $S_{I_2}$ and satisfies $\<\Omega| A |\Omega\> \geq \delta$. Like any operator acting on $I_1 \cup I_2$ we can decompose $A$ into a sum
\begin{align}
A = \sum_{\alpha, \beta = 1}^{d^{|I_1|}} O^{\alpha \beta}_1 \otimes O^{\alpha \beta}_2
\label{Aexp}
\end{align}
where $O^{\alpha \beta}_1 \equiv |\alpha\>\<\beta|$ and where the $O^{\alpha \beta}_2$ are some (undetermined) operators acting on $I_2$. Since $\|A\| \leq 1$, we must have $\|O^{\alpha \beta}_2\| \leq 1$. Also, since $A$ is odd under $S_{I_1}$ and $S_{I_2}$, we know that $O^{\alpha \beta}_1, O^{\alpha \beta}_2$ are odd under the symmetry for all the nonvanishing terms in (\ref{Aexp}).
 
Now consider the expectation value $\<\psi|A|\psi\>$. It is clear from the expression for $|\psi\>$ that the only terms in (\ref{Aexp}) that contribute are those with $\alpha, \beta \leq s$, so that
\begin{align*}
\<\psi|A|\psi\> = \<\psi| A_{\text{trun}}|\psi\>, \quad \quad A_{\text{trun}} \equiv \sum_{\alpha, \beta =1}^{s} O^{\alpha \beta}_1 \otimes O^{\alpha \beta}_2
\end{align*}
At the same time, the formula for the trace distance between pure states implies that 
\begin{align*}
|\<\Omega|A |\Omega\> - \<\psi|A |\psi\> | &\leq 2\sqrt{1-| \<\psi|\Omega\> |^2} \leq  2\sqrt{2\Delta} \nonumber \\
|\<\Omega|A_{\text{trun}}|\Omega\> - \<\psi|A_{\text{trun}}|\psi\> | &\leq 2\sqrt{1-| \<\psi|\Omega\> |^2} \leq  2\sqrt{2\Delta}
\end{align*}
(In the second equation we use $\|A_{\text{trun}}\| \leq 1$ which follows from the fact that $A_{\text{trun}} = PAP$ where $P = \sum_{\alpha=1}^s |\alpha\>\<\alpha|$). 
Combining these three equations with the fact that $\<\Omega|A|\Omega\> \geq \delta$, implies the lower bound
\begin{align}
|\<\Omega|A_{\text{trun}}|\Omega\>| \geq \delta - 4\sqrt{2\Delta}
\label{Atrunineq}
\end{align}
Next we invoke Proposition \ref{factor}. In particular, we use Eq. (\ref{factor1}) to deduce that
\begin{align}
|\<\Omega| A_{\text{trun}} |\Omega\> - \text{Tr}(A_{\text{trun}} [\rho_1 \otimes \rho_2])| &\leq s^2 h(\ell)
\label{Atrunfactorineq}
\end{align}
since $A_{\text{trun}}$ is a sum of $s^2$ terms, each of the form $O_1 O_2$.

To proceed further, consider the problem of maximizing the quantity $|\text{Tr}(B [\rho_1 \otimes \rho_2])|$ subject to the constraint that $B$ is supported on $I_1 \cup I_2$, is odd under $S_{I_1}$ and $S_{I_2}$, and has norm at most $1$. Given that $\rho_1 \otimes \rho_2$ is a tensor product of two operators supported on $I_1, I_2$ it is easy to see that the maximal choice of $B$ is a product of the form $B_* = O_{1*} \otimes O_{2*}$ where $O_{1*}, O_{2*}$ are supported on $I_1, I_2$, are odd under the symmetry, and have norm of at most $1$.\footnote{In fact, one can derive an explicit formula for $O_{1*}, O_{2*}$: the operator $O_{i*} = U_i^\dagger$ where $U_i$ is the unitary operator appearing in the polar decomposition of $\rho_i^- = \frac{1}{2}(\rho_i - S\rho_i S)$.} In particular, this means that  
\begin{align}
|\text{Tr}(A_{\text{trun}} [\rho_1 \otimes \rho_2])| \leq  |\text{Tr}(O_{1*}\rho_1) \text{Tr}(O_{2*}\rho_2)|
\label{AtrunOineq}
\end{align}
At the same time, we can apply Eq. (\ref{factor1}) again to derive the inequality
\begin{align}
|\<\Omega| O_{1*} O_{2*} |\Omega\> - \text{Tr}(O_{1*} \rho_1) \text{Tr}(O_{2*} \rho_2)| &\leq h(\ell)
\label{o1o2ineq}
\end{align}

We now combine the four inequalities (\ref{Atrunineq}), (\ref{Atrunfactorineq}), (\ref{AtrunOineq}), (\ref{o1o2ineq}) to conclude that
\begin{align*}
|\<\Omega| O_{1*} O_{2*} |\Omega\>| \geq \delta - 4\sqrt{2\Delta} - (s^2 + 1) h(\ell)
\end{align*}
The last step is to choose $\Delta$ so that $4\sqrt{2\Delta} = \delta/4$. The inequality then becomes
\begin{align*}
|\<\Omega| O_{1*} O_{2*} |\Omega\>| \geq 3\delta/4 - (s^2 + 1)  h(\ell)
\end{align*}
Finally, comparing the expressions for $h(\ell)$ and $s$, we see that there exists $\lambda = \tilde{\mathcal{O}}(\frac{(\log d)^3}{\epsilon^2})$ such that if $\ell \geq \lambda$ then $(s^2 + 1)  h(\ell) \leq \delta/4$ and hence $|\<\Omega| O_{1*} O_{2*} |\Omega\>| \geq \delta/2$. This proves the first part of the Lemma.

To prove the second part, we use the same logic. Suppose that $|\Omega\>$ is $\delta$ weakly disordered with even parity. Then there exists an operator $A$ supported on $I_1 \cup I_2$ that is even under $S_{I_1}$ and satisfies $\<\Omega| A S_{J}|\Omega\> \geq \delta$. As above, we can approximate $|\Omega\>$ by a state $|\psi\>$ with overlap $1-\Delta$ and Schmidt rank $s$ for the bipartition $I_1 \cup I_1^c$. Decomposing $|\psi\>$ and $A$ in the same way as above, we have
\begin{align*}
\<\psi|A S_{J}|\psi\> = \<\psi| A_{\text{trun}} S_{J}|\psi\>, \quad \quad A_{\text{trun}} \equiv \sum_{\alpha, \beta =1}^{s} O^{\alpha \beta}_1 \otimes O^{\alpha \beta}_2  
\end{align*}
where $O^{\alpha \beta}_1, O^{\alpha \beta}_2$ are both even under $S$ and have norm at most $1$. Following the same logic as before gives the lower bound
\begin{align*}
|\<\Omega|A_{\text{trun}} S_{J}|\Omega\>| \geq \delta - 4\sqrt{2\Delta}
\end{align*}
Next we apply the above proposition to derive the inequality
\begin{align*}
|\<\Omega| A_{\text{trun}} S_{J} |\Omega\> - \text{Tr}(A_{\text{trun}} [\rho^e_1 \otimes \rho^e_2])| &\leq s^2 h(\ell) 
\end{align*} 
In the same way as before we have
\begin{align*}
|\text{Tr}(A_{\text{trun}} [\rho^e_1 \otimes \rho^e_2])| &\leq  |\text{Tr}(O_{1*}\rho^e_1) \text{Tr}(O_{2*}\rho^e_2)| 
\end{align*}
for some $O_{1*}, O_{2*}$ that are supported on $I_1, I_2$, are even under the symmetry, and have norm of at most $1$. Also, 
\begin{align*}
|\<\Omega| O_{1*} O_{2*} S_{J}|\Omega\> - \text{Tr}(O_{1*} \rho^e_1) \text{Tr}(O_{2*} \rho^e_2)| &\leq h(\ell) 
\end{align*}
Putting this all together gives 
\begin{align*}
|\<\Omega| O_{1*} O_{2*} S_{J}|\Omega\>| \geq \delta - 4\sqrt{2\Delta} - (s^2 + 1) h(\ell)
\end{align*}
Choosing $\Delta, \lambda$ as before gives the desired inequality $|\<\Omega| O_{1*} O_{2*} S_{J}|\Omega\>| \geq \delta/2$ for $\ell \geq \lambda$. Exactly the same argument works when $|\Omega\>$ is $\delta$ weakly disordered with \emph{odd} parity.
\end{proof}

\section{Proofs of additional constraints}
\label{sec-two-more-theorems}
In this section, we present the proofs of the two additional constraints on order and disorder parameters: Theorems \ref{mutconst} and \ref{exchconst}.

\begin{reptheorem}{mutconst}
If $|\Omega\>$ has a $(\delta, \ell)$ order parameter defined at two points $i_1, i_3$ and a $(\delta, \ell)$ disorder parameter defined at two points $i_2, i_4$ with $i_1 < i_2 < i_3 < i_4$, then
\begin{align*}
\min_{k,l}[\mathrm{dist}(i_k,i_l)] \leq 2\ell + \tilde{\mathcal{O}}\left(\frac{\log \delta^{-1}}{\epsilon} \right)  
\end{align*}
\end{reptheorem}
\begin{proof}
Let $O_1, O_3$ and $O_{2}, O_{4}$ be the postulated order and disorder parameters, respectively. Without loss of generality, we can assume that
\begin{align*} 
\<\Omega| O_{1}^\dagger O_{3} |\Omega\> = \<\Omega| O_{2}^\dagger O_{4} \prod_{k=i_2+1}^{i_4} S_k|\Omega\> = \delta, 
\end{align*}
since we can always make these equalities hold by replacing $O_i \rightarrow \kappa_i O_i$ for appropriate $|\kappa_i| \leq 1$. Let $\ell_{\text{min}} = \min_{k,l}[\mathrm{dist}(i_k,i_l)] - 2\ell$ and define
\begin{align*}
U &= P_1 P_3 O_1^\dagger O_3, \\
V &= P_2 P_4 O_2^\dagger O_4 \prod_{k=i_2+1}^{i_4} S_k,
\end{align*}
where
\begin{align}
P_1 \equiv P_{\tilde{K}_1}(\ell_{\text{min}}), \quad P_2 \equiv P_{\tilde{K}_2}(\ell_{\text{min}}), \quad P_3 \equiv P_{\tilde{K}_3}(\ell_{\text{min}}), \quad P_4 \equiv P_{\tilde{K}_4}(\ell_{\text{min}})
\label{Pidef}
\end{align}
and
where $K_j = [i_j - \ell, i_j + \ell]$ and $\tilde{K}_j = \{i : \text{dist}(i,K_j) < \ell_{\text{min}}/2\}$. Then according to Proposition \ref{tildeO},
\begin{align*}
\|U|\Omega\> - \delta|\Omega\>\| &\leq \mathcal{O}(\sqrt{g(\ell_{\text{min}})}) \nonumber \\
\|V|\Omega\> - \delta|\Omega\>\| &\leq \mathcal{O}(\sqrt{g(\ell_{\text{min}})})
\end{align*}
It follows that
\begin{align}
\| UV |\Omega\> - \delta^2 |\Omega\> \| \leq \mathcal{O}(\sqrt{g(\ell_{\text{min}})})
\label{uv1}
\end{align}
Likewise,
\begin{align}
\| VU |\Omega\> - \delta^2 |\Omega\> \| \leq \mathcal{O}(\sqrt{g(\ell_{\text{min}})})
\label{uv2}
\end{align}
At the same time, $U$ and $V$ \emph{anti-commute}:
\begin{align}
UV = -VU
\label{uv3}
\end{align}
Indeed, this follows immediately from the fact that $P_3 O_3$ is odd under the symmetry $S = \prod_k S_k$. Combining (\ref{uv1})-(\ref{uv3}), we deduce that
\begin{align*}
\delta^2 \leq \mathcal{O}(\sqrt{g(\ell_{\text{min}})})
\end{align*}
Plugging in the expression for $g$ gives the desired inequality: $\ell_{\text{min}} \leq \tilde{\mathcal{O}}(\log \delta^{-1}/\epsilon)$.
\end{proof}

\begin{reptheorem}{exchconst}
If $|\Omega\>$ has a $(\delta, \ell)$ disorder parameter that is odd under $S$, defined at $i_1, i_2, i_3, i_4$, then
\begin{align*}
\min_{k,l}[\mathrm{dist}(i_k,i_l)] \leq 2\ell + \tilde{\mathcal{O}}\left(\frac{\log \delta^{-1}}{\epsilon} \right)  
\end{align*}
\end{reptheorem}

\begin{proof}
Let $O_1, O_2, O_3, O_4$ be the postulated disorder parameters. As before, we can assume without loss of generality that
\begin{align*} 
\<\Omega| O_{1}^\dagger O_{2} \prod_{k=i_1+1}^{i_2} S_k | \Omega\> = \delta \nonumber \\
\<\Omega| O_{2}^\dagger O_{3} \prod_{k=i_2+1}^{i_3} S_k | \Omega\> = \delta \nonumber \\
\<\Omega| O_{2}^\dagger O_{4} \prod_{k=i_2+1}^{i_4} S_k | \Omega\> = \delta
\end{align*}
Let $\ell_{\text{min}} = \min_{k,l}[\mathrm{dist}(i_k,i_l)] - 2\ell$ and define 
\begin{align*}
U &= P_{2} P_{3} O_{2}^\dagger O_{3} \prod_{k=i_2+1}^{i_3} S_k \nonumber \\
V &= P_{1} P_{2} O_1^\dagger O_2 \prod_{k=i_1+1}^{i_2} S_k \nonumber \\
W &= P_{2} P_{4} O_{2}^\dagger {O}_{4} \prod_{k=i_2+1}^{i_4} S_k
\end{align*}
where $P_i$ are defined in the same way as in (\ref{Pidef}). Then according to Proposition \ref{tildeO},
\begin{align*}
\|U|\Omega\> - \delta|\Omega\>\| &\leq \mathcal{O}(\sqrt{g(\ell_{\text{min}})}) \nonumber \\
\|V|\Omega\> - \delta|\Omega\>\| &\leq \mathcal{O}(\sqrt{g(\ell_{\text{min}})}) \nonumber \\
\|W|\Omega\> - \delta|\Omega\>\| &\leq \mathcal{O}(\sqrt{g(\ell_{\text{min}})})
\end{align*}
It follows that
\begin{align}
\| UVW |\Omega\> - \delta^3 |\Omega\> \| \leq \mathcal{O}(\sqrt{g(\ell_{\text{min}})})
\label{uvw1}
\end{align}
Likewise,
\begin{align}
\| WVU |\Omega\> - \delta^3 |\Omega\> \| \leq \mathcal{O}(\sqrt{g(\ell_{\text{min}})})
\label{uvw2}
\end{align}
At the same time, it is not hard to see that $U,V, W$ obey the ``fermionic'' commutation algebra\cite{Levinwenferm}
\begin{align}
UVW = -WVU 
\label{uvw3}
\end{align}
To see this, note that
\begin{align*}
UVW &= (P_{2} P_{3} O_{2}^\dagger O_{3} \prod_{k=i_2+1}^{i_3} S_k) (P_{1} P_{2} O_1^\dagger O_2 \prod_{k=i_1+1}^{i_2} S_k) (P_{2} P_{4} O_{2}^\dagger {O}_{4} \prod_{k=i_2+1}^{i_4} S_k) \nonumber \\
&= (P_1 O_1^\dagger P_3 O_3 P_4 O_4) (P_{2} O_{2}^\dagger \prod_{k=i_2+1}^{i_3} S_k) (P_{2} O_2 \prod_{k=i_1+1}^{i_2} S_k) (P_{2} O_{2}^\dagger \prod_{k=i_2+1}^{i_4} S_k) \nonumber \\
&= (P_1 O_1^\dagger P_3 O_3 P_4 O_4) (P_{2} O_{2}^\dagger \prod_{k=i_2+1}^{i_4} S_k) (P_{2} O_2 \prod_{k=i_1+1}^{i_2} S_k) (P_{2} O_{2}^\dagger 
\prod_{k=i_2+1}^{i_3} S_k) \nonumber \\
&= - (P_{2} P_{4} O_{2}^\dagger {O}_{4} \prod_{k=i_2+1}^{i_4} S_k) (P_{1} P_{2} O_1^\dagger O_2 \prod_{k=i_1+1}^{i_2} S_k) (P_{2} P_{3} O_{2}^\dagger O_{3} \prod_{k=i_2+1}^{i_3} S_k) \nonumber \\
&= -WVU
\end{align*}
where the minus sign comes from the fact that $P_3 O_3$ is odd under $S$. Combining (\ref{uvw1})-(\ref{uvw3}), we deduce that
\begin{align*}
\delta^3 \leq \mathcal{O}(\sqrt{g(\ell_{\text{min}})})
\end{align*}
Plugging in the expression for $g$ gives the desired inequality: $\ell_{\text{min}} \leq \tilde{\mathcal{O}}(\log(\delta^{-1})/\epsilon)$.
\end{proof}

\section{Discussion}
This work could potentially be extended in several directions. The simplest extension would be to consider spin chains with more general symmetry groups. In particular, we expect that all of our results (Theorems \ref{mainthm1}-\ref{exchconst}) can be easily generalized to arbitrary finite, Abelian symmetry groups. Non-abelian symmetries are more challenging and would be an interesting direction for future work.

Another natural extension would be to consider spin chains whose symmetry transformations are not \emph{onsite}, i.e. not a product of single spin unitaries. (For an example, see the Ising symmetry transformation given in Ref. \cite{ChenLiuWen}). Results about such spin chains would be relevant to the edge physics of two dimensional symmetry protected topological phases.

Finally, one could consider higher dimensional spin systems. We do not know whether Theorem \ref{mainthm1}, which guarantees the existence of order or disorder parameters in gapped, translationally invariant spin chains, has an analog in higher dimensions. However, Theorem \ref{mainthm2}, which guarantees a similar result in the \emph{non-translationally} invariant case, is unlikely to have such a generalization. Indeed, the phenomenon of weak symmetry breaking\cite{WangLevinweak} shows that there exist two dimensional gapped systems with broken symmetries whose order parameters are non-local. These systems do not support either a disorder parameter or a local order parameter anywhere in the weak symmetry breaking region so they provide counterexamples to higher dimensional analogs of Theorem \ref{mainthm2}. 

\section{Acknowledgments}
I would like to thank Maxim Zelenko for pointing out a gap in the proof of Proposition~\ref{tildeO} in a previous draft of this paper and the need for a slightly stronger version of Theorem~\ref{hastthm}. This work was supported in part by the NSF under grant No. DMR-1254741.


\begin{thebibliography}{100}

\bibitem{Fradkin17}
E.~Fradkin, \emph{Disorder operators and their descendants}, J.~Stat.~Phys.~\textbf{167}, p.~427 (2017).

\bibitem{NorbertSPT}
N.~Schuch, D.~Perez-Garcia, and I.~Cirac, \emph{Classifying quantum phases
  using matrix product states and projected entangled pair states}, Phys.~Rev.~B~\textbf{84}, 165139 (2011).

\bibitem{ChenGuWen}
X.~Chen, Z.-C. Gu, and X.-G. Wen, \emph{Complete classification of
  one-dimensional gapped quantum phases in interacting spin systems}, Phys.~Rev.~B~\textbf{84}, 235128 (2011).

\bibitem{HastingsWen}
M.~B. Hastings and X.-G. Wen, \emph{Quasiadiabatic continuation of quantum states: The stability of topological ground-state degeneracy and emergent gauge invariance}, Phys.~Rev.~B~\textbf{72}, 045141 (2005).

\bibitem{Fuchs}
C.~Fuchs and J.~van~de Graaf, \emph{Cryptographic distinguishability measures
  for quantum-mechanical states},  IEEE~Trans.~Inform.~Theory~\textbf{45}, p. 1216 (1999).

\bibitem{Hastings_area_law}
M.~B. Hastings, \emph{An area law for one-dimensional quantum systems}, J.~Stat.~Mech.~Theory~Exp.~\textbf{2007}, p.~8024 (2007).

\bibitem{Hamzaetal}
E.~Hamza, S.~Michalakis, B.~Nachtergaele, and R.~Sims, \emph{Approximating the
  ground state of gapped quantum spin systems}, J.~Math.~Phys.~\textbf{50}, 095213 (2009).

\bibitem{AKLV}
I.~Arad, A.~Kitaev, Z.~Landau, and U.~Vazirani, \emph{An area law and
  sub-exponential algorithm for 1D systems}, arXiv:1301.1162.

\bibitem{thooft78}
G.~'t~Hooft, \emph{On the phase transition towards permanent quark
  confinement}, Nucl.~Phys.~B~\textbf{138}, p.~1 (1978).

\bibitem{levin2013protected}
M.~Levin, \emph{Protected edge modes without symmetry}, Phys.~Rev.~X~\textbf{3}, 021009 (2013).

\bibitem{kitaevtc}
A.~Yu. Kitaev, \emph{Fault-tolerant quantum computation by anyons}, Ann.~Phys.~\textbf{303}, p. 2 (2003).

\bibitem{BravyiKitaev}
S.~B. Bravyi and A.~Y. Kitaev, \emph{Quantum codes on a lattice with boundary},
  quant-ph/9811052 (1998).

\bibitem{Levinprep}
M.~Levin, (in preparation).

\bibitem{Levinwenferm}
M.~Levin and X.-G. Wen, \emph{Fermions, strings, and gauge fields in lattice
  spin models}, Phys.~Rev.~B~\textbf{67}, 245316 (2003).

\bibitem{uhlmann}
A.~Uhlmann, \emph{The transition probability in the state space of a
  *-algebra}, Rep.~Math.~Phys.~\textbf{9}, p.~273 (1976).

\bibitem{Hastings_private}
M.~B. Hastings, (private communication).

\bibitem{Hastings_LSM}
M.~B. Hastings, \emph{Lieb-Schultz-Mattis in higher dimensions}, Phys.~Rev.~B~\textbf{69}, 104431 (2004).

\bibitem{ChenLiuWen}
X.~Chen, Z.-X. Liu, X.-G. Wen, \emph{Two-dimensional symmetry-protected topological orders and their protected gapless edge excitations}, Phys.~Rev.~B~\textbf{84}, 235141 (2011).

\bibitem{WangLevinweak}
C.~Wang and M.~Levin, \emph{Weak symmetry breaking in two-dimensional topological insulators}, Phys.~Rev.~B~\textbf{88}, 245136 (2013).

\end{thebibliography}
\end{document}